\documentclass[a4paper,12pt]{article}
\usepackage{verbatim}
\usepackage[utf8]{inputenc}
\usepackage{amssymb, amsmath}
\usepackage{amsthm}
\usepackage{amsbsy}
\usepackage{amsfonts}
\usepackage[colorlinks]{hyperref}
\usepackage{mathtools}
\usepackage{bookmark}
\usepackage{tikz}
\usepackage{xcolor}
\usepackage{float}
\usepackage[shortlabels]{enumitem}

\addtolength{\textwidth}{1in}
\addtolength{\oddsidemargin}{-0.5in}
\addtolength{\evensidemargin}{-0.5in}
\addtolength{\textheight}{0.8in}
\addtolength{\voffset}{-0.8in}

%%%%%%%%%%%%%%%%%%%%%%%%%%%%
%%   Zusaetzliche Pakete  %%
%%%%%%%%%%%%%%%%%%%%%%%%%%%%
\usepackage{dsfont}

% For sorting the citation in the text numerically
\usepackage[sort,nocompress]{cite}

% For contour integral notation
\usepackage{array,esint}

%%%%SETS

\newcommand{\R}{{\mathord{\mathbb R}}}
\newcommand{\Z}{{\mathord{\mathbb Z}}}
\newcommand{\N}{{\mathord{\mathbb N}}}
\newcommand{\C}{{\mathord{\mathbb C}}}

%%%%MATHCAL

\newcommand{\HH}{\mathcal{H}}
\newcommand{\hh}{\mathfrak{h}}
\newcommand{\FF}{\mathcal{F}}

\newcommand{\UU}{\mathcal{U}}

\newcommand{\RR}{\mathcal{R}}

\newcommand{\TT}{\mathcal{T}}
%%%%MATHROMAN

\newcommand{\ran}{{\rm Ran}}

%%%%SPECIAL

%%%ONE

\def\one{\mathds{1}}

%%%MATHCOMMANDS

\def\inf{{\rm inf}\,}

%%%MATHTHEOREM

\theoremstyle{plain}
\newtheorem{lemma}{Lemma}[section]
\newtheorem{theorem}[lemma]{Theorem}
\newtheorem{proposition}[lemma]{Proposition}

\newtheorem{corollary}[lemma]{Corollary}

\newtheorem{definition}[lemma]{Definition}

\newtheorem{remark}[lemma]{Remark}

\newtheorem{hyp}{\bf Hypothesis}
\newtheorem{hypr}{\bf Hypothesis}
\newtheorem{prope}{\bf Property}
\newtheorem{example}[lemma]{Example}
\newtheorem*{remark*}{Remark}

\numberwithin{equation}{section}

%%% Notation for Integral kernels %%%

\newcommand{\degendim}{\textrm{d}}

\newcommand{\nn}{\nonumber}

\numberwithin{equation}{section}
\begin{document}

\title{Resonances  for  Atoms in  Dipole approximation of non-relativistic  QED}
\author{\vspace{5pt} David Hasler$^1$\footnote{
E-mail: david.hasler@uni-jena.de} and Markus Lange$^2$\footnote{E-mail: markus.lange@dlr.de}
\\
\vspace{-4pt} \small{$1.$ Department of Mathematics,
Friedrich Schiller  University  Jena} \\ \small{Jena, Germany }\\
\vspace{-4pt}
\small{$2.$ German Aerospace Center (DLR), Institute for AI-Safety and Security} \\
\small{Sankt Augustin \& Ulm, Germany}\\
}

\date{\today}
{
\maketitle

\begin{abstract}
We consider atoms or molecules coupled to the quantized electromagnetic radiation field in a dipole approximation. 
We show the existence of ground states and  resonance states in situations where the eigenvalues are degenerate and protected by a symmetry group.
We show that  ground states and  resonance states  as well as their  energies  depend analytically on the coupling constant.
Our results are an application of the result  in \cite{HasLan23-2}, which in turn is obtained using operator theoretic renormalization.
\end{abstract}

\tableofcontents

\section{Introduction}
\label{sec:SymSpinBoson}
Understanding the interaction between atoms and the electromagnetic field was one of the motivations
already present at the beginning of quantum mechanics. At low energies these
interactions can be understood in term of quantum mechanical models, where the electrons
of the atom are treated as non-relativistic quantum mechanical particles \cite{CohDiuLal86v1,CohDiuLal86v2}. Thereby the electromagnetic
field is quantized and described by a field of bosons with massless relativistic dispersion relation (called photons).
The interaction suppresses high energies by means of a so called ultraviolet
cutoff.   Such models are often referred to as non-relativistic qed and are defined in terms of a self-adjoint  operator, called Hamiltonian.
Whereas these models provide a good
description at low energies, they are not relativistically invariant. In contrast to relativistic quantum field theory which is plagued by infinities,
models of non-relativistic qed are mathematically well defined.

The Hamiltonian of models of non-relativistic qed is bounded from below \cite{BacFroSig98-1}.
The question about existence of ground states, that is, in mathematical terms,  whether the infimum of
the spectrum of the Hamiltonian is an eigenvalue has attracted much attention \cite{Spo98,BacFroSig99,Ger00,GriLieLos01,LieLos03}. 
For example the existence
of a ground state is an important ingredient in the analysis of scattering theory of quantum field theories.
The existence of ground states has been proven in large generality covering  situations which are of physical interest \cite{HasHer11-1,HasHer11-2, AbdHas12, BalDecHan.2019}.
Furthermore, the dependence of these ground states on external parameters, such as the coupling constant, has attracted attention.
A question which arises for example  in the calculation of physical quantities.  It has been shown in certain
situations that ground states depend analytically on the coupling constant \cite{GriHas09}. 
However these results are limited
to situations where the atomic part of the Hamiltonian,  without any coupling to the quantized field,
has a non-degenerate ground state or where the degeneracy is lifted at second order \cite{HasLan18-1}.  
Whereas these results
cover important physical situations,  many situations  are not covered. For example, if the
degeneracy is caused by an underlying symmetry, such as rotation symmetry or time reversal symmetry.
The goal of this paper is to address  exactly such situations, which so far, have not been covered.

Next to ground states, spectral properties at higher energies of Hamiltonians of non-relativistic qed  are of interest. 
In particular, they are important for the mathematical description of the experimentally observed  processes  of  emission and absorption of photons on atoms.
Starting from the works \cite{BacFroSig98-1, BacFroSig98-2, BacFroSig99} the following picture has emerged. Generically, the eigenvalues of the atomic Hamiltonian  dissolve into the continuum upon coupling to the quantized electromagnetic field.
This effect  is caused by  the fact that  photons are  massless and therefore can have  arbitrarily small energies.
One way of  analyzing spectral properties at higher energies  is
to study analytic  extensions of dilation of the Hamiltonian. Eigenvalues of these analytic extensions
can be viewed as remnants  of the eigenvalues of the atomic Hamiltonian. Thereby the eigenvalues
of the analytic extensions emanate from the eigenvalues of the atomic Hamiltonian. We
refer to the corresponding  eigenvectors as resonance states.
Resonance states have been investigated intensively for such models. In particular their existence
has been shown using operator theoretic renormalization or  iterated perturbation theory \cite{BacFroSig98-2, BacBalPiz17}. However, these results are limited to
situations where one studies perturbations of  non-degenerate  eigenvalues of the atomic Hamiltonian. In many physical situations these assumptions
are unnatural,  as already noted  above regarding  ground states.

In a recent paper \cite{HasLan23-2}  by the present authors  a  general result about  existence of
 ground states as well as resonance states was  established also for   degenerate situations, provided
 the degeneracy is caused by a symmetry and  is protected by the  interaction.  Moreover, in that
paper analytic dependence as a function of  the coupling constant for ground states and resonance states was
established. We note that analyticity of resonances as a function of the coupling constant was also studied in \cite{BalFauFroSch15, BalDecHan.2019}.

In the present paper, we  apply the abstract result  in \cite{HasLan23-2} to  concrete models.
We consider  atoms as well as  more generally ions and molecules.
 Thereby, we assume that  the  nuclei  are   fixed and that the coupling to the quantized
electromagnetic field is treated in a dipole approximation \cite{GriHas09}.
First we assume that the ground state space of the atomic Hamiltonian is one dimensional
or is an irreducible subspace of a symmetry group commuting with the Hamiltonian.
Then the ground state energy of the Hamiltonian has a basis of eigenvectors, such that both, ground state energy
and eigenvectors
depend analytically
on the coupling constant.
We show an analogous result for resonance states.  We consider an eigenvalue of the atomic Hamiltonian,
whose eigenspace is one dimensional or is an irreducible subspace of a symmetry group of the Hamiltonian.  Then there exists an analytic extension of the dilated Hamiltonian,
such that this extension has an eigenvalue emanating from the eigenvalue of the atomic Hamiltonian.
This eigenvalue has a basis of eigenvectors. The eigenvalue as well as the eigenvectors are depending analytically on the coupling constant and on the parameter of  dilation.
We note that the dipole approximation is made to simplify the analysis. We believe that  the algebraic
aspects of the proof do not change if one considered the standard model of non-relativistic qed. However,
 for the analytic part  a  much more involved  renormalization group analysis \cite{BCFS, Sig09 }   would be required.

The paper is organized as follows. In Section \ref{sec:model} we introduce the model and state hypothesis needed
for the main results.
 In Section \ref{sec:nondeg} we state the results in case of non-degenerate eigenvalues. In particular, the result concerning the analytic dependence of resonance states on the coupling parameter is of interest for its own and new, even in the non-degenerate case.
Next we consider the main results of this paper, which concerns degenerate situations. In Section \ref{sec:rotsymm} we
consider  rotational invariant systems, and state the main
results in situations, were the degeneracy is due to rotation invariance.
 In Section  \ref{sec:timereversal} we
consider  systems where the degeneracy is  due to time reversal symmetry and state the main results
in that case. In Section \ref{sec:examples}  we study    examples for systems with rotation invariance
and time reversal symmetry and  apply   the main results  to these concrete  examples as corollaries.
In Section \ref{sec:proof}  we prove the main results stated in Sections  \ref{sec:rotsymm} and \ref{sec:timereversal}.
The proof is based on verifying the assumptions of the general result in \cite{HasLan23-2}. For
the convenience of the reader and to fix the notation, we recall the main results of  \cite{HasLan23-2} in Appendix
\ref{app:ResultsfromOtherPaper} and notions concerning symmetries
and anti-unitary maps from  \cite{HasLan23-2} in  Appendix \ref{app:symmetries}.

\section{The Model} \label{sec:model}
\subsection{Atomic Hilbertspace}
The atomic Hilbert space, which describes the configuration of $N$ indistinguishable particles obeying Fermi-Dirac statistics with spin $s \in \frac{1}{2}\N_0$, is given by
$$
\HH_{{\rm at}} := \{ \psi \in L^2((\R^{3} \times \Z_{2s+1})^N) : \psi(\underline{x}_{\pi(1)},...,\underline{x}_{\pi(N)}) = {{\rm sgn}(\pi)} \psi(\underline{x}_1,...,\underline{x}_N) , \pi \in \mathfrak{S}_N \} ,
$$
where $\mathfrak{S}_N$ denotes the group of permutations of $N$ elements, ${\rm sgn}$ denotes the signum of the permutation,
and $\underline{x}_j = (x_j,\sigma_j) \in \R^3 \times \Z_{2s+1}$ denotes the coordinates of the $j$-th particle with spin component $\sigma_j$.
Hence we can write the configuration space of $N$ electrons, i.e., the electron Hilbert space, as
\begin{align} \label{defofHel}
\HH_{\rm el} = P_{a}   L^2(\R^3 \times \Z_{2s+1})^{\otimes N}
\end{align}
where $P_{a}$ stands for the projection onto the antisymmetric subspace.
Let $\mathcal{D}_s$ denote the representation space of $SU(2)$ with dimension $2s + 1$. We denote by $\tau_1$, $\tau_2$ and $\tau_3$ the generators of the Lie algebra $su(2)$ in the representation $\mathcal{D}_s$.
Recall, that for the case  $\mathcal{D}_{1/2}$ these generators are simply the Pauli-Matrices and for the case  $\mathcal{D}_{0}$ they are zero.
We use the notation  $\tau = (\tau_1,\tau_2,\tau_3)$ and define
$$
(S_j)_a   =  \left( \bigotimes_{l=1}^{j-1} \one_{\mathcal{D}_s}  \right) \otimes \tau_a \otimes  \left( \bigotimes_{l=j+1}^{N} \one_{\mathcal{D}_s}   \right) , \quad a=1,2,3 .
$$
The total spin operator is then defined by
$$
S = \sum_{j=1}^N S_j  .
$$
Moreover, the operator of orbital momentum for the $j$-th particle  is defined by
$$
(L_j)_a =   \sum_{b,c=1}^3  \epsilon_{a,b,c}    x_{j,b} p_{j,c} , \quad a=1,2,3 ,
$$
where $x_{j,b}$ denotes the $b$-component of the position corresponding to the $j$-th electron and similar for the momentum component $p_{j,c}$.
Thus the  total orbital momentum is
$$
L = \sum_{j=1}^N L_j .
$$
On the electron subspace $\HH_{\rm el}$ we consider the operator
\begin{align} \label{electronicham}
H_{\rm el} \coloneq - \Delta + V_{\rm el} = - \Delta + V  + I_{\rm SB}  ,
\end{align}
where the potential $V \in L^2_{\rm loc}(\R^{3N})$ is infinitesimally bounded with respect to $-\Delta$.
The spin-orbit coupling $I_{\rm SB}$ (see for example \cite{Thaller1992} and references therein) is of the form
\begin{align} \label{spinorbit}
I_{\rm SB} := I_{\rm SB}(\nu) := \sum_{j=1}^N ( \nu_j(x_j)  \wedge p_j ) \cdot S_j \,.
\end{align}
Here $\nu_j : \R^3  \to \R^3$, $j=1,...,N$, denote functions, which are chosen such that  $I_{\rm SB}$ is infinitesimally operator bounded with respect to $-\Delta$. Note that we can always drop the spin orbit coupling by setting $\nu_j$ to be identically zero.

\begin{remark}
{\rm The spin-orbit coupling is a relativistic correction obtained from a non relativistic limit of the Dirac operator \cite{Thaller1992}.
It is of the form \eqref{spinorbit}, where $\nu_j(x_j) \sim \nabla_{x_j} \phi_j(x_j)$ and $\phi_j$ is up to a constant given in terms of the potential energy of a self-consistent field
seen by the $j$-th particle.
Typically, one chooses the field of the nucleus.
Thus if $\phi_j(x) = \frac{Q}{|x|}$ is the Coulomb potential of a point charge $Q$ at the origin, then $\nu_j(x) \simeq \frac{x}{|x|^3}$. This fits with the spin orbit coupling often seen in physics textbooks (e.g. \cite{Tha05}),
\begin{align*}
	%I_{LS} \coloneq
	\frac{1}{2m^2 c^2} S \cdot L \frac{Z e^2}{r^3}  .
\end{align*}
We note that in this case the spin-orbit coupling is not bounded with respect to the Laplacian, which can be seen by the scaling argument in Remark~\ref{rem:scaling}.
However, if $\phi_j$ is from a nucleus the charge is smeared out and there is no problem with this term. }
\end{remark}

\begin{remark}\label{rem:scaling}
{\rm  We show that the spin-orbit term with $\nu(x) = \frac{x}{|x|}|x|^{-\alpha}$  is not Laplacian bounded if $\alpha > 1$.
Let $\varphi \in C_c^\infty(\R^3)$, vanishing in a neighborhood of the origin.
Then $\varphi_\lambda(x)  = \lambda^{3/2} \varphi(\lambda x)$ is a unitary transformation
and $\| \Delta \varphi_\lambda \| = \lambda^2 \|\Delta \varphi \|$ and on the other hand
 $\| \nu(x)  p \varphi_\lambda \| = \lambda^{1+\alpha} \| \nu(x)  p \varphi \|$.
}
\end{remark}

In addition to the properties above we also want that the spectrum of $H_{\rm el}$ has the structure
\begin{align} \label{specass}
\sigma(H_{\rm el}) \setminus  [\Sigma_{\rm el}, \infty )= \{ E_{{\rm el}, j} : j=0,..., M \}  ,
\end{align}
where
\begin{align*}
\Sigma_{\rm el} := \inf \sigma_{\rm ess}(H_{\rm el}) ,\quad M \in \N_0 \cup \{\infty\},
\end{align*}
and $(E_{{\rm el},j})_{j=0,...,M}$ is strictly monotone increasing satisfying $E_{{\rm el},j} < \Sigma_{\rm el}$ for all $j\in \{0,...,M\}$, see  Figure \ref{fig:SpectrumHel}.

\begin{figure}[h!]
\begin{center}
\begin{tikzpicture}[scale=1]
 \draw[-] (-4,0) -- (0.5,0) ;
  \draw[thick, dotted] (0.6,0) -- (0.9,0) ;
  \draw[->] (1,0) -- (10,0) coordinate (x axis);
\filldraw[red]  (-1,0) circle (2pt)   ;
\draw (-1,-0.1) -- (-1,0.1) node[above]{$E_0$};
\filldraw[red]  (0.3,0) circle (2pt)   ;
\draw (0.3,-0.1) -- (0.3,0.1) ;
\filldraw[red]  (1.8,0) circle (2pt)   ;
\draw (1.8,-0.1) -- (1.8,0.1) node[above]{$E_{j-1}$} ;
\filldraw[red]  (3.8,0) circle (2pt)   ;
\draw (3.8,-0.1) -- (3.8,0.1) node[above]{$E_j$};
\filldraw[red]  (5.8,0) circle (2pt)   ;
\draw (5.8,-0.1) -- (5.8,0.1)  node[above]{$E_{j+1}$}  ;
\filldraw[red]  (6.5,0) circle (2pt)   ;
\draw (6.5,-0.1) -- (6.5,0.1)   ;
\filldraw[red]  (7.3,0) circle (2pt)   ;
\draw (7.3,-0.1) -- (7.3,0.1)   ;
\filldraw[red]  (7.7,0) circle (2pt)   ;
\draw (7.7,-0.1) -- (7.7,0.1)   ;
\filldraw[red]  (7.9,0) circle (2pt)   ;
\draw (7.9,-0.1) -- (7.9,0.1)   ;
\draw (7.95,-0.1) -- (7.95,0.1)   ;
\filldraw[red]  (7.95,0) circle (2pt)   ;
\draw (7.975,-0.1) -- (7.975,0.1)   ;
\filldraw[red]  (7.975,0) circle (2pt)   ;
\draw (8,-0.1) -- (8,0.1) node[above]{$\Sigma_{\rm el}$};
\draw[very thick,red] (8,0) -- (10,0) ;
 \end{tikzpicture}
\end{center}
 \caption{\small  Graphical depiction of the spectrum for $H_{\rm el}$ if Hypothesis \ref{hypA}(iii) is satisfied. \label{fig:SpectrumHel}}
 \end{figure}
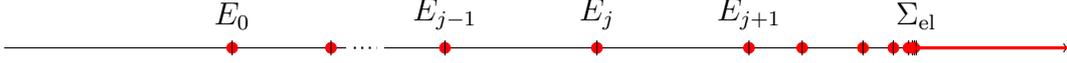

In total we therefore assume the following properties for the $N$-electron Hamiltonian $H_{\rm el}$.
\begin{hyp} \label{hypA}
The operator $H_{\rm el}$ has the following properties:
\begin{itemize}
\item[(i)] $V \in L^2_{\rm loc}(\R^{3N})$  is infinitesimally operator bounded with respect to $-\Delta$.
\item[(ii)] $I_{\rm SB}$   is infinitesimally operator bounded with respect to $-\Delta$.
\item[(iii)] $E_{\rm el} := \inf \sigma(H_{\rm el} )$ is an isolated eigenvalue of $H_{\rm el}$.
\end{itemize}
\end{hyp}

\begin{remark}{\rm
We note that Hypothesis~\ref{hypA} is, for example, satisfied for $\nu = 0$, i.e., no spin-orbit coupling, and
\begin{equation} \label{coulombpot}
V = V_c(x_1,....,x_N) := \sum_{j=1}^N  \frac{-Z}{|x_j|}  +  \sum_{i < j }^N  \frac{1}{|x_j - x_i|}  ,
\end{equation}
with $Z=N$. See for example  \cite{ReeSim2} for part (i) of Hypothesis~\ref{hypA} and \cite{Zhi60 ,ReeSim4} for part (iii).
}
\end{remark}

For two subspaces   $V$ and $W$  of a vector space, and $\|\cdot\|_W$ a norm on $W$ we write
$$
f \in V + [W]_\epsilon ,
$$
if for any $\epsilon > 0$ there exist $g \in V$ and $h \in W$ such that
$f = g  + h$  and    $\|h\|_W < \epsilon.$

\begin{lemma}\label{Spinbahn-1d}  Let $N\in \mathbb{N}$ and suppose that for $j=1,...,N$
$$
 \nu_j   \in  L^\infty(\R^3) + [L^3(\R^3)]_\epsilon     .
$$
Then  it follows that $I_{SB}$ is infinitesimally Laplacian bounded.
\end{lemma}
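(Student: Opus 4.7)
The plan is to reduce to a single-particle one-dimensional-in-$j$ estimate, and then combine using Fubini plus the elementary fact that $-\Delta_j \le -\Delta$ on $L^2(\R^{3N})$. First I would discard the spin factor: each $S_j$ is a bounded operator on the finite-dimensional spin space $\mathcal{D}_s^{\otimes N}$ commuting with $x_j$ and $p_j$, so
\[
 \|(\nu_j(x_j)\wedge p_j)\cdot S_j\,u\|\le C_s \sum_{b,c}\|\nu_{j,b}(x_j)\,p_{j,c}\,u\|,
\]
and the task is reduced to showing that each operator $f(x_j)\,p_{j,c}$ with $f\in L^\infty(\R^3)+[L^3(\R^3)]_\varepsilon$ is infinitesimally Laplacian bounded on $L^2(\R^{3N})$.

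Next I would establish the single-particle estimate: for $w\in H^2(\R^3)$,
\[
\|f\,p_c w\|_{L^2}\le \|g\|_\infty\|p_c w\|_{L^2} + \|h\|_{L^3}\|p_c w\|_{L^6},
\]
where $f=g+h$ with $g\in L^\infty$ and $\|h\|_{L^3}<\varepsilon$ (our assumption). The factor $\|p_c w\|_{L^2}$ is bounded using $\|p_c w\|^2\le\langle w,-\Delta w\rangle\le\|w\|\|\Delta w\|$ and Young's inequality, giving infinitesimal $\Delta$-boundedness of $\|g\|_\infty\|p_c w\|_{L^2}$. For the $L^3$ term I use H\"older together with the Sobolev embedding $H^1(\R^3)\hookrightarrow L^6(\R^3)$ plus Plancherel's identity $\sum_{a,b}\|\partial_a\partial_b w\|_{L^2}^2=\|\Delta w\|_{L^2}^2$ to get $\|p_c w\|_{L^6}\le C_S\|\Delta w\|_{L^2}$. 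Combining and making $\varepsilon$ small while then choosing the elementary constant for the $L^\infty$-part even smaller, we conclude that for every $\delta>0$ there exists $C_\delta$ with
\[
\|f\,p_c w\|_{L^2(\R^3)}\le \delta\|\Delta w\|_{L^2(\R^3)}+C_\delta\|w\|_{L^2(\R^3)}.
\]

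Finally I would lift this to $L^2(\R^{3N})$ by Fubini. For $u\in H^2(\R^{3N})$ and almost every choice $x'=(x_1,\dots,\widehat{x_j},\dots,x_N)$ of the other coordinates, $u(\cdot,x')\in H^2(\R^3)$, so squaring the single-particle bound and integrating in $x'$ yields
\[
\|\nu_{j,b}(x_j)\,p_{j,c}\,u\|_{L^2(\R^{3N})}^2\le 2\delta^2\|\Delta_j u\|_{L^2(\R^{3N})}^2+2C_\delta^2\|u\|_{L^2(\R^{3N})}^2.
\]
Since $-\Delta=\sum_k -\Delta_k$ implies $\|\Delta_j u\|\le\|\Delta u\|$ (seen directly on the Fourier side from $|k_j|^4\le(\sum_k|k_k|^2)^2$), and $\delta$ was arbitrary, $\nu_{j,b}(x_j)\,p_{j,c}$ is infinitesimally $\Delta$-bounded on $L^2(\R^{3N})$; antisymmetrization is preserved since all operators involved commute with the permutation action on the particle labels.

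The only mildly delicate step is the Sobolev/H\"older bound handling the $L^3$-piece; everything else is routine. Summing over $j$, $b$, $c$ and absorbing the bounded spin factors $S_j$ completes the proof.
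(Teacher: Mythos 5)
Your proposal is correct and follows essentially the same route as the paper's own proof: the decomposition $\nu_b=g+h$ with $g\in L^\infty$ and $\|h\|_{L^3}$ small, H\"older plus the Sobolev/Gagliardo--Nirenberg embedding $\|p_c w\|_{L^6}\lesssim\|\Delta w\|_{L^2}$ for the $L^3$ piece, the elementary interpolation $\|p_c w\|^2\le\|w\|\,\|\Delta w\|$ for the $L^\infty$ piece, and the lift to $N$ particles via Fubini together with $\|\Delta_j u\|\le\|\Delta u\|$. The only cosmetic differences are that you make the boundedness of the spin factors and the Fourier-side inequality $\|\Delta_j u\|\le\|\Delta u\|$ explicit, which the paper leaves implicit.
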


\begin{proof}
\underline{Step 1:} We start with the case $N=1$.
By the triangle inequality we have
\begin{align}
& \| I_{SB} \psi \| \leq  \sum_{\substack{ a, b=1:\\ a\neq b}}^3 2   \|  \nu_b  p_{a}  \psi \| .
\end{align}
We will use the  Gagliardo-Nierenberg-Sobolev-Inequality \cite[Theorem 1 in Section 5.6.1]{evans2010}
for  the following special case
\begin{align} \label{GNS}
\| u   \|_{L^6(\R^3)} \leq C_{\rm GNS} \| \nabla u \|_{L^2(\R^3)}  ,
\end{align}
 for all $ u \in C^1(\R^3) $.
Using H\"older inequality  and then  \eqref{GNS}  we find
for  $h \in  L^3(\R^3)$
\begin{align*}
\|  h p_{a}
 \psi \|_2^2  &  \leq  \|  | h|^2 \|_{3/2} \| | p_{a}  \psi|^2 \|_3    \leq  \|  h \|_{3}^2  \|    p_{a}  \psi  \|_{6}^2  \\
& \leq  \|   h\|_{3}^2    C_{\rm GNS}^2  \|   D  p_{a}  \psi  \|_{2}^2   \leq   \|   h | \|_{3}^2    C_{\rm GNS}^2  \|   \Delta  \psi  \|_2^2 .
\end{align*}
On the other hand for  $g \in L^\infty(\R^3)$ we find for any $\eta > 0$ that
\begin{align*}
\|  g p_{a}  \psi \|_2  &  \leq  \| g \|_\infty \| p_{a} \psi \|_2 \leq
 \| g \|_\infty  (\| \psi \|_2 \| p_{a}^2 \psi \|_2 )^{1/2} \leq \| g \|_\infty (
(4\eta)^{-1}  \|  \psi \|_2 + \eta  \| p_{a}^2 \psi \|_2 )
\end{align*}

Thus for $\nu_b = g + h$ with  $g \in L^\infty(\R^3)$ and $h \in  L^3(\R^3)$
we find from the above inequalities
\begin{align*}
&\| \nu_b p_{a} \psi \| =
\| (  g+ h) p_{a}  \psi \| \leq \|   g \psi \| + \| h p_{a}  \psi \| \\
& \leq  \|   h \|_{3}     C_{\rm GNS}\|   \Delta  \psi  \|_2 +  \| g \|_\infty (
(4\eta)^{-1}  \|  \psi \|_2 + \eta  \| p_{a}^2 \psi \|_2 )  \\
& \leq( \|   h \|_{3}     C_{\rm GNS}  +    \| g \|_\infty \eta ) \|   \Delta  \psi  \|_2
+  \| g \|_\infty
(4\eta)^{-1}  \|  \psi \|_2 .
\end{align*}
This shows the infinitesimally boundedness, by first choosing  $\|h \|_3$ sufficiently small
and then $\eta$.

\smallskip
\noindent
\underline{Step 2:} Now suppose $N \geq 2$.
By the triangle inequality
\begin{align} \label{sumISB}
& \| I_{SB} \psi \| \leq \sum_{j=1}^N \sum_{\substack{ a, b=1:\\ a\neq b}}^3 2  \|  \nu_b(x_j)   p_{j,a}  \psi \| .
\end{align}
Keeping the other variables fixed we find from Step 1  that for any $\epsilon > 0$
there exists a $C_\epsilon$ such that
\begin{align} \label{onepartestSB}
 \|  \nu_b(x_j)   p_{j,a}  \psi \|  \leq \epsilon \| -\Delta_j \psi \| + C_\epsilon \| \psi \| \leq
\epsilon \| \sum_{j=1}^N \Delta_j \psi \| + C_\epsilon \| \psi \| .
\end{align}
Since the sum in \eqref{sumISB}  is finite the claim now follows from   \eqref{onepartestSB}.
\end{proof}

We note that one can show a stronger result. For a proof see Section~\ref{proof:Spinbahn}.
\begin{lemma}\label{Spinbahn}
Let $\nu_j : \R^3 \to \R^3$  for  $j=1,...,N$ be  measurable functions such that
\begin{align}
& \forall \delta_0 > 0 ,  \quad  \sup_{|x| \geq \delta_0} |  \nu_{j}(x) | < \infty \label{refassump1}  \\
& \lim_{\delta \downarrow 0} \sup_{|x|  \in[0,\delta]}  |  |x|\nu_{j}(x)| = 0 .  \label{refassump2}
\end{align}
Then  $I_{SB}$ is infinitesimally Laplacian bounded.
Furthermore, assume in addition that $N=1$ and  $\nu_1 \in L^2(\R^3)$, then $I_{SB}$ is   relatively compact with respect to $-\Delta$.
\end{lemma}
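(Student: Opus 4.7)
The plan is to mirror the decomposition from Lemma~\ref{Spinbahn-1d}, replacing the Gagliardo--Nirenberg--Sobolev estimate (which is tailored to $L^3$ singularities) by the Hardy inequality (which is tailored to $|x|^{-1}$ singularities). Condition \eqref{refassump2} provides precisely a vanishing coefficient in front of a $|x|^{-1}$ bound near the origin, while \eqref{refassump1} provides boundedness in every exterior region; so after a cut-off at some radius $\delta$ the analysis splits cleanly into a near-singularity piece and a bounded piece.

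For infinitesimal Laplacian-boundedness I treat $N=1$ first. Write $\nu_j = \nu_j\chi_{\{|x|\le\delta\}} + \nu_j\chi_{\{|x|>\delta\}}$. Given $\epsilon>0$, condition \eqref{refassump2} lets us choose $\delta$ so that $|\nu_j(x)\chi_{\{|x|\le\delta\}}| \le \epsilon/|x|$. The Hardy inequality $\||x|^{-1}\varphi\|_2 \le 2\|\nabla\varphi\|_2$ applied to $\varphi = p_a\psi$ bounds the near piece by $2\epsilon\|\nabla p_a\psi\|_2 \le 2\epsilon\|{-}\Delta\psi\|_2$. For the far piece, \eqref{refassump1} gives $M_\delta := \sup_{|x|>\delta}|\nu_j(x)| < \infty$, and the standard infinitesimal bound $\|p_a\psi\| \le \eta\|{-}\Delta\psi\| + C_\eta\|\psi\|$ yields $\|\nu_j\chi_{\{|x|>\delta\}}p_a\psi\| \le M_\delta(\eta\|{-}\Delta\psi\| + C_\eta\|\psi\|)$. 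Choosing $\epsilon$ small (which fixes $\delta$ and hence $M_\delta$) and then $\eta$ small completes the infinitesimal bound. The case $N\ge 2$ is the triangle-inequality argument from Lemma~\ref{Spinbahn-1d}: the one-particle Hardy estimate is applied in the $x_j$-variable with the remaining variables fixed, and $-\Delta_j \le -\Delta$ collects the contributions.

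For the relative compactness claim with $N=1$ and $\nu_1\in L^2(\R^3)$, it suffices to show that $\nu_1(x)p_a(-\Delta+1)^{-1}$ is compact for each index $a$, since $I_{SB}$ is a finite linear combination of such operators multiplied by bounded spin matrices. I would use the three-fold split $\nu_1 = \nu_1\chi_{\{|x|\le\delta\}} + \nu_1\chi_{\{\delta<|x|\le R\}} + \nu_1\chi_{\{|x|>R\}}$. The first term produces an operator of norm $O(\epsilon)$ by the Hardy estimate above. The middle term has a bounded, compactly supported multiplier (by \eqref{refassump1}); since $p_a(-\Delta+1)^{-1}$ maps $L^2$ into $H^1$, the Rellich--Kondrachov theorem (compact embedding $H^1_{\mathrm{loc}}\hookrightarrow L^2_{\mathrm{loc}}$) makes this a compact operator. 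For the tail, the combination of $\nu_1\in L^2$ and \eqref{refassump1} places $\nu_1\chi_{\{|x|>R\}}$ in $L^2\cap L^\infty$; interpolation then gives $\|\nu_1\chi_{\{|x|>R\}}\|_3 \le \|\nu_1\chi_{\{|x|>R\}}\|_2^{2/3}\|\nu_1\chi_{\{|x|>R\}}\|_\infty^{1/3}\to 0$ as $R\to\infty$, and H\"older combined with the Sobolev embedding $H^1(\R^3)\hookrightarrow L^6(\R^3)$ upgrades this to $\|\nu_1\chi_{\{|x|>R\}} p_a(-\Delta+1)^{-1}\|_{\mathrm{op}} \le C\|\nu_1\chi_{\{|x|>R\}}\|_3 \to 0$. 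Hence $\nu_1 p_a(-\Delta+1)^{-1}$ is a norm limit of compact operators, and therefore compact.

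The main delicate point is the tripartite decomposition in the compactness step: the $L^2$ hypothesis on $\nu_1$ alone is not enough to give operator-norm control of $\nu_1 p_a(-\Delta+1)^{-1}$ directly, and the near-origin bound produced by \eqref{refassump2} is not itself compact. One must combine Hardy's inequality near the singularity, Rellich--Kondrachov in the compactly supported intermediate region, and interpolation between $L^2$ and $L^\infty$ in the tail. Everything else is standard bookkeeping with $|p|$-versus-$|p|^2$ estimates.
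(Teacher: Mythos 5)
Your proof of infinitesimal boundedness is essentially the paper's own: the paper also splits $\nu_j$ at a radius $\delta$, controls the near-origin piece with the uncertainty-principle (Hardy) inequality $\| (2|x|)^{-1}\varphi\| \le \|\nabla\varphi\|$ applied to $\varphi = p_{j,a}\psi$ together with \eqref{refassump2}, controls the exterior piece with \eqref{refassump1} and the standard $\|p_{j,a}\psi\| \le \eta\|\Delta\psi\| + C_\eta\|\psi\|$ estimate, and chooses $\delta$ before $\eta$; the reduction of $N\ge 2$ to $N=1$ by the triangle inequality is likewise the same. For the relative compactness claim, however, you take a genuinely different route. The paper cuts off in \emph{momentum} space: it shows $\|I_{SB}(p^2+1)^{-1}1_{|p|\ge n}\| \le C n^{-1}\sup_{r\ge\delta}|\nu| + C\sup_{r\le\delta}|r\nu(r)|$ (small after choosing $\delta$ then $n$), and then verifies that $I_{SB}(p^2+1)^{-1}1_{|p|\le n}$ is Hilbert--Schmidt, which is where $\nu_1\in L^2$ enters directly via $\|\nu_1\|_2\,\|\varphi_n\|_2$ with $\varphi_n$ the Fourier transform of $|p|(p^2+1)^{-1}1_{|p|\le n}$. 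You instead cut off in \emph{position} space into core, annulus, and tail, using Hardy for the core, Rellich--Kondrachov for the bounded compactly supported annulus, and the interpolation $\|f\|_3 \le \|f\|_2^{2/3}\|f\|_\infty^{1/3}$ plus H\"older and the Sobolev embedding $H^1(\R^3)\hookrightarrow L^6(\R^3)$ for the tail. Both are correct. The paper's version is shorter and makes the role of $\nu_1\in L^2$ maximally explicit through the Hilbert--Schmidt norm; yours avoids any kernel computation (note that $|p|(p^2+1)^{-1}\notin L^2(\R^3)$, so a momentum cutoff or a Rellich-type argument is genuinely needed somewhere), reuses the same machinery as Lemma~\ref{Spinbahn-1d}, and isolates cleanly which hypothesis controls which region. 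The only point worth spelling out in your write-up is that the tail's $L^\infty$ norm is bounded uniformly in $R\ge 1$ by \eqref{refassump1}, so that the interpolation bound indeed tends to zero as $R\to\infty$ because $\|\nu_1\chi_{\{|x|>R\}}\|_2\to 0$.
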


\begin{remark} \label{Spinbahn-rem} {\rm
Let for some $a > 0$ and $b >2  $ and $C \geq 0$
\begin{equation} \label{UpperBoundNU}
| \nu_j(r) | \leq  \frac{ C r^{b}}{a+r^{b}} r^{-3}  , \quad r > 0 .
\end{equation}
Then the assumptions \eqref{refassump1}  and \eqref{refassump2} of the Lemma
 \ref{Spinbahn}. Thus by the Lemma  $I_{SB}$ is infinitesimally operator bounded with respect to $-\Delta$ (with the bound only depending on $C,b,a$).
 Furthermore $\nu_j \in L^2(\R^3)$ and so for $N=1$ the operator
 $I_{SB}$ is relatively compact with respect to $-\Delta$.
}
\end{remark}

\subsection{Fock space} \label{subsec:Fock}
Let $\hh = L^2(\R^3 \times \Z_2 )$ and let
\begin{align}\label{eq:FockSpace}
\mathcal{F} = \bigoplus_{n=0}^\infty \hh^{\otimes_s n}
\end{align}
be the Fock space modeling the quantized radiation field.
We denote the Fock vacuum by $\Omega$ and $a^*(\boldsymbol{k},\lambda)$ respective $a(\boldsymbol{k},\lambda)$ are the usual creation and annihilation operator satisfying canonical commutation relations, where $(\boldsymbol{k},\lambda) \in \R^3 \times \Z_2$.
For formal definitions of the annihilation and creation operator
and associated field operators we refer the reader to \cite{HasLan18-1,BacFroSig98-1}.
For $G \in L^2(\R^3\times \Z_2 ; \mathcal{L}(\HH_{\rm at}))$ we define
\begin{align} \label{eq:G-notation}
a(G) := \sum_{\lambda=1,2} \int G^*(\boldsymbol{k},\lambda) \otimes a(\boldsymbol{k},\lambda) d^3\textbf{k} , \quad    a^*(G) :=  \sum_{\lambda=1,2} \int G(\boldsymbol{k},\lambda)  \otimes a^*(\boldsymbol{k},\lambda) d^3\textbf{k}
\end{align}
 which are  densely defined closed linear operators in the Hilbert space.
We define the free field operator with dispersion relation $\omega$ by
\begin{equation}  \label{eq:fieldenergy}
H_f := \sum_{\lambda=1,2}\int{ \omega(\boldsymbol{k}) a^*(\boldsymbol{k},\lambda) a(\boldsymbol{k},\lambda) d^3\textbf{k}} ,
\end{equation}
which is defined in the sense of forms.

\subsection{The Hamiltonian}
A state of the system of an electron and transversal photons is described
by a vector in the Hilbert space
$$
\HH = \HH_{\rm el}\otimes \FF ,
$$
where we recall
$$
\HH_{\rm el} =   L^2(\R^{3};\mathcal{D}_s)^{\otimes_a N} \cong L_a^2((\R^3 \times \Z_{2 s + 1})^N)
$$
and $\mathcal{D}_s$ denotes the representation space of $SU(2)$ with $s \in \frac{1}{2} \N_0$
denoting the spin of the electrons.
For the applications we have in mind we shall consider the case of electrons $s = 1/2$ or $s=0$.
We choose units where $\hbar, c$ and four times the Rydberg energy are equal to one, and we express all positions in multiples of one half of the Bohr-radius, which in our units, agrees with the fine-structure constant $\alpha$.
For notational convenience, we set   $\kappa = \alpha^{1/2}$.
In these units the Hamiltonian of an atom with $N$ electrons in the dipole approximation of quantum electrodynamics reads \cite{Gri04,BacFroSig98-1,Spo04}%Appendix A.5
\begin{equation} \label{eq:defofhg}
H(\kappa) = H_{\rm el} +  \sum_{j=1}^N \left( \kappa^3 \chi(x_j) x_j \cdot E(0) + \kappa^5 S_j \cdot B(0) \right)  +   H_f .
\end{equation}

The interaction between the electrons and the quantized radiation field is given in terms of the following field operators evaluated at the origin
\begin{align*}
E(0) & =  \sum_{\lambda=1,2} \int_{} \rho(\boldsymbol{k}) \sqrt{|\boldsymbol{k}|}i \varepsilon(\boldsymbol{k},\lambda) \left( a^*(\boldsymbol{k},\lambda) - a(\boldsymbol{k},\lambda)  \right)  d^3 \boldsymbol{k}  , \\
B(0) & =  \sum_{\lambda=1,2} \int_{}  \rho(\boldsymbol{k}) \frac{1}{\sqrt{|\boldsymbol{k}|}} i  \boldsymbol{k} \wedge \varepsilon(\boldsymbol{k},\lambda)  \left( a(\boldsymbol{k},\lambda) - a^*(\boldsymbol{k},\lambda)  \right)  d^3\boldsymbol{k}  ,
\end{align*}
where
\begin{equation} \label{eq:poldefined}
\varepsilon :  \R^3 \setminus \{ 0 \}  \times \{1,2 \} \to \R^3
\end{equation}
 is a measurable function describing  the  so called  photon polarization vectors and satisfying for all $\lambda, \mu =1 ,2$ and $\boldsymbol{k} \in \R^3 \setminus \{ 0 \}$ the identities
\begin{equation}
\varepsilon(\boldsymbol{k},\lambda) \cdot \varepsilon(\boldsymbol{k},\mu) = \delta_{\lambda , \mu} , \quad
 \boldsymbol{k} \cdot \varepsilon(\boldsymbol{k},\lambda)  = 0 . % \quad \text{ for } \lambda , \mu
\end{equation}
Moreover, for the result about resonances we shall assume in addition that
  $\varepsilon(\boldsymbol{k},\lambda) = \varepsilon(\boldsymbol{k}/|\boldsymbol{k}|,\lambda)$.
The function $\rho : \R^3 \to \R$ is a cutoff function.
 The function $\chi   \in C(\R^3)$ serves as a spacial-cutoff and is a function such that
\begin{equation} \label{eq:assumptionong}
\chi(x) = O\left(\frac{1}{|x|}\right) , \qquad   |x| \to \infty .
\end{equation}

\begin{hyp} \label{hyp:kappa}
The cutoff function $\rho$ is measurable and satisfies for some $\mu > 0$
\begin{equation} \label{ineqbound}
\int  (|\boldsymbol{k}|^2 + |\boldsymbol{k}|^{-1-2\mu} ) |\rho(\boldsymbol{k})|^2 d^3 \boldsymbol{k}  < \infty .
\end{equation}
\end{hyp}

\begin{remark} {\rm  \label{remoncouplibng1}
We note that  $\rho(\boldsymbol{k}) = e^{- (\boldsymbol{k}/\Lambda)^2}$, for any $\Lambda  > 0$,
 satisfies the assumptions of Hypothesis  \ref{hyp:kappa}, which is straight forward to see. In fact,  \eqref{ineqbound}  holds for any $\mu \in (0,1)$.   }
\end{remark}

Defining
\begin{align*} G_\kappa(\boldsymbol{k},\lambda) :=
 \frac{ \rho(\boldsymbol{k}) }{\omega(\boldsymbol{k})^{1/2}   } \sum_{j=1}^N  \left(  \kappa^3  \chi(x_j) x_j \cdot  |\boldsymbol{k}| i \varepsilon(\boldsymbol{k},\lambda)  +  \kappa^5 S_j \cdot
 i \boldsymbol{k} \wedge \varepsilon(\boldsymbol{k},\lambda)  \right)
\end{align*}
we can write the Hamiltonian as
\begin{align*}
	H(\kappa)   = H_{\rm el} +    W(\kappa)           +   H_f  ,
\end{align*}
where $W(\kappa) := a( G_{\overline{\kappa}}) + a^*(G_\kappa)$, see  \eqref{eq:G-notation}.

\subsection{Analytic dilation}\label{subsec:Analytic dilation}
Now let us consider analytic dilation, which are used to study resonances. We define for $\phi \in \hh$ the dilation operator
\begin{align*}
(u_{\rm ph}(\theta)\phi)(\boldsymbol{k},\lambda) = e^{-\frac{3\theta}{2} } \phi(e^{-\theta} \boldsymbol{k},\lambda ) , \quad
\end{align*}
where $(\boldsymbol{k},\lambda) \in \R^3 \times \Z_2$ and $ \theta \in \R$.
With this we define strongly continuous one-parameter unitary groups on
$\FF$ and $\HH$, respectively
\begin{align*}
%U_{\rm el}(\theta) & = u_{\rm el}(\theta)^{\otimes N} ,  \\
U_{\rm ph}(\theta) = \Gamma(u_{\rm ph}(\theta)) ,  \quad
U(\theta) =  \one_{\HH_{\rm el}}  \otimes  U_{\rm ph}(\theta)  .
\end{align*}
Moreover for $\phi \in \hh$ we define the weighted norm
\begin{align*}
	\|  \phi\|_\mu = \left( \sum_{\lambda=1,2} \int \frac{|\phi(\boldsymbol{k},\lambda)|^2}{  |\boldsymbol{k}|^{2 + 2 \mu}  } d \boldsymbol{k}   \right)^{1/2}  .
\end{align*}

We can now define the main object of study.
For that let $\mathfrak{D}$ denote the domain of $-\Delta + H_f$ which is, by Hypothesis~\ref{hypA}, the same as the domain of $H_{\rm el} + H_f$.
We are interested in spectral properties of the mapping
\begin{align}
\R^2 \to \mathcal{B}(\mathfrak{D};\HH)\,;\,
 (\kappa, \theta) \mapsto  H(\kappa,\theta) :=  U(\theta) H(\kappa) U({\theta})^{-1}
 %= \one_{\HH_{\rm el}}  \otimes  U_{\rm ph}(\theta) H_g (\one_{\HH_{\rm el}}  \otimes U_{\rm ph}({\theta}))^{-1}
= H_{\rm el}  +  W(\kappa,\theta)  +   e^{- \theta} H_f  , \label{eq:anaextofHg}
\end{align}
where
\begin{align} \label{eq:Wdipinfields}
W(\kappa,\theta) = &   \sum_{j=1}^N \left(\kappa^3  \chi(  x_j)   x_j \cdot E_\theta(0) + \kappa^5 S_j \cdot B_\theta(0) \right)
\end{align}
and
\begin{align*}
E_\theta(0) & =  e^{-\theta/2} \sum_{\lambda=1,2} \int_{} \rho(e^{-\theta} \boldsymbol{k}) \sqrt{|\boldsymbol{k}|}i \varepsilon(\boldsymbol{k},\lambda) \left( a^*(\boldsymbol{k},\lambda) - a(\boldsymbol{k},\lambda)  \right)  d^3 \boldsymbol{k}  , \\
B_\theta(0) & =   e^{-\theta/2} \sum_{\lambda=1,2} \int_{}  \rho(e^{-\theta} \boldsymbol{k}) \frac{1}{\sqrt{|\boldsymbol{k}|}} i  \boldsymbol{k} \wedge \varepsilon(\boldsymbol{k},\lambda)  \left( a(\boldsymbol{k},\lambda) - a^*(\boldsymbol{k},\lambda)  \right)  d^3\boldsymbol{k}  .
\end{align*}
Or equivalently we may write
\begin{align} \label{Wintdip}
W(\kappa,\theta) = a(G_{\overline{\kappa},\overline{\theta}}) +  a^*( G_{\kappa,\theta})
\end{align}
with
\begin{align*}%\label{eq:GTheta}
G_{\kappa,\theta}(\boldsymbol{k},\lambda) :=
 e^{-\theta/2} \frac{ \rho(e^{-\theta} \boldsymbol{k})}{\omega(\boldsymbol{k})^{1/2}} \sum_{j=1}^N
\left(  \kappa^3   \chi( x_j)  x_j \cdot  |\boldsymbol{k}| i \varepsilon(\boldsymbol{k},\lambda)  + \kappa^5 S_j \cdot
 i \boldsymbol{k} \wedge \varepsilon(\boldsymbol{k},\lambda)  \right)  .
\end{align*}

\begin{hyp} \label{hyp:analytph}  There exists a $\theta_b > 0$ such that  following analytic continuation properties hold.
The map $K_{\cdot} : \R \to L^2(\R^3\times \Z_2)$,  $\theta \mapsto   |\boldsymbol{k}|^{1/2} \rho(e^{-\theta} \boldsymbol{k})$  has an analytic continuation to a bounded function on $D_{\theta_b}$, such that
$\sup_{\theta \in D_{\theta_b}} \| K_{\theta} \|_\mu < \infty $.
\end{hyp}

\begin{remark} {\rm  \label{remoncouplibng2}
We note that $\rho(\boldsymbol{k}) = e^{- (\boldsymbol{k}/\Lambda)^2}$, for any $\Lambda  > 0$,
 satisfies the assumptions of Hypothesis~\ref{hyp:analytph}, which is straight forward to verify.  }
\end{remark}

\begin{remark} \label{rem:anacont} {\rm
We note that by Hypotheses  \ref{hyp:kappa} and   \ref{hyp:analytph}  the mapping \eqref{eq:anaextofHg} has a unique analytic continuation to $D_{\theta_b}$.
This analytic continuation is an analytic family of type (A).
This follows since the field operator \eqref{Wintdip} is $H_f$ bounded, see e.g. \cite{HasLan18-1,BacFroSig98-1}.
}
\end{remark}

\section{The non-degenerate case}\label{sec:nondeg}
The first results consider the case of a non-degenerate ground state (Theorem~\ref{gs:thmnondeg}) and of non-degenerate resonance states (Theorem~\ref{gs:thmnondegres}). We note that the first result is already well known \cite{GriHas09}. We state it in a form that fits with the later results considering more general degenerate situations.

We assume that the electronic Hamiltonian $H_{\rm el}$
has an eigenvalue $E_{\rm at}$ at the bottom of the spectrum with linearly independent eigenvectors $\varphi_{{\rm el},j}$, $j=1,...,d$. We will show the following property under various situations.

\begin{prope}(Ground State)   \label{prop:gs:thmnondeg-0}
There exists  a $\kappa_b > 0$ such that for all $\kappa \in D_{\kappa_b}$  the operator $H(\kappa)$  has $d$ linearly independent  eigenvectors $\psi_{\kappa,j}$, $j=1,...,d$,  with eigenvalue $E_{\kappa}$
such that on $D_{\kappa_b} $
\begin{itemize}
\item[(i)] $\kappa \mapsto \psi_{\kappa,j}$ is analytic.
\item[(ii)]   $\kappa \mapsto E_{\kappa}$ is analytic.
\item[(iii)] $E_{\kappa} = E_{{\rm el}} + O(\kappa^2)$,  $\psi_{\kappa,j} = \varphi_{{\rm el},j}  + O(g)$.
\item[(iv)]  $E_{\kappa} = \inf \sigma (  H_\kappa)$ if $\kappa \in D_{\kappa_b} \cap \R$.
\end{itemize}
\end{prope}

In the  non-degenerate case we recall the following result from \cite{GriHas09}.

\begin{theorem}(Ground State) \label{gs:thmnondeg-0}
Suppose Hypothesis \ref{hypA} and  \ref{hyp:kappa}   hold.
Suppose   that $E_{\rm el}$, the infimum
of the spectrum of $H_{\rm el}$ is  a non-degenerate eigenvalue   with eigenvector $\varphi_{{\rm el},1}$.
Then Property \ref{prop:gs:thmnondeg-0} holds.
\end{theorem}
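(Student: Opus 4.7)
The plan is to reduce Theorem~\ref{gs:thmnondeg-0} to an application of the abstract ground-state existence and analyticity theorem from \cite{HasLan23-2}, recalled in Appendix~\ref{app:ResultsfromOtherPaper}. That abstract theorem addresses Hamiltonians of exactly the form $H_{\rm at} + a(G_{\overline{\kappa}}) + a^*(G_\kappa) + H_f$ on $\HH_{\rm at} \otimes \FF$, under hypotheses on the atomic part (an isolated eigenvalue at the bottom of the spectrum, here required simple) together with a $\kappa$-analytic, $|\boldsymbol{k}|$-weighted $L^2$ bound on the operator-valued coupling $G_\kappa$. The Hamiltonian $H(\kappa)$ is already written in precisely this form by \eqref{eq:defofhg} and the expression preceding it, so the entire task is verification of the abstract hypotheses for $d=1$.

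First the atomic input: Hypothesis~\ref{hypA}(i)--(ii) makes $H_{\rm el}$ self-adjoint on $\dom(-\Delta)$ via Kato--Rellich, and Hypothesis~\ref{hypA}(iii) combined with the non-degeneracy assumption provides the required isolated simple eigenvalue $E_{\rm el}$ with eigenvector $\varphi_{{\rm el},1}$. Next one has to control $G_\kappa$. Its dependence on $\kappa$ is polynomial, so $\kappa \mapsto G_\kappa$ is entire as a map to the relevant operator-valued $L^2$ space. For the weighted norm, both the electric and the magnetic contribution to $G_\kappa(\boldsymbol{k},\lambda)$ carry an overall factor $\rho(\boldsymbol{k})\,|\boldsymbol{k}|^{1/2}/\omega(\boldsymbol{k})^{1/2}$ multiplying a bounded operator-valued factor on $\HH_{\rm el}$, namely either $\chi(x_j) x_j$ (which is bounded thanks to $\chi(x) = O(|x|^{-1})$ and the continuity of $\chi$) or the spin $S_j$ (trivially bounded on $\mathcal{D}_s^{\otimes N}$). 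Integrating against the weight $|\boldsymbol{k}|^{-2-2\mu}$ in $\|\cdot\|_\mu$ yields $\int |\rho(\boldsymbol{k})|^2 |\boldsymbol{k}|^{-1-2\mu} d\boldsymbol{k}$, which is finite by Hypothesis~\ref{hyp:kappa}; the $|\boldsymbol{k}|^2$-weighted bound needed for the $H_f$-infinitesimal boundedness of $W(\kappa)$ is the other part of \eqref{ineqbound}. Analyticity in $\kappa$ on any disk $D_{\kappa_b}$ of these norm bounds is immediate from the explicit polynomial dependence.

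Once these checks are in place, the abstract theorem of \cite{HasLan23-2} delivers, for some $\kappa_b > 0$, an analytic family $\kappa \mapsto (\psi_{\kappa,1}, E_\kappa)$ on $D_{\kappa_b}$ of eigenpairs of $H(\kappa)$ with $(\psi_{0,1}, E_0) = (\varphi_{{\rm el},1}, E_{\rm el})$, which gives parts (i)--(iii) of Property~\ref{prop:gs:thmnondeg-0} (the expansion $E_\kappa = E_{\rm el} + O(\kappa^2)$ is automatic, and in fact $O(\kappa^6)$, because $G_\kappa$ vanishes at $\kappa=0$ to order at least three). Part (iv) is obtained by observing that for $\kappa \in D_{\kappa_b} \cap \R$ the operator $H(\kappa)$ is self-adjoint and bounded below, so the eigenvalue $E_\kappa$, being continuous in $\kappa$ and emanating from $\inf \sigma(H_{\rm el}) = E_{\rm el}$, remains below the essential spectrum for small real $\kappa$ and thus coincides with $\inf \sigma(H(\kappa))$. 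The main obstacle I anticipate is bookkeeping: one must match the concrete $G_\kappa$, including the unbounded atomic factors $x_j$ tempered by $\chi(x_j)$, against whatever precise operator-valued norm the abstract framework in \cite{HasLan23-2} demands, and verify uniformity in $\kappa$ on $D_{\kappa_b}$. The rest of the proof is a direct, essentially mechanical verification of hypotheses.
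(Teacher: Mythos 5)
Your overall strategy --- reduce the statement to Theorem~\ref{thm:symdegenSpinBoson} of \cite{HasLan23-2} by verifying its hypotheses for the concrete $G_\kappa$ --- is exactly the route the paper takes for the subsuming Theorem~\ref{gs:thmnondeg} (for Theorem~\ref{gs:thmnondeg-0} itself the paper simply cites \cite{GriHas09}). But two of the abstract hypotheses are never addressed in your plan, and one step of your argument is incorrect.

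First, you verify only Hypotheses~\ref{HypI} and~\ref{HypII} and omit Hypothesis~\ref{HypIII}, the uniform reduced-resolvent bound
\[
\sup_{(s,z)}\ \sup_{q\ge 0}\ \Bigl\| \tfrac{q+1}{H_{\rm at}(s)-z+q}\,\overline P_{\rm at}(s)\Bigr\| < \infty ,
\]
which the paper establishes for $z$ ranging over the full disk $D_{1/2}$ about the eigenvalue. This is only possible after normalizing the spectral gap: the paper replaces $H_{\rm el}$ by $\check H_{{\rm el},0}=\check\delta_0^{-1}(H_{\rm el}-E_{{\rm el},0})$ with $\check\delta_0=\delta_0$ (conjugating by $U_{\rm ph}(\tau)$, $\tau=-\ln\check\delta_0$, so that the coefficient of $H_f$ stays equal to one), and then proves the bound in Proposition~\ref{veriatomIIIgs}. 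Without this rescaling the estimate degenerates whenever $\delta_0\le 1/2$, since $z\in D_{1/2}$ can then approach other spectral points. This normalization plus the resolvent estimate is the genuinely technical content of the verification and cannot be dismissed as bookkeeping.

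Second, your derivation of Property~\ref{prop:gs:thmnondeg-0}(iv) fails. For massless photons one has $\sigma_{\rm ess}(H(\kappa))=[\inf\sigma(H(\kappa)),\infty)$, so the constructed eigenvalue $E_\kappa$ is never ``below the essential spectrum''; and continuity of $E_\kappa$ in $\kappa$ from $E_{\rm el}$ gives no control on whether spectrum exists below $E_\kappa$. The paper obtains (iv) from Hypothesis~\ref{HypIV} (the reality structure $G_{1,s}=G_{2,s}$, $H_{\rm at}(s)^*=H_{\rm at}(\overline s)$, and $E_{\rm at}(s_0)=\inf\sigma(H_{\rm at}(s_0))$) together with conclusion (iii) of Theorem~\ref{thm:symdegenSpinBoson}, which asserts $E_g(s)=\inf\sigma(H_g(s))$ for real parameters as an output of the renormalization construction, not of any gap-persistence argument. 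A smaller bookkeeping point: the smallness parameter in the abstract theorem is $g$, not $\kappa$; the paper passes to $H_g(\underline\kappa,\theta)$ as in \eqref{startingoperator} and recovers the physical $g=1$ operator through the homogeneity of $G_{\underline\kappa}$ in $(\kappa_1,\kappa_2)$, which your plan should make explicit.
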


Now let us consider situations, where we have additional dilation analyticity.
We will show the following property under various situations.

\begin{prope}(Dilated Ground State)    \label{prope:gs:thmnondeg}
There exists a $\theta_b > 0$ and a $\kappa_b > 0$ such that for all $\theta \in D_{\theta_b}$
and $\kappa \in D_{\kappa_b}$  the operator $H(\kappa,\theta)$  has  linearly independent  eigenvectors $\psi_{\kappa,\theta,j}$, $j=1,...,d$,  with eigenvalue $E_{\kappa,\theta}$
such that on $D_{\kappa_b} \times D_{\theta_b}$
\begin{itemize}
\item[(i)] $(\kappa,\theta) \mapsto \psi_{\kappa,\theta,j}$ is analytic.
\item[(ii)]   $(\kappa,\theta) \mapsto E_{\kappa,\theta}$ is analytic and does not depend on $\theta$.
\item[(iii)] $E_{\kappa,\theta} = E_{{\rm el}} + O(\kappa^2)$,  $\psi_{\kappa,\theta,j} = \varphi_{{\rm el},j}  + O(g)$.
\item[(iv)]  $E_{\kappa,\theta} = \inf \sigma (  H(\kappa,\theta))$ if $\kappa \in D_{\kappa_b} \cap \R$ and $\theta \in D_{\theta_b}$. 
\end{itemize}
\end{prope}

\begin{theorem}(Dilated Ground State) \label{gs:thmnondeg}
Suppose Hypothesis \ref{hypA}, \ref{hyp:kappa}, and    \ref{hyp:analytph}   hold.
Suppose   that $E_{\rm el}$, the infimum
of the spectrum of $H_{\rm el}$ is  a non-degenerate eigenvalue   with eigenvector $\varphi_{{\rm el},1}$.
Then  Properties   \ref{prop:gs:thmnondeg-0}  and \ref{prope:gs:thmnondeg}   hold.
\end{theorem}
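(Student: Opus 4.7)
The plan is to deduce this theorem from the abstract existence-and-analyticity result of \cite{HasLan23-2} (recalled in Appendix~\ref{app:ResultsfromOtherPaper}), applied with degenerate dimension $d=1$ and trivial symmetry group. Property~\ref{prop:gs:thmnondeg-0} is already covered by Theorem~\ref{gs:thmnondeg-0} (a restatement of \cite{GriHas09}), so only the dilation-analytic Property~\ref{prope:gs:thmnondeg} requires new work; specialization to $\theta=0$ then recovers the non-dilated statements (i)--(iii) as well.

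First, I would cast $H(\kappa,\theta)$ in the form required by \cite{HasLan23-2}: unperturbed operator $H_{\rm el}\otimes\one + \one\otimes e^{-\theta}H_f$ plus interaction $W(\kappa,\theta) = a(G_{\overline{\kappa},\overline{\theta}})+a^*(G_{\kappa,\theta})$ as in \eqref{Wintdip}. The crucial analytic input to be verified is that $(\kappa,\theta)\mapsto G_{\kappa,\theta}$ extends analytically to $D_{\kappa_b}\times D_{\theta_b}$ with uniform weighted bound $\sup_{(\kappa,\theta)\in D_{\kappa_b}\times D_{\theta_b}} \|G_{\kappa,\theta}\|_\mu < \infty$, the $\LL(\HH_{\rm el})$-valued coefficients being understood via $H_{\rm el}$-relative boundedness. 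The $\kappa$-dependence is polynomial, hence entire; the $\theta$-analyticity and the weighted bound on $|\boldsymbol{k}|^{1/2}\rho(e^{-\theta}\boldsymbol{k})$ are precisely Hypothesis~\ref{hyp:analytph}; and the electronic multipliers $\chi(x_j)x_j$ (bounded by \eqref{eq:assumptionong}) and spin operators $S_j$ (bounded on $\HH_{\rm el}$) supply the required operator-theoretic estimates.

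Next, I would verify the spectral input. Hypothesis~\ref{hypA}(iii) together with the non-degeneracy assumption gives a simple isolated eigenvalue $E_{\rm el}$ of $H_{\rm el}$ with eigenvector $\varphi_{{\rm el},1}$. With the trivial symmetry group, the one-dimensional eigenspace is automatically irreducible and the protection condition of \cite{HasLan23-2} is vacuous. All hypotheses of the abstract theorem are therefore met, and it produces $\kappa_b,\theta_b>0$ together with an analytic eigenvalue $E_{\kappa,\theta}$ and an analytic eigenvector $\psi_{\kappa,\theta,1}$ of $H(\kappa,\theta)$ on $D_{\kappa_b}\times D_{\theta_b}$, satisfying the asymptotic estimates in (iii).

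The $\theta$-independence asserted in (ii) and the ground-state identification (iv) will then follow from the unitarity of $U(\theta)$ on the real axis: for real $(\kappa,\theta)\in D_{\kappa_b}\times D_{\theta_b}$ the operator $H(\kappa,\theta)$ is unitarily equivalent to $H(\kappa)$, so $E_{\kappa,\theta}=E_\kappa=\inf\sigma(H(\kappa))=\inf\sigma(H(\kappa,\theta))$; the identity $E_{\kappa,\theta}=E_\kappa$ extends to the full polydisk by the identity theorem for analytic functions. The main technical obstacle is not the algebraic reduction but the uniform-in-$\theta$ analytic and operator-norm bounds on the coupling function $G_{\kappa,\theta}$, which is where Hypothesis~\ref{hyp:analytph}---supplemented for more singular spin-orbit potentials by Lemmas~\ref{Spinbahn-1d} and~\ref{Spinbahn}---is decisive.
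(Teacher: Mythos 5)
Your overall strategy---reduce to the abstract theorem of \cite{HasLan23-2} with $d=1$, so that the symmetry condition in Hypothesis~\ref{HypII}(iii) is vacuous---is exactly the paper's route, and your treatment of the coupling function via Hypothesis~\ref{hyp:analytph} matches the verification of Hypothesis~\ref{HypI}. However, two concrete steps are missing. First, the abstract framework requires the operator in the normalized form $H_{\rm at}(s)+gW(s)+H_f$, with unit coefficient in front of $H_f$ and (implicitly, through the condition $|E_{\rm at}(s)-z|<1/2$ in Hypothesis~\ref{HypIII}) a spectral gap of order one. You propose to feed in $H_{\rm el}+W(\kappa,\theta)+e^{-\theta}H_f$ directly, which does not have this form. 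The paper first conjugates by the photon dilation $U_{\rm ph}(\tau)$ with $\tau=-\ln\check\delta_0$ and multiplies by $e^{\theta}\check{\delta}_0^{-1}$, arriving at $\check H_g(\underline\kappa,\theta)=\check H_{{\rm el},0}(\theta)+g\check W(\underline\kappa,\theta)+H_f$ as in \eqref{mainverification}, and then chooses $\check\delta_0=\delta_0$ to normalize the gap. This rescaling turns the atomic part into $e^{\theta}\check{\delta}_0^{-1}(H_{\rm el}-E_{{\rm el},0})$, a complex multiple of a self-adjoint operator, which matters for the next point.

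Second, and more importantly, you never address Hypothesis~\ref{HypIII}, the uniform bound $\sup_{(s,z)}\sup_{q\geq 0}\bigl\|(q+1)(H_{\rm at}(s)-z+q)^{-1}\overline P_{\rm at}(s)\bigr\|<\infty$. You assert that the ``main technical obstacle'' is only the weighted bound on $G_{\kappa,\theta}$, but this resolvent estimate is an independent and equally essential input: the paper devotes Proposition~\ref{veriatomIIIgs} to it (the point being that for $|\mathrm{Im}\,\theta|<\theta_0<\pi/2$ the spectrum of the dilated, reduced atomic operator stays in a sector, so that $\mathrm{Re}(e^{-\theta}q)\geq q\,e^{-\mathrm{Re}\,\theta}\cos\theta_0$ controls the $q\to\infty$ behaviour), followed by the short argument converting $e^{\theta}z$ back into $z$ via $e^{-\theta}D_{1/2}\subset D_{\rho}$ for $\mathrm{Re}\,\theta\geq-\ln 2\rho$. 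Without this, the application of Theorem~\ref{thm:symdegenSpinBoson} is not justified. Your derivation of (ii) and (iv) from unitary equivalence on the real axis plus the identity theorem is sound, though the paper obtains the ground-state identification more directly from Hypothesis~\ref{HypIV} together with part (iii) of the abstract theorem.
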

\begin{proof}
	See Section~\ref{proof:thmnondeg}.
\end{proof}

Next we consider excited states of the electronic Hamiltonian $H_{\rm el}$. Whereas the
eigenvalues upon coupling to the radiation field typically dissolve into the continuum,
the eigenvalues of analytic dilatation of the Hamiltonian may persist.
We shall refer to the corresponding eigenvectors as \textit{resonance states}.
For this we assume that the electronic Hamiltonian $H_{\rm el}$ has an eigenvalue $E_{\rm at}$ below the essential spectrum  with linearly independent eigenvectors $\varphi_{{\rm el},j}$, $j=1,...,d$. We will show the following property under various situations.

\begin{prope}(Resonances) \label{prop:gs:thmnondegres}
For all $\vartheta_0 \in (0, \pi/2)$  there exists a $\theta_{\rm r} > 0$ and a $\kappa_{\rm b} > 0$ for all   $\theta \in D_{\theta_{\rm r}}(i \vartheta_0)$ and $\kappa \in D_{\kappa_{\rm b}}$  the operator $H(\kappa,\theta)$ has linearly independent eigenvectors $\psi_{j,\kappa,\theta}$, $j=1,...,d$, with eigenvalue $E_{\kappa,\theta}$ such that on $ D_{\kappa_{\rm b}} \times   D_{\theta_{\rm r}}(i \vartheta_0)$
\begin{itemize}
\item[(i)] $(\kappa,\theta) \mapsto \psi_{j,\kappa,\theta}$ is analytic.
\item[(ii)]   $(\kappa,\theta) \mapsto E_{\kappa,\theta}$ is analytic and does not depend on $\theta$.
\item[(iii)] $E_{\kappa,\theta} = E_{\rm el} + O(\kappa^2)$,  $\psi_{\kappa,\theta} = \varphi_{{\rm el},j} \otimes \Omega  + O(\kappa)$.
\end{itemize}
\end{prope}

\begin{theorem}(Resonances) \label{gs:thmnondegres}
Suppose Hypothesis \ref{hypA}, \ref{hyp:kappa}, and \ref{hyp:analytph} hold.
Suppose  $H_{\rm el}$ has  a non-degenerate eigenvalue $E_{\rm el}$ below the essential spectrum  with eigenvector $\varphi_{{\rm el},1}$. Then Property \ref{prop:gs:thmnondegres}   holds  with $d=1$.
\end{theorem}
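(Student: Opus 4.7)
The plan is to derive Theorem \ref{gs:thmnondegres} as a direct application of the abstract resonance result of \cite{HasLan23-2} recalled in Appendix \ref{app:ResultsfromOtherPaper}. The whole work reduces to verifying the structural and quantitative hypotheses of that abstract theorem for the concrete dilated Hamiltonian \eqref{eq:anaextofHg} with $d=1$.

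The structural inputs come almost for free. By Hypothesis \ref{hypA}(iii) the projection $P_{\rm el} = |\varphi_{{\rm el},1}\rangle\langle\varphi_{{\rm el},1}|$ is one-dimensional and $E_{\rm el}$ is isolated below $\sigma_{\rm ess}(H_{\rm el})$. By Remark \ref{rem:anacont}, $(\kappa,\theta)\mapsto H(\kappa,\theta)$ is an analytic family of type (A) with common domain $\mathfrak{D}$. At $\kappa=0$ the operator decouples as $H_{\rm el}+e^{-\theta}H_f$, whose spectrum is the union of the rays $E+e^{-\theta}[0,\infty)$ over $E\in\sigma(H_{\rm el})$; for $\theta\in D_{\theta_{\rm r}}(i\vartheta_0)$ with $\vartheta_0\in(0,\pi/2)$ and $\theta_{\rm r}$ small, these rays are rotated away from the real axis by a definite angle, so $E_{\rm el}$ is separated from the rest of the spectrum by an open sector. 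This supplies the unperturbed spectral gap demanded by the abstract theorem.

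For the quantitative input the abstract theorem requires $(\kappa,\theta)\mapsto G_{\kappa,\theta}$ to be jointly analytic as an $\mathcal{L}(\HH_{\rm el})$-valued function of $(\boldsymbol{k},\lambda)$ in a weighted $L^2$-norm incorporating the infrared weight $|\boldsymbol{k}|^{-2-2\mu}$. With the photon dispersion $\omega(\boldsymbol{k})=|\boldsymbol{k}|$, the defining expression for $G_{\kappa,\theta}$ factors as
\begin{align*}
G_{\kappa,\theta}(\boldsymbol{k},\lambda) = e^{-\theta/2} K_\theta(\boldsymbol{k}) \, M_\kappa(\boldsymbol{k}/|\boldsymbol{k}|,\lambda),
\end{align*}
where $K_\theta(\boldsymbol{k}) = |\boldsymbol{k}|^{1/2}\rho(e^{-\theta}\boldsymbol{k})$ is analytic and uniformly $\|\cdot\|_\mu$-bounded on $D_{\theta_b}$ by Hypothesis \ref{hyp:analytph}, and $M_\kappa$ collects the atomic operators $\kappa^3\chi(x_j)x_j\cdot i\varepsilon(\boldsymbol{k}/|\boldsymbol{k}|,\lambda)$ and $\kappa^5 S_j\cdot i(\boldsymbol{k}/|\boldsymbol{k}|)\wedge\varepsilon(\boldsymbol{k}/|\boldsymbol{k}|,\lambda)$. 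By \eqref{eq:assumptionong}, $\chi(x_j)x_j$ is a bounded multiplication operator on $\HH_{\rm el}$; $S_j$ is bounded by the finite-dimensionality of spin; and the extra assumption $\varepsilon(\boldsymbol{k},\lambda)=\varepsilon(\boldsymbol{k}/|\boldsymbol{k}|,\lambda)$ imposed in the excerpt ensures $M_\kappa$ depends only on $\boldsymbol{k}/|\boldsymbol{k}|$. The required weighted bound on $G_{\kappa,\theta}$ therefore holds uniformly on $D_{\kappa_{\rm b}}\times D_{\theta_{\rm r}}(i\vartheta_0)$, with polynomial dependence on $\kappa$.

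The abstract theorem then produces the analytic eigenvalue $E_{\kappa,\theta}$ and eigenvector $\psi_{1,\kappa,\theta}$, giving (i), while the leading-order expansions in (iii) follow from first-order regular perturbation theory at $\kappa=0$. For (ii), $\theta$-independence is the standard Balslev--Combes argument: analytic continuation of the identity $H(\kappa,\theta+s) = U(s)H(\kappa,\theta)U(s)^{-1}$, which is manifest for real $s$, forces $E_{\kappa,\cdot}$ to be locally constant along real translates in $\theta$, hence constant on the connected disc by analyticity. The main obstacle I anticipate is not analytic but verificational: matching the concrete factorization above to the precise operator-valued coupling hypothesis of \cite{HasLan23-2}, in particular checking that $M_\kappa$ really takes values in $\mathcal{L}(\HH_{\rm el})$ rather than merely being $H_{\rm el}$-bounded, which uses the decay \eqref{eq:assumptionong} crucially.
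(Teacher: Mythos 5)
Your overall strategy --- reduce to the abstract theorem of \cite{HasLan23-2}, verify the coupling-function hypothesis via the factorization $G_{\kappa,\theta}=e^{-\theta/2}K_\theta M_\kappa$, and get $\theta$-independence by the Balslev--Combes continuation argument --- is the same as the paper's. The verification of Hypothesis \ref{HypI} in your third paragraph matches the paper's computation essentially line for line. However, there is a genuine gap at the step you dispose of with ``the rays are rotated away from the real axis \dots\ this supplies the unperturbed spectral gap demanded by the abstract theorem.'' Hypothesis \ref{HypIII} of \cite{HasLan23-2} is not a spectral-gap condition: it is the uniform resolvent estimate
\begin{equation*}
\sup_{(s,z)\in\mathcal{U}}\ \sup_{q\geq 0}\ \left\| \frac{q+1}{H_{\rm at}(s)-z+q}\,\overline{P}_{\rm at}(s)\right\| < \infty ,
\end{equation*}
uniform over the shift $q\geq 0$ coming from the field energy. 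For the ground state this is easy because all of $\sigma(H_{\rm el})\setminus\{E_{{\rm el},0}\}$ lies to the right, but for an excited eigenvalue $E_{{\rm el},j}$ there are eigenvalues $E_{{\rm el},k}<E_{{\rm el},j}$, and the shifted points $e^{\theta}\check{\delta}_j^{-1}(E_{{\rm el},k}-E_{{\rm el},j})+q$ sweep past the spectral parameter as $q$ ranges over $[0,\infty)$, missing it only by a margin proportional to $\sin(\mathrm{Im}\,\theta)$. This is exactly the content of Proposition \ref{veriatomIIIex}, whose proof requires splitting $\sigma(H_{\rm el})$ at $E_{{\rm el},j}$, restricting $z$ to $|z|<\rho\sin(\mathrm{Im}\,\theta)$, and treating small and large $q$ separately. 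Your proposal never engages with the $q$-uniformity, which is the technical heart of the resonance case and the only place where it genuinely differs from the ground-state theorem.

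A secondary, more bookkeeping-level omission: you apply the abstract theorem directly to $H(\kappa,\theta)=H_{\rm el}+W(\kappa,\theta)+e^{-\theta}H_f$, but the framework of \cite{HasLan23-2} assumes the operator has the form $H_{\rm at}(s)+gW(s)+H_f$ with unit coefficient on $H_f$ and with the spectral parameter normalized to $D_{1/2}$. The paper first conjugates by $U_{\rm ph}(\tau)$ with $\tau=-\ln\check{\delta}_j$ and multiplies by $e^{\theta}\check{\delta}_j^{-1}$ to produce $\check{H}_{j;g}(\underline{\kappa},\theta)=\check{H}_{{\rm el},j}(\theta)+g\check{W}_j(\underline{\kappa},\theta)+H_f$ (and works with two independent complex couplings $\underline{\kappa}=(\kappa_1,\kappa_2)$). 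Without this rescaling the hypotheses of the cited theorem are not literally satisfied, and the passage from bounds on $(\check{H}_{{\rm el},j}(\theta)-e^{\theta}z+q)^{-1}$ to bounds for $z\in D_{1/2}$ (which uses $\mathrm{Re}\,\theta\geq-\ln 2$) would have no place to live.
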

\begin{proof}
	See Section~\ref{proof:thmnondegres}
\end{proof}

\section{Rotational Symmetry}\label{sec:rotsymm}
We introduce the  so called canonical  double covering  homomorphism
$$
\pi  : SU(2) \to SO(3)  ,  \quad U \mapsto \pi(U) ,
$$
where $\pi(U)_{l,j}$ is the unique element of $SO(3)$ such that
$$
U \sigma_j U^* =  \sum_{l=1}^3 \pi(U)_{l,j} \sigma_l  ,
$$
with  $\sigma_1, \sigma_2, \sigma_3$ denoting  the Pauli matrices.
On the one electron Hilbert space $L^2(\R^3;\mathcal{D}_s)$ we define
\begin{align} \label{defofsu2}
( \RR_{\rm el}(U) \psi)(x) = \mathcal{D}_s(U) \psi(\pi(U)^{-1} x) ,
\end{align}
where $\mathcal{D}_s$ denotes the  representation of $SU(2)$ with spin $s$.

By $F$ we shall denote the Fourier  transform and as usual by $F^{-1}$ its inverse.
We introduce the space of divergence free vector fields
$$
\mathfrak{v} := \{ v \in L^2(\R^3 ; \C^3) :   \sum_{j=1}^3 \boldsymbol{k}_j  (F v_j)(\boldsymbol{k}) = 0 \ , \ \text{a.e.} \ \boldsymbol{k}  \in \R^3  \} .
$$
Given a specific measurable choice for the polarization vectors we obtain a canonical identification with the one photon Hilbert space $\hh = L^2(\R^3 \times \Z_2)$.
This is the content of the following lemma.

\begin{lemma}\label{lem:canIdentif}  For  $\varepsilon$  as in  \eqref{eq:poldefined} the map
\begin{align*}
	\phi_\varepsilon : \hh \to \mathfrak{v} , \quad h \mapsto  \left( F^{-1} \sum_{\lambda=1,2} \varepsilon_j( \cdot , \lambda ) h(\cdot,\lambda) \right)_{j=1,2,3}  ,
\end{align*}
is unitary and its inverse is
$
(\phi_\varepsilon^{-1} v )(\boldsymbol{k},\lambda) = \varepsilon(\boldsymbol{k}, \lambda)  \cdot ( F v)(\boldsymbol{k})
$ for almost all $(\boldsymbol{k},\lambda) \in \R^3 \times \{1,2\}$.
\end{lemma}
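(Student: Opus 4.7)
The plan is to factor $\phi_\varepsilon$ as the composition of the inverse Fourier transform on $L^2(\R^3;\C^3)$ with a pointwise (in $\boldsymbol{k}$) linear map. Since $F^{-1}$ is known to be unitary, it suffices to show that the pointwise map
\[
T : \hh \to L^2(\R^3;\C^3),\qquad (Th)(\boldsymbol{k}) := \sum_{\lambda=1,2} \varepsilon(\boldsymbol{k},\lambda)\, h(\boldsymbol{k},\lambda),
\]
is unitary onto the subspace $\mathfrak{v}_F := \{ u \in L^2(\R^3;\C^3) : \boldsymbol{k}\cdot u(\boldsymbol{k}) = 0 \text{ a.e.}\}$, because Plancherel together with the definition of $\mathfrak{v}$ identifies $\mathfrak{v}$ with $\mathfrak{v}_F$ via $F$.

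The first step is to observe that, because $\varepsilon$ takes values in $\R^3$, the relations $\varepsilon(\boldsymbol{k},\lambda)\cdot\varepsilon(\boldsymbol{k},\mu)=\delta_{\lambda,\mu}$ and $\boldsymbol{k}\cdot\varepsilon(\boldsymbol{k},\lambda)=0$ say that $\{\varepsilon(\boldsymbol{k},1),\varepsilon(\boldsymbol{k},2)\}$ is an $\R$-orthonormal basis of $\boldsymbol{k}^\perp\subset\R^3$ for a.e.\ $\boldsymbol{k}$, and hence a $\C$-orthonormal basis of the complexified plane $\boldsymbol{k}^\perp\otimes\C\subset\C^3$. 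From this I can compute the pointwise norm and inner product: for each $\boldsymbol{k}$,
\[
\|(Th)(\boldsymbol{k})\|_{\C^3}^2 = \sum_{\lambda,\mu}\overline{h(\boldsymbol{k},\mu)}\,h(\boldsymbol{k},\lambda)\,\varepsilon(\boldsymbol{k},\lambda)\cdot\varepsilon(\boldsymbol{k},\mu) = \sum_{\lambda=1,2} |h(\boldsymbol{k},\lambda)|^2,
\]
so that integration over $\boldsymbol{k}$ gives $\|Th\|_{L^2(\R^3;\C^3)} = \|h\|_\hh$, i.e.\ $T$ is an isometry. Moreover $\boldsymbol{k}\cdot(Th)(\boldsymbol{k})=\sum_\lambda h(\boldsymbol{k},\lambda)\,\boldsymbol{k}\cdot\varepsilon(\boldsymbol{k},\lambda)=0$, so $\ran T \subseteq \mathfrak{v}_F$.

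Next I verify surjectivity by constructing the explicit inverse. Define $S : \mathfrak{v}_F \to \hh$ by $(Su)(\boldsymbol{k},\lambda):=\varepsilon(\boldsymbol{k},\lambda)\cdot u(\boldsymbol{k})$. For $u\in\mathfrak{v}_F$, $u(\boldsymbol{k})\in\boldsymbol{k}^\perp\otimes\C$ a.e., so expanding $u(\boldsymbol{k})$ in the orthonormal basis $\{\varepsilon(\boldsymbol{k},\lambda)\}_{\lambda=1,2}$ of that plane gives
\[
u(\boldsymbol{k}) = \sum_{\lambda=1,2} \bigl(\varepsilon(\boldsymbol{k},\lambda)\cdot u(\boldsymbol{k})\bigr)\varepsilon(\boldsymbol{k},\lambda) = (T Su)(\boldsymbol{k}),
\]
and conversely $(STh)(\boldsymbol{k},\lambda)=\varepsilon(\boldsymbol{k},\lambda)\cdot\sum_\mu\varepsilon(\boldsymbol{k},\mu)h(\boldsymbol{k},\mu)=h(\boldsymbol{k},\lambda)$ by orthonormality. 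Hence $T$ is a unitary isomorphism $\hh\to\mathfrak{v}_F$ with inverse $S$. Composing with the unitary $F^{-1}:L^2(\R^3;\C^3)\to L^2(\R^3;\C^3)$, which restricts to a unitary $\mathfrak{v}_F\to\mathfrak{v}$ by the Fourier characterization of the divergence, yields that $\phi_\varepsilon=F^{-1}\circ T$ is unitary and $\phi_\varepsilon^{-1}=S\circ F$, which is exactly the formula claimed.

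There is essentially no hard step; the only thing to be careful about is that $\varepsilon$ is only measurable (no global continuous choice of polarization exists, by a hairy-ball argument), so all computations must be done pointwise a.e.\ in $\boldsymbol{k}$ and then integrated, and one must check that the pointwise orthogonality relations suffice to justify the basis expansion of $u(\boldsymbol{k})$ on the measurable set $\R^3\setminus\{0\}$, which has full measure.
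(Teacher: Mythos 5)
Your proof is correct and follows essentially the same route as the paper: isometry from the pointwise orthonormality of the polarization vectors, range contained in the divergence-free subspace from transversality, and surjectivity plus the inverse formula from the fact that $\{\varepsilon(\boldsymbol{k},1),\varepsilon(\boldsymbol{k},2)\}$ spans $\boldsymbol{k}^\perp\otimes\C$ (the paper phrases this via the completeness relation $\sum_\lambda \varepsilon_j(\boldsymbol{k},\lambda)\varepsilon_l(\boldsymbol{k},\lambda)=\delta_{j,l}-|\boldsymbol{k}|^{-2}\boldsymbol{k}_j\boldsymbol{k}_l$, which is the same fact). Your explicit factorization $\phi_\varepsilon=F^{-1}\circ T$ and the remark about measurability are fine refinements but not a different argument.
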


\begin{proof}
The lemma  follows from a straight forward calculation using the properties of the polarization vectors.
Let $h \in \hh$.
Clearly, $\phi_\varepsilon$ is well defined, since $\boldsymbol{k} \cdot F(\phi_\varepsilon(h))(\boldsymbol{k}) = \boldsymbol{k} \cdot \sum_{\lambda=1,2}  \varepsilon(\boldsymbol{k},\lambda) h(\boldsymbol{k},\lambda) = 0$.
The map is an isometry, since
\begin{align*}
\| \phi_\varepsilon h \|^2  & =  \int d^3 \boldsymbol{k} \sum_{j=1}^3  \sum_{\lambda,\lambda'=1,2} \overline{ \varepsilon_j(\boldsymbol{k},\lambda) h(\boldsymbol{k},\lambda) }  \epsilon_j(\boldsymbol{k},\lambda') h(\boldsymbol{k},\lambda') \\
& = \int d^3 \boldsymbol{k}    \sum_{\lambda,\lambda'=1,2} \delta_{\lambda,\lambda'}  \overline{ h(\boldsymbol{k},\lambda) }   h(\boldsymbol{k},\lambda') = \| h \|^2 .
\end{align*}
Furthermore for $v \in \mathfrak{v}$  let $(\psi_\epsilon v)(\boldsymbol{k},\lambda) = \varepsilon(\boldsymbol{k},\lambda) \cdot (  Fv)(\boldsymbol{k})$.
Then
\begin{align*}
F (\phi_\varepsilon (\psi_\epsilon v)_j)(\boldsymbol{k})    & =  \sum_{\lambda=1,2} \varepsilon_j( \boldsymbol{k} , \lambda ) (\psi_\epsilon v)( \boldsymbol{k} ,\lambda)  \\
& =  \sum_{\lambda=1,2}  \varepsilon_j( \boldsymbol{k} , \lambda ) \sum_{l=1}^3  \varepsilon_l(\boldsymbol{k},\lambda) \cdot (  Fv_l)(\boldsymbol{k}) \\
& = Fv_j(\boldsymbol{k})  ,
\end{align*}
where we used that  $ \sum_{\lambda=1,2}  \varepsilon_j( \boldsymbol{k} , \lambda )   \varepsilon_l(\boldsymbol{k},\lambda)  = \delta_{j,l} - |\boldsymbol{k}|^{-2} \boldsymbol{k}_j \boldsymbol{k}_l$ and that $v$ is divergence free. This shows the surjectivity of $\phi_\varepsilon$ and that its inverse is given as claimed.
\end{proof}

Similar as for the one electron Hilbert space  we define for $v \in \mathfrak{v}$ the transformation
\begin{align*}
	( \RR_{\mathfrak{v}}(R) v)(x) = R v(R^{-1} x) ,
\end{align*}
where $R \in SO(3)$. Using the canonical identification from Lemma~\ref{lem:canIdentif} we define the mapping $\RR_\hh(R) = \phi_\varepsilon^{-1}  \RR_{\mathfrak{v}}(R)  \phi_\varepsilon$. Hence on $\hh$ we have the unitary transformation
\begin{align} \label{eq:defofrotrep}
( \RR_\hh(R) \varphi)(\boldsymbol{k},\lambda) & =  \sum_{\mu=1,2} (\varepsilon(\boldsymbol{k},\lambda)  \cdot R \varepsilon(R^{-1} \boldsymbol{k} , \mu) \varphi(R^{-1} \boldsymbol{k}, \mu)) ,
\end{align}
which depends on the choice of the polarization vectors. It is straight
forward to verify that $ \RR_\hh$ is a unitary representation of $SO(3)$ on
$\hh$. Thus for $U \in SU(2)$ we define the unitary mappings
\begin{align*}
 \RR_{\rm el}(U) & =  \RR_{\rm el}(U)^{\otimes N},  \\
  \RR_{\rm f}(U) & =    \Gamma( \RR_\hh(\pi(U))) , \\
 \RR(U) &=  \RR_{\rm el}(U)  \otimes  \RR_{\rm f}(U)
\end{align*}
on the Hilbert space $\HH_{\rm el}$, $\FF$ and
$\HH$. This defines a representation of $SU(2)$ on the Hilbert space $\HH$.

That this representation
has indeed the properties of rotation  symmetry can be seen in the following lemma,
which is straight forward to verify, for details we refer the reader for example to \cite{HasLan23-1}.

\begin{lemma}  \label{lem:propoftfrot}
The following holds for all $U \in SU(2)$.
\begin{itemize}
\item[(a)]  The operators  $\mathcal{R}_{\rm el}(U)$,   $\mathcal{R}_{\rm f}(U)$,
$\mathcal{R}(U)$
 are  unitary operators  on $\HH_{\rm el}$, $\FF$ and $\HH$, respectively.
\item[(b)]   $\mathcal{R}(U)  x_j    \mathcal{R}(U)^*     =  (\pi(U)^{-1} x)_j$, \,
$\mathcal{R}(U)  p_j    \mathcal{R}(U)^*     =  (\pi(U)^{-1} p)_j$, \, $\mathcal{R}(U) S_j \mathcal{R}(U)^* = \pi( U )^{-1} S_j $.
\item[(c)] $  \mathcal{R}(U)  H_{\rm f}   \mathcal{R}(U)^*   =  H_{\rm f}$.
\item[(d)]
If  $\rho$ is rotationally invariant, then
\begin{align*}
%\text{(b)} & \quad
& \mathcal{R}(U)  E(0)   \mathcal{R}(U)^*     =  \pi(U)^{-1} E(0)  ,  %\label{eq:timete} \\  %\\
%\text{(c)} & \quad
& & \mathcal{R}(U)  B(0)   \mathcal{R}(U)^*     =  \pi(U)^{-1} B(0) .  %\label{eq:timetb}
%\\
%& \mathcal{T}_{\rm f} H_f  \mathcal{T}_{\rm f}^*  = H_f
\end{align*}
\item[(e)]  $\RR_{\rm f}(U)$ leaves the Fock vacuum as well as the one particle sector invariant.
\item[(f)]
For all $\tau\in\R$ we have $\mathcal{R}_{\rm f}(U)  U_{\rm ph}(\tau)= U_{\rm ph}(\tau) \mathcal{R}_{\rm f}(U) $.
\end{itemize}
\end{lemma}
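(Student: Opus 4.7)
The proof is essentially a matter of chasing definitions through the second-quantization functor $\Gamma$ and the identification of $\hh$ with divergence-free vector fields from Lemma~\ref{lem:canIdentif}. I expect parts (a), (c), (e), and (f) to be formal consequences of one-particle facts, while (b) is a direct calculation and (d) is the step that actually requires care.

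For (a) and (e), unitarity of $\mathcal{R}_{\rm el}(U)$ on $L^2(\R^3;\mathcal{D}_s)$ follows because $\psi \mapsto \mathcal{D}_s(U)\psi(\pi(U)^{-1}\cdot)$ is the composition of the unitary spin rotation with a unitary orthogonal change of variables; tensoring $N$ copies gives the full $\mathcal{R}_{\rm el}(U)$. Unitarity of $\mathcal{R}_\hh(\pi(U))$ follows via Lemma~\ref{lem:canIdentif}, since on $\mathfrak{v}$ the map $v \mapsto R v(R^{-1}\cdot)$ is manifestly unitary for orthogonal $R$; then $\mathcal{R}_{\rm f}(U) = \Gamma(\mathcal{R}_\hh(\pi(U)))$ is unitary by functoriality of $\Gamma$, fixes $\Omega$, and preserves each $n$-particle sector, which is (e). For (c) and (f) I would argue at the one-particle level: the dispersion $\omega(\boldsymbol{k}) = |\boldsymbol{k}|$ is rotation invariant so multiplication by $\omega$ commutes with $\mathcal{R}_\hh(\pi(U))$, and the one-particle dilation $u_{\rm ph}(\tau)$ is a pure scaling in $\boldsymbol{k}$ which commutes with any rotation; both commutators lift through $d\Gamma$ and $\Gamma$ to yield the claims. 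Part (b) I would verify on smooth dense vectors: the relations for $x_j$ and $p_j$ follow by change of variables because $\mathcal{R}_{\rm el}(U)$ acts as pullback along $\pi(U)^{-1}$ and $\mathcal{D}_s(U)$ is scalar with respect to position and momentum operators, while the spin identity follows from lifting the defining relation $U\sigma_j U^* = \sum_l \pi(U)_{l,j}\sigma_l$ to the representation $\mathcal{D}_s$ and using orthogonality of $\pi(U)$.

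The main step, and the main obstacle, is (d). I would substitute the integral representations of $E(0)$ and $B(0)$, conjugate the creation and annihilation operators using \eqref{eq:defofrotrep}, and then change integration variable $\boldsymbol{k} \mapsto \pi(U)\boldsymbol{k}$; rotation invariance of $\rho$ and of $|\boldsymbol{k}|^{\pm 1/2}$, together with invariance of Lebesgue measure under $SO(3)$, leaves the scalar factors of the integrand unchanged. The auxiliary polarization sum produced by formula \eqref{eq:defofrotrep} then collapses, by orthonormality and transversality of $\varepsilon(\cdot,\mu)$, into a single polarization factor $\pi(U)\varepsilon(\boldsymbol{k},\lambda)$. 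Using that the cross product intertwines with proper rotations, $\pi(U)(a\wedge b) = \pi(U)a \wedge \pi(U)b$ since $\det\pi(U) = 1$, the vector-valued integrands for both $E(0)$ and $B(0)$ acquire an overall factor of $\pi(U)$, and transposing to the opposite side using $\pi(U)^T = \pi(U)^{-1}$ produces the claimed $\pi(U)^{-1}E(0)$ and $\pi(U)^{-1}B(0)$. The bookkeeping of polarization indices, transversality, and cross-product covariance is where care is needed; the rest of the lemma then reduces to standard facts about $\Gamma$.
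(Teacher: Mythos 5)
Your proposal is correct. Note that the paper gives no argument of its own here: parts (a)--(d) are delegated to \cite{HasLan23-1} and (e)--(f) are declared immediate from the definitions, so your sketch is in effect supplying the standard computation that the citation stands in for rather than diverging from it. The one step with genuine content is (d), and you identify the right mechanism: after conjugating the creation/annihilation operators and changing variables $\boldsymbol{k}\mapsto\pi(U)\boldsymbol{k}$, the polarization sum produced by \eqref{eq:defofrotrep} is governed by the completeness relation $\sum_{\mu}\varepsilon_j(\boldsymbol{q},\mu)\varepsilon_l(\boldsymbol{q},\mu)=\delta_{jl}-|\boldsymbol{q}|^{-2}\boldsymbol{q}_j\boldsymbol{q}_l$, and the longitudinal correction term is killed by transversality of $\varepsilon(\boldsymbol{k},\lambda)$ in the case of $E(0)$ and by the fact that $\boldsymbol{q}\wedge$ annihilates the component along $\boldsymbol{q}$ in the case of $B(0)$; combined with $\pi(U)(a\wedge b)=\pi(U)a\wedge\pi(U)b$ (valid since $\det\pi(U)=1$) and $\pi(U)^{T}=\pi(U)^{-1}$, this yields exactly the stated covariance. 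The remaining parts are, as you say, formal consequences of one-particle facts and the functoriality of $\Gamma$ (for (a) one should also note that $\mathcal{R}_{\rm el}(U)^{\otimes N}$ commutes with permutations and hence preserves the antisymmetric subspace, but this is routine).
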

\begin{proof}
 For (a) to (d) we refer the reader to \cite{HasLan23-1}. (e) and (f) follow directly from the definition.
\end{proof}

\begin{hyp} \label{hyp:rsym}   The following symmetry properties hold.
\begin{itemize}
\item[(i)] $V_{\rm el}$ commutes with $ \RR_{\rm el}(U)$ for all $U \in SU(2)$.
\item[(ii)]  The spacial cutoff $\chi$ is rotationally invariant.
\item[(iii)] The  photon momentum cutoff  $\rho$ is  rotationally invariant.
\end{itemize}
\end{hyp}

\begin{remark}{\rm
Clearly the Coulomb potential  $V(x_1,...,x_N) = \sum_{i > j} \big(  -\frac{Z}{|x_j|} + \frac{1}{|x_i-x_j|} \big)$  for any $Z \in \R$
satisfies  Hypothesis \ref{hyp:rsym} (i). }
\end{remark}

\begin{theorem}(Ground State) \label{gs:thmdegrot}
Suppose Hypothesis \ref{hypA},   \ref{hyp:kappa}, and \ref{hyp:rsym} hold.
Let  $E_{{\rm el}}$ denote the ground state energy  of     $H_{\rm el}$.
Suppose that the ground state space has dimension $d$ and   $\mathcal{R}_{\rm el} $  acts irreducible on that space. 
Then Property  \ref{prop:gs:thmnondeg-0} holds. Furthermore, if  in addition \ref{hyp:analytph} holds, then Property \ref{prope:gs:thmnondeg} holds.
\end{theorem}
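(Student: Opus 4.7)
The plan is to reduce the theorem to the abstract result of \cite{HasLan23-2} (recalled in Appendix \ref{app:ResultsfromOtherPaper}), which yields analytic persistence of a degenerate eigenvalue and its eigenprojection provided the degeneracy is protected by a symmetry group acting irreducibly on the unperturbed eigenspace. Essentially all the work lies in verifying the hypotheses of that abstract theorem for the concrete Hamiltonian $H(\kappa)=H_{\rm el}+W(\kappa)+H_f$ of Section \ref{sec:model}, so my proof will proceed by checking these hypotheses one by one and then invoking the black box. The analytic and spectral hypotheses (self-adjointness, infinitesimal $H_f$-boundedness of $W(\kappa)$, existence of a spectral gap above $E_{\rm el}$, analyticity of $\kappa\mapsto W(\kappa)$) follow from Hypotheses \ref{hypA} and \ref{hyp:kappa} together with the explicit form of $G_\kappa$ and the standard field-operator bounds recalled in Remark \ref{rem:anacont}; I will collect these in an initial bookkeeping step.

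The conceptual heart is the symmetry verification. First I will show that for every $U\in SU(2)$ the unitary $\mathcal{R}(U)$ commutes with $H(\kappa)$. Lemma \ref{lem:propoftfrot}(c) handles $H_f$; part (b) combined with Hypothesis \ref{hyp:rsym}(i) handles $H_{\rm el}$ (including the spin-orbit term, which is built out of $\nu_j(x_j)\wedge p_j\cdot S_j$ and is rotation-invariant provided the $\nu_j$ are radial, which is implicit in Hypothesis \ref{hyp:rsym} when $V_{\rm el}$ is required to commute with $\mathcal{R}_{\rm el}$); and parts (b) and (d) together with Hypothesis \ref{hyp:rsym}(ii)--(iii) give invariance of $\sum_j \chi(x_j)x_j\cdot E(0)$ and $\sum_j S_j\cdot B(0)$ because the dot products of transformed vectors are invariant. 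Next, since $\mathcal{R}_{\rm f}(U)$ leaves $\Omega$ invariant (Lemma \ref{lem:propoftfrot}(e)), the eigenspace of $H_{\rm el}+H_f$ at $E_{\rm el}$ is $(\text{ground state space of }H_{\rm el})\otimes\C\Omega$, on which $\mathcal{R}(U)=\mathcal{R}_{\rm el}(U)\otimes 1$ acts, and by assumption this action is irreducible. This is precisely the symmetry-protection input required by \cite{HasLan23-2}.

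With these ingredients assembled, the abstract theorem produces an analytic family of eigenvectors $\psi_{\kappa,j}$ and an analytic eigenvalue $E_\kappa$ on a disk $D_{\kappa_b}$, satisfying items (i)--(iii) of Property \ref{prop:gs:thmnondeg-0}; item (iv) follows because for real $\kappa$ the perturbation is self-adjoint, the analytic eigenvalue is real-valued and stays within the spectral gap by continuity, hence remains the infimum of the spectrum for small $\kappa$, while the rest of the spectrum stays bounded away by the standard Kato--Rellich gap argument. For the second half of the theorem, under the additional Hypothesis \ref{hyp:analytph} I extend to the dilated Hamiltonian $H(\kappa,\theta)$ of \eqref{eq:anaextofHg}. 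By Lemma \ref{lem:propoftfrot}(f), $\mathcal{R}_{\rm f}(U)$ commutes with $U_{\rm ph}(\tau)$ for real $\tau$; since both families depend analytically on their respective parameters on a common domain, the commutation extends by analytic continuation to $\theta\in D_{\theta_b}$, so that $\mathcal{R}(U)$ is still a symmetry of $H(\kappa,\theta)$. The same abstract theorem, applied to the two-parameter family, now yields Property \ref{prope:gs:thmnondeg}; independence of $E_{\kappa,\theta}$ on $\theta$ comes from the fact that for real $\theta$ the operators $H(\kappa,\theta)$ are unitarily equivalent via $U(\theta)$, giving a constant real-analytic function, which then extends to a constant holomorphic function on $D_{\theta_b}$.

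The main obstacle I anticipate is not any single algebraic check but rather the care needed in matching the concrete functional-analytic setup (the unbounded dipole factor $\chi(x_j)x_j$, the $H_f^{1/2}$-type bounds on $W(\kappa,\theta)$, and the requirement that the perturbation be relatively bounded with a prescribed $|\boldsymbol k|^{-1-\mu}$-weighted norm) with the precise normalizations assumed in the general framework of \cite{HasLan23-2}. Concretely, I will have to check that $G_{\kappa,\theta}$ belongs to the weighted $L^2$-space controlled by $\|\cdot\|_\mu$ uniformly on compact subsets of $D_{\kappa_b}\times D_{\theta_b}$, which reduces to Hypotheses \ref{hyp:kappa} and \ref{hyp:analytph} together with infinitesimal $(-\Delta)$-boundedness of $\chi(x)x\cdot v$ (for fixed vector $v$) on the electronic eigenspace—this latter bound is immediate because eigenvectors of $H_{\rm el}$ lie in $D(-\Delta)$. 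Once this bookkeeping is done, the theorem follows by direct citation.
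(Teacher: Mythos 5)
Your overall strategy coincides with the paper's: rescale/identify the operator with the framework of \cite{HasLan23-2}, verify Hypotheses \ref{HypI}--\ref{HypIV}, and observe that the only genuinely new item relative to the non-degenerate case (Theorem \ref{gs:thmnondeg}) is Hypothesis \ref{HypII}(iii), which you verify exactly as the paper does: $\mathcal{S}=\{\mathcal{R}(U):U\in SU(2)\}$, invariance of $H_{\rm el}$, $H_f$ and $W(\kappa,\theta)$ via Lemma \ref{lem:propoftfrot} and Hypothesis \ref{hyp:rsym}, extension to complex $\theta$ by unique analytic continuation, commutation with $U_{\rm ph}(\tau)$, preservation of the vacuum and one-particle sector, and irreducibility of $\mathcal{S}_1=\{\mathcal{R}_{\rm el}(U)\}$ on the ground state space by assumption. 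That part is fine.

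There is, however, one step that would fail as written: your justification of item (iv) of Property \ref{prop:gs:thmnondeg-0}. You argue that for real $\kappa$ the eigenvalue ``stays within the spectral gap by continuity'' and that ``the rest of the spectrum stays bounded away by the standard Kato--Rellich gap argument.'' There is no such gap: the photons are massless, so for every $\kappa$ (including $\kappa=0$) the ground state energy of $H(\kappa)$ sits at the bottom of a branch of continuous spectrum of $H_{\rm el}+H_f$ --- this is precisely why the renormalization machinery of \cite{HasLan23-2} is needed at all, and why Hypothesis \ref{HypIII} is a weighted resolvent estimate rather than a gap condition. The paper obtains (iv) instead by verifying Hypothesis \ref{HypIV} (the reality structure $G_{1,s}=G_{2,s}$, $H_{\rm at}(s)^*=H_{\rm at}(\overline{s})$, and $E_{\rm at}(s_0)=\inf\sigma(H_{\rm at}(s_0))$, which hold here with $X=\C\times\C$) and then invoking item (iii) of Theorem \ref{thm:symdegenSpinBoson}. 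Your proposal lists Hypothesis \ref{HypIV} nowhere, so you should replace the gap argument by this verification. A second, smaller caveat: the quantitative inputs you file under ``bookkeeping'' include the uniform bound $\sup_q\|(q+1)(\check{H}_{{\rm el},0}(\theta)-e^{\theta}z+q)^{-1}\overline{P}_{{\rm el},0}\|<\infty$ of Hypothesis \ref{HypIII}, which the paper establishes through a separate computation (Proposition \ref{veriatomIIIgs}) after rescaling so that the atomic gap is normalized; this is more than relative boundedness of $W$ and should be checked explicitly rather than subsumed into the field-operator estimates.
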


\begin{proof}
	See Section~\ref{proof:gs:thmdegrot}
\end{proof}

Next we consider resonance states.

\begin{theorem}(Resonance  State) \label{gs:thmrotationres}
Suppose Hypothesis \ref{hypA},  \ref{hyp:kappa}, \ref{hyp:analytph} and \ref{hyp:rsym} hold.
Let  $E_{{\rm el}}$ be  an eigenvalue of     $H_{\rm el}$, below the essential spectrum.
Assume that its eigenspace has dimension $d$ and   $\mathcal{R}_{\rm el} $  acts irreducibly on that  eigenspace having  $\varphi_{{\rm el},j}$, $j=1,...,d$  as a basis of  eigenvectors. Then Property \ref{prop:gs:thmnondegres}   holds.
\end{theorem}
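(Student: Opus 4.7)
The plan is to reduce this to the abstract resonance theorem of \cite{HasLan23-2} (recalled in Appendix \ref{app:ResultsfromOtherPaper}), by verifying that the rotational representation $\mathcal{R}$ implements a symmetry of the dilated Hamiltonian that acts irreducibly on the atomic eigenspace. Since the ground-state theorem (Theorem \ref{gs:thmdegrot}) was already obtained from the same abstract framework, the work here lies in checking that the added ingredients needed for the \emph{resonance} version — analyticity of the dilated family and compatibility of the symmetry with the dilation — are in place.

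First I would verify the symmetry. By Hypothesis \ref{hyp:rsym}(i), $V_{\rm el}$ commutes with $\mathcal{R}_{\rm el}(U)$, and since $-\Delta$ and the spin-orbit term $I_{\rm SB}$ transform in the standard covariant way under simultaneous rotation of $x_j$, $p_j$ and $S_j$ (Lemma \ref{lem:propoftfrot}(b)), the electronic Hamiltonian $H_{\rm el}$ commutes with $\mathcal{R}_{\rm el}(U)$ for every $U \in SU(2)$. Combined with Lemma \ref{lem:propoftfrot}(c), the free operator $H_{\rm el} + H_f$ commutes with $\mathcal{R}(U)$. For the interaction, the spacial cutoff $\chi$ and the photon cutoff $\rho$ are rotationally invariant by Hypothesis \ref{hyp:rsym}(ii)--(iii), so Lemma \ref{lem:propoftfrot}(b),(d) yields, summand by summand,
\begin{align*}
\mathcal{R}(U)\bigl(\chi(x_j)x_j\cdot E(0)\bigr)\mathcal{R}(U)^* &= \chi(x_j)\,(\pi(U)^{-1}x_j)\cdot(\pi(U)^{-1}E(0)) = \chi(x_j)x_j\cdot E(0),\\
\mathcal{R}(U)\bigl(S_j\cdot B(0)\bigr)\mathcal{R}(U)^* &= (\pi(U)^{-1}S_j)\cdot(\pi(U)^{-1}B(0)) = S_j\cdot B(0),
\end{align*}
so $W(\kappa)$ and hence $H(\kappa)$ commutes with $\mathcal{R}(U)$. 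Using Lemma \ref{lem:propoftfrot}(f), which says $\mathcal{R}_{\rm f}(U)$ commutes with the photon dilation $U_{\rm ph}(\tau)$ for real $\tau$, the commutation extends by analytic continuation in $\theta$ to $H(\kappa,\theta)$ on $D_{\theta_b}$: the identity $\mathcal{R}(U)H(\kappa,\theta) = H(\kappa,\theta)\mathcal{R}(U)$ on the common core $\mathfrak{D}$ holds for real $\theta$ and both sides are analytic in $\theta$ by Remark \ref{rem:anacont}.

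Second I would install the hypotheses of the abstract resonance theorem. The unperturbed Hamiltonian is $H_{\rm el}\otimes\one + \one\otimes H_f$; the atomic eigenvalue $E_{\rm el}$ is isolated below the essential spectrum of $H_{\rm el}$ (given), and its eigenspace $V_{\rm el} = \mathrm{span}\{\varphi_{{\rm el},j}\}_{j=1}^d$ is, by assumption, an irreducible $\mathcal{R}_{\rm el}$-module. Tensoring with $\Omega$ gives an eigenspace of $H_{\rm el}+H_f$ at energy $E_{\rm el}$ which is invariant and irreducible under $\mathcal{R}(U) = \mathcal{R}_{\rm el}(U)\otimes\mathcal{R}_{\rm f}(U)$ because $\mathcal{R}_{\rm f}(U)\Omega=\Omega$ by Lemma \ref{lem:propoftfrot}(e). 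The interaction $W(\kappa,\theta) = a(G_{\bar\kappa,\bar\theta}) + a^*(G_{\kappa,\theta})$ is analytic in $(\kappa,\theta)$ on $D_{\kappa_b}\times D_{\theta_b}$ with the appropriate infrared weighted norm bounds, by Hypotheses \ref{hyp:kappa} and \ref{hyp:analytph} together with the fact that $\chi(x_j)x_j$ and $S_j$ are bounded relative to $H_{\rm el}$ (and $S_j$ is bounded). These are exactly the analyticity and boundedness hypotheses required for the resonance theorem of \cite{HasLan23-2}.

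The main technical obstacle I expect is the non-degeneracy/Fermi-Golden-Rule-type condition needed by the abstract theorem to control the eigenvalue inside the tilted continuum: for resonances one must ensure that after the rotation $\mathrm{Im}\,\theta = \vartheta_0$ the effective operator generated by second-order perturbation preserves the symmetry and acts as a scalar on the irreducible subspace $V_{\rm el}\otimes\Omega$. The symmetry argument is what makes this work: because $\mathcal{R}$ commutes with $H(\kappa,\theta)$ and with the Feshbach projection onto $V_{\rm el}\otimes\Omega$ (which is $\mathcal{R}$-invariant), the second-order effective operator commutes with the irreducible representation $\mathcal{R}_{\rm el}\!\restriction_{V_{\rm el}}$, hence by Schur's lemma it is a scalar multiple of the identity on $V_{\rm el}$; the abstract theorem of \cite{HasLan23-2} then yields the $d$ analytic eigenvectors $\psi_{j,\kappa,\theta}$, the $\theta$-independent analytic eigenvalue $E_{\kappa,\theta}$, and the expansions $E_{\kappa,\theta} = E_{\rm el} + O(\kappa^2)$, $\psi_{j,\kappa,\theta} = \varphi_{{\rm el},j}\otimes\Omega + O(\kappa)$ on $D_{\kappa_{\rm b}}\times D_{\theta_{\rm r}}(i\vartheta_0)$, which is exactly Property \ref{prop:gs:thmnondegres}.
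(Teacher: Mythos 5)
Your overall route is the paper's: reduce to the abstract theorem of \cite{HasLan23-2} by verifying its hypotheses, with the symmetry verification (Hypothesis \ref{HypII}\,(iii)) carried out as in the proof of Theorem \ref{gs:thmdegrot} and the remaining bookkeeping as in the non-degenerate resonance case, Theorem \ref{gs:thmnondegres} (cf.\ Table \ref{translationgesnondeg-2}). Your first two paragraphs --- commutation of $\mathcal{R}(U)$ with $H(\kappa,\theta)$ via Lemma \ref{lem:propoftfrot} and analytic continuation from real $\theta$, invariance and irreducibility of the eigenspace tensored with $\Omega$, and the analyticity and weighted $L^2$ bounds on the form factors coming from Hypotheses \ref{hyp:kappa} and \ref{hyp:analytph} --- match the paper's verification of Hypotheses \ref{HypI} and \ref{HypII}.

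The gap is in your third paragraph. The abstract theorem does not ask you to verify a Fermi-Golden-Rule condition or to run the Schur's-lemma argument on a second-order effective operator yourself; that mechanism is internal to \cite{HasLan23-2}, and the irreducibility you already checked is precisely the hypothesis that substitutes for it. What the theorem does require, and what your proposal never establishes, is Hypothesis \ref{HypIII}: the uniform resolvent bound
\begin{equation*}
\sup_{(\theta,z)}\ \sup_{q\geq 0}\left\| \frac{q+1}{\check{H}_{{\rm el},j}(\theta)-e^{\theta}z+q}\,\overline{P}_{{\rm el},j}\right\| < \infty
\end{equation*}
for the rescaled dilated atomic Hamiltonian. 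This is the genuinely resonance-specific analytic step. For the ground state the bound is elementary because $\sigma(H_{\rm el})\setminus\{E_{{\rm el},0}\}$ lies entirely to the right of the eigenvalue; for an excited eigenvalue $E_{{\rm el},j}$ the eigenvalues $E_{{\rm el},k}$ with $k<j$ produce denominators $e^{\theta}\check{\delta}_j^{-1}(E_{{\rm el},k}-E_{{\rm el},j})-e^{\theta}z+q$ which for real $\theta$ can vanish as $q$ runs over $[0,\infty)$. One needs ${\rm Im}\,\theta\in(0,\pi/2)$ so that these points acquire an imaginary part of definite sign, of size proportional to $\sin({\rm Im}\,\theta)$, keeping them uniformly away from the half-line $q\geq 0$; this is the content of Proposition \ref{veriatomIIIex}, and it is exactly where the restriction of $\theta$ to $D_{\theta_{\rm r}}(i\vartheta_0)$ in Property \ref{prop:gs:thmnondegres} originates. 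Without this estimate the reduction to the abstract theorem is incomplete.
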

\begin{proof}%[Proof of  Theorem \ref{gs:thmrotationres} ]
The proof follows in the same way as the proof of Theorem \ref{gs:thmnondegres} combined with the verification of Hypothesis \ref{HypII} (iii) as in the proof of Theorem \ref{gs:thmdegrot}. For the necessary notation we refer also to Table \ref{translationgesnondeg-2} in Section~\ref{proof:thmnondegres}.
\end{proof}

\section{Time Reversal Symmetry}\label{sec:timereversal}
Let $K$ denote complex conjugation on $L^2(\R^3; \mathcal{D}_s)$.
Define the operators
\begin{align*}
&
   \mathcal{T}_{{\rm p},s} := \left\{ \begin{array}{ll}  K     & , \quad \text{ if } \quad  s = 0  , \\
  (K   \sigma_2 )   &  , \quad \text{ if }  \quad s = 1/2   \end{array} \right.
 \end{align*}
and
\begin{align*}
&
   \mathcal{T}_{\rm el} :=  \bigotimes_{j=1}^N    \mathcal{T}_{{\rm p},s_j} .
\end{align*}
Let $\mathcal{K}_{\mathfrak{v}}$ denote complex conjugation in $\mathfrak{v}$,  and  let
\begin{equation} \label{eq:defofkh}
\mathcal{K}_{\hh} =    \phi_\varepsilon^{-1}  K_{\mathfrak{v}} \phi_\varepsilon
\end{equation}
denote its action on  $\hh$.
 Next we define operator of time reversal on the quantum field
\begin{equation} \label{eq:defoftf}
\mathcal{T}_{\rm f} := \Gamma(- \mathcal{K}_\hh) .
\end{equation}
We define the operator of time reversal in  the full Hilbert space by
\begin{equation} \label{eq:defoftimerev}
\mathcal{T} = \mathcal{T}_{\rm el} \otimes \mathcal{T}_{\rm f} .
\end{equation}

That this operator has the properties of the time reversal symmetry follows among other things from the following lemma.
For notations and definitions regarding anti-linearity and symmetry we refer the reader to Appendix \ref{app:symmetries}.

\begin{lemma}  \label{lem:propoftf}
The following holds.
\begin{itemize}
\item[(a)]  The maps  $\mathcal{T}_{\rm el}$, $\TT_{\rm f}$, and $\TT$ are anti-unitary operators.
\item[(b)] We have  $\TT_{\rm f}^2 = 1$, and
\begin{align*}
& \TT_{\rm el}^2 = (-1)^{\sum_{j=1}^N 2 s_j}  , \qquad \qquad\, \TT^2 =  (-1)^{\sum_{j=1}^N 2 s_j}  .
\end{align*}
\item[(c)] $\mathcal{T}  x_j  \mathcal{T}^* = x_j $, \quad $\mathcal{T} p_j  \mathcal{T}^* = -p_j $,  \quad $\mathcal{T}  S_j  \mathcal{T}^* = - S_j $.
\item[(d)] $ \mathcal{T}_{\rm f}  H_f   \mathcal{T}_{\rm f} ^*   =  H_f$.
\item[(e)]
If  $\overline{ \rho(\boldsymbol{k})  } =  \rho(-\boldsymbol{k})$,
then
\begin{align*}
%\text{(b)} & \quad
& \mathcal{T}_{\rm f}  E(0)   \mathcal{T}_{\rm f}^*     =  E(0)  ,  %\label{eq:timete-timereversal} \\  %\\
%\text{(c)} & \quad
& \mathcal{T}_{\rm f}   B(0)  \mathcal{T}_{\rm f}^*     = - B(0) .  %\label{eq:timetb-timereversal}
%\\
%& \mathcal{T}_{\rm f} H_f  \mathcal{T}_{\rm f}^*  = H_f
\end{align*}
\item[(f)] $\mathcal{T}_{\rm f}$ leaves the Fock vacuum as well as the one particle sector invariant.
\item[(g)] For all $\tau\in\R$ we have $\mathcal{T}_{\rm f} U_{\rm ph}(\tau) = U_{\rm ph}(\tau)  \mathcal{T}_{\rm f}$.
\end{itemize}
\end{lemma}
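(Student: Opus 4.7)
The plan is to dispatch the seven items in order, pushing almost everything through the functor $\Gamma$ together with the explicit identification $\phi_\varepsilon$ of $\hh$ with divergence-free vector fields. First, for (a) I would note that $K$ is anti-unitary on any complex $L^2$-space, $\sigma_2$ is unitary, so each factor $\mathcal{T}_{{\rm p},s_j}$ is anti-unitary; the antisymmetrized tensor product inherits anti-unitarity and thus $\mathcal{T}_{\rm el}$ is anti-unitary. For $\mathcal{T}_{\rm f}$, $K_{\mathfrak{v}}$ is anti-unitary on $\mathfrak{v}$, so $\mathcal{K}_\hh = \phi_\varepsilon^{-1} K_{\mathfrak{v}}\phi_\varepsilon$ is anti-unitary on $\hh$, and by the standard extension, $\Gamma(-\mathcal{K}_\hh)$ is anti-unitary on $\mathcal{F}$. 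Tensoring yields (a).

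For (b) the key identities are $K^2=\one$ on $L^2(\R^3)$ and $(K\sigma_2)^2 = K\sigma_2 K\sigma_2 = \overline{\sigma_2}\,\sigma_2 = -\sigma_2^2 = -\one$ (using $\overline{\sigma_2}=-\sigma_2$), which together give $\mathcal{T}_{{\rm p},s}^2=(-1)^{2s}$; the tensor product then produces the advertised sign $(-1)^{\sum_j 2s_j}$. On the bosonic side $(-\mathcal{K}_\hh)^2 = \mathcal{K}_\hh^2 = \one$ because $-1\in\R$ and $K_\mathfrak{v}^2=\one$, so $\mathcal{T}_{\rm f}^2=\Gamma(\one)=\one$. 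For (c) I would remark that $x_j$ and $\chi(x_j)$ act by real-valued multiplication, hence commute with both $K$ and $\sigma_2$ and are trivial on photon space; $p_j=-i\nabla$ picks up a sign from $K$ because $K\,i\,K = -i$; and the spin identity $(K\sigma_2)\tau_a(K\sigma_2)^{-1}=\sigma_2\overline{\tau_a}\sigma_2^{-1}=-\tau_a$, which can be verified by the three cases $a=1,2,3$ using $\overline{\sigma_1}=\sigma_1,\overline{\sigma_2}=-\sigma_2,\overline{\sigma_3}=\sigma_3$ and $\sigma_2\sigma_a\sigma_2^{-1}\in\{-\sigma_1,\sigma_2,-\sigma_3\}$.

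For (d) the plan is to compute $\mathcal{K}_\hh$ in momentum coordinates. By the formula of Lemma \ref{lem:canIdentif} one finds $(\mathcal{K}_\hh h)(\bk,\lambda)=\sum_\mu \varepsilon(\bk,\lambda)\cdot\overline{\varepsilon(-\bk,\mu)}\,\overline{h(-\bk,\mu)}$, i.e.\ $\mathcal{K}_\hh$ is conjugation composed with an inversion-plus-orthogonal rearrangement in the polarization index. Since $\omega(\bk)=|\bk|$ is real and invariant under $\bk\mapsto-\bk$, $\mathcal{K}_\hh$ commutes with the multiplication operator $\omega$ on $\hh$, and therefore $\Gamma(-\mathcal{K}_\hh)$ commutes with $H_f=d\Gamma(\omega)$, giving (d). For (e), I would use the standard rule that for anti-unitary $A$ on $\hh$ one has $\Gamma(A)a^{\#}(f)\Gamma(A)^{-1}=a^{\#}(Af)$ with the appropriate $\#$, and rewrite $E(0)=a^{*}(G_E)-a(G_E)$, $B(0)=a(G_B)-a^{*}(G_B)$ for the explicit form factors $G_E,G_B$. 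The condition $\overline{\rho(\bk)}=\rho(-\bk)$ together with $\bk\mapsto -\bk$ in $\mathcal{K}_\hh$ makes $-\mathcal{K}_\hh G_E=G_E$ and $-\mathcal{K}_\hh G_B=-G_B$; tracking the overall minus sign in $\Gamma(-\mathcal{K}_\hh)$ against the $\bk$-odd factor $i\bk\wedge\varepsilon$ in $G_B$ versus the $\bk$-even factor $i\varepsilon$ in $G_E$ produces the correct signs. This parity bookkeeping is where the $-$ in the definition of $\mathcal{T}_{\rm f}$ pays off, and it is the main obstacle, since one has to be careful about conventions for how an anti-unitary $\Gamma$-lift acts on $a$ versus $a^*$.

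Finally, (f) is essentially by construction: $\Gamma$ of any (anti-)unitary on $\hh$ leaves the vacuum $\Omega$ fixed and maps the $n$-photon sector into itself, in particular the one-photon sector. For (g), since both $U_{\rm ph}(\tau)=\Gamma(u_{\rm ph}(\tau))$ and $\mathcal{T}_{\rm f}=\Gamma(-\mathcal{K}_\hh)$ are second-quantizations, it suffices to check that $u_{\rm ph}(\tau)$ commutes with $\mathcal{K}_\hh$ on $\hh$; but $u_{\rm ph}(\tau)$ is real dilation, acts only by $\bk\mapsto e^{-\tau}\bk$ and a real prefactor, and dilation commutes with both inversion $\bk\mapsto-\bk$ and pointwise complex conjugation, so the commutation is immediate.
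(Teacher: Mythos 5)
Your proposal is correct, but it takes a genuinely different route from the paper, which for items (a)--(e) simply defers to \cite[Section 4.4]{HasLan23-1} and declares (f), (g) immediate from the definitions. You instead give a self-contained argument, whose two substantive ingredients are the explicit momentum-space formula $(\mathcal{K}_\hh h)(\boldsymbol{k},\lambda)=\sum_\mu \varepsilon(\boldsymbol{k},\lambda)\cdot\varepsilon(-\boldsymbol{k},\mu)\,\overline{h(-\boldsymbol{k},\mu)}$ obtained from Lemma \ref{lem:canIdentif}, and the completeness relation $\sum_\mu\varepsilon_j(\boldsymbol{k},\mu)\varepsilon_l(\boldsymbol{k},\mu)=\delta_{jl}-|\boldsymbol{k}|^{-2}\boldsymbol{k}_j\boldsymbol{k}_l$, which together with transversality of the form factors make the inversion-plus-polarization-rearrangement act as the identity on $f_E\sim i\rho|\boldsymbol{k}|^{1/2}\varepsilon$ and as $-1$ on $f_B\sim i\rho|\boldsymbol{k}|^{-1/2}\boldsymbol{k}\wedge\varepsilon$ once $\overline{\rho(\boldsymbol{k})}=\rho(-\boldsymbol{k})$ is used; combined with the rule $\Gamma(A)a^{\#}(f)\Gamma(A)^{*}=a^{\#}(Af)$ for anti-unitary $A$, this is a complete verification of (d) and (e). What your route buys is transparency: the sign difference between $E(0)$ and $B(0)$ is reduced to the parity of the form factor under $\boldsymbol{k}\mapsto-\boldsymbol{k}$, which is exactly the mechanism the paper's citation hides.

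One caveat you should make explicit concerns (g). You argue that $u_{\rm ph}(\tau)$ commutes with $\mathcal{K}_\hh$ because dilation commutes with inversion and with pointwise conjugation; but by your own formula $\mathcal{K}_\hh$ also contains the polarization-mixing matrix $M_{\lambda\mu}(\boldsymbol{k})=\varepsilon(\boldsymbol{k},\lambda)\cdot\varepsilon(-\boldsymbol{k},\mu)$, and the commutation requires $M(e^{-\tau}\boldsymbol{k})=M(\boldsymbol{k})$. This holds precisely under the paper's standing assumption $\varepsilon(\boldsymbol{k},\lambda)=\varepsilon(\boldsymbol{k}/|\boldsymbol{k}|,\lambda)$, imposed for the resonance results (which is where (g) is used); for a general measurable choice of polarization vectors it can fail. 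With that assumption recorded, the argument is complete.
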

\begin{proof}
 For (a) to (e) we refer the reader to \cite[Section 4.4]{HasLan23-1}. (f) and (g) follow directly from the definition.
\end{proof}

\begin{hyp}\label{hyp:tinvgs} We have that
\begin{itemize}
\item[(i)] $V_{\rm el}$ is symmetric with respect to  $\mathcal{T}_{\rm el}$,
\item[(ii)] $\rho(-\boldsymbol{k}) = \overline{\rho(\boldsymbol{k})}$.
\end{itemize}
\end{hyp}

\begin{remark} We note that (i) of the Hypothesis~\ref{hyp:tinvgs} is satisfied if $V_{\rm el}$ is a real valued potential.
\end{remark}

\begin{lemma} \label{lem:hypeimpliestimeinv}
We consider spin--$\frac{1}{2}$ particles, i.e.,    $\HH_{\rm el} = L^2(\R^3;\C^2)^{\otimes_a^N}$. If  Hypothesis  \ref{hyp:tinvgs} holds,  then $\mathcal{T}$ is a symmetry of $H_g$.
\end{lemma}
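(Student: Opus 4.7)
The plan is to show $\mathcal{T}\, H(\kappa)\, \mathcal{T}^{*} = H(\overline{\kappa})$ by a term-by-term verification, so that in particular for real $\kappa$ the anti-unitary $\mathcal{T}$ commutes with $H(\kappa)$ and hence defines a symmetry in the sense recalled in Appendix \ref{app:symmetries}. Decompose
\begin{equation*}
H(\kappa) \;=\; (-\Delta) \;+\; V \;+\; I_{\rm SB} \;+\; W(\kappa) \;+\; H_f,
\end{equation*}
with $W(\kappa) = \sum_{j=1}^{N}\bigl(\kappa^{3}\chi(x_{j})\,x_{j}\cdot E(0) + \kappa^{5} S_{j}\cdot B(0)\bigr)$, and treat each summand separately.

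For the atomic kinetic piece, Lemma \ref{lem:propoftf}(c) gives $\mathcal{T} p_{j}\mathcal{T}^{*}=-p_{j}$, hence $\mathcal{T}(-\Delta)\mathcal{T}^{*}=-\Delta$. Hypothesis \ref{hyp:tinvgs}(i) states directly that $V_{\rm el} = V + I_{\rm SB}$ is $\mathcal{T}_{\rm el}$-symmetric, which yields $\mathcal{T}V\mathcal{T}^{*}=V$ and $\mathcal{T}I_{\rm SB}\mathcal{T}^{*}=I_{\rm SB}$ after tensoring trivially with the Fock factor. If one prefers a direct check of $I_{\rm SB}$, note that $\nu_{j}$ is real-valued so $\mathcal{T}$ commutes with multiplication by $\nu_{j}(x_{j})$, and the two sign changes coming from $p_{j}\to -p_{j}$ and $S_{j}\to -S_{j}$ in $(\nu_{j}(x_{j})\wedge p_{j})\cdot S_{j}$ cancel. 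For the free field, $H_{f}$ is invariant by Lemma \ref{lem:propoftf}(d).

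For the coupling, Hypothesis \ref{hyp:tinvgs}(ii) is exactly the hypothesis $\overline{\rho(\boldsymbol{k})}=\rho(-\boldsymbol{k})$ required by Lemma \ref{lem:propoftf}(e), which gives $\mathcal{T}_{\rm f}E(0)\mathcal{T}_{\rm f}^{*}=E(0)$ and $\mathcal{T}_{\rm f}B(0)\mathcal{T}_{\rm f}^{*}=-B(0)$. Since $\chi(x_{j})\,x_{j}$ is a real vector-valued multiplication operator on $\HH_{\rm el}$, it is $\mathcal{T}$-invariant, so the dipole term $\chi(x_{j})\,x_{j}\cdot E(0)$ is invariant; in the Zeeman term $S_{j}\cdot B(0)$ the two minus signs from $S_{j}$ and $B(0)$ cancel, so it too is invariant. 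Anti-linearity of $\mathcal{T}$ conjugates the scalar prefactors, producing $\mathcal{T}\,W(\kappa)\,\mathcal{T}^{*}=W(\overline{\kappa})$. Summing the five pieces yields $\mathcal{T}\,H(\kappa)\,\mathcal{T}^{*}=H(\overline{\kappa})$, which specialized to real $\kappa$ is the claim.

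The whole argument is bookkeeping in sign changes and anti-linearity; there is no substantive obstacle. The only mildly delicate point is the spin-orbit term, whose invariance is not a single item of Lemma \ref{lem:propoftf} but a cancellation between two of them, and (for the interaction) the recognition that the field operator invariances of Lemma \ref{lem:propoftf}(e) require exactly the reality condition built into Hypothesis \ref{hyp:tinvgs}(ii).
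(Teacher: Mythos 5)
Your proposal is correct and follows essentially the same route as the paper: a term-by-term conjugation of $-\Delta$, $V_{\rm el}=V+I_{\rm SB}$, $W(\kappa)$ and $H_f$ using the sign rules of Lemma \ref{lem:propoftf} together with Hypothesis \ref{hyp:tinvgs}, with the two minus signs cancelling in the spin-orbit and Zeeman terms. The paper's proof is just a terser version of the same bookkeeping (it cites Lemma \ref{lem:propoftf} wholesale and records only $\mathcal{T}_{\rm el}S_j\mathcal{T}_{\rm el}^*=-S_j$ and $\mathcal{T}_{\rm el}(-\Delta)\mathcal{T}_{\rm el}^*=-\Delta$ explicitly), so no substantive difference.
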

\begin{proof}
By Lemma~\ref{lem:propoftf} (a), we know that $\mathcal{T}$ is anti-unitary.
The property $\mathcal{T} H_g \mathcal{T}^* = H_g$ follows also from Lemma~\ref{lem:propoftf}, together with the fact that
\begin{align}\label{eq:KSKS}
	\mathcal{T}_{\rm el} S_j \mathcal{T}_{\rm el}^* = - S_j \,,
\end{align}
and that $\mathcal{T}_{\rm el}  ( - \Delta ) \mathcal{T}_{\rm el}^*  = -\Delta $. Eq.~\eqref{eq:KSKS} holds since $(K \sigma_2)\, \sigma_j\, (K \sigma_2)^* = - \sigma_j$ where $\sigma_j$ denotes the $j$-th  Pauli matrix.
\end{proof}

Let us recall the following Lemma known from Kramer's Degeneracy theorem \cite{LosMiySpo09, HasLan23-1}.
\begin{lemma} \label{lem:timeirred}  Suppose \ref{hyp:tinvgs} (i) holds. Let $E_{\rm el}$ denote an eigenvalue of $H_{\rm el}$. Assume $s=1/2$ and $N$ is odd.
 Then the multiplicity of  $E_{\rm el}$  is at least two. If the multiplicity is two, then $
\mathcal{T}_{\rm el}$ acts irreducibly on the eigenspace.
\end{lemma}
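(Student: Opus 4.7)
The plan is to reproduce the standard Kramers degeneracy argument in this setting. First I would collect the three ingredients supplied by the setup: (a) $\mathcal{T}_{\rm el}$ is anti-unitary, as it is built as a tensor product of factors $K$ and $K\sigma_2$ (this is the electronic analogue of Lemma~\ref{lem:propoftf}(a)); (b) $\mathcal{T}_{\rm el}$ commutes with $H_{\rm el}$, because by Hypothesis~\ref{hyp:tinvgs}(i) the potential $V_{\rm el}$ is $\mathcal{T}_{\rm el}$-symmetric, and each factor $K\sigma_2$ commutes with $-\Delta$ (the latter is the calculation used in Lemma~\ref{lem:hypeimpliestimeinv}, where \eqref{eq:KSKS} also handles the spin--orbit piece); (c) by Lemma~\ref{lem:propoftf}(b) applied to the electronic factor, $\mathcal{T}_{\rm el}^{2}=(-1)^{\sum_{j}2s_j}=(-1)^{N}=-1$ in our regime $s=1/2$ and $N$ odd.

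Next I would take any unit eigenvector $\varphi$ of $H_{\rm el}$ with eigenvalue $E_{\rm el}$. By (b), $\mathcal{T}_{\rm el}\varphi$ is again an eigenvector with eigenvalue $E_{\rm el}$. The key step is to show $\varphi$ and $\mathcal{T}_{\rm el}\varphi$ are linearly independent. Suppose on the contrary that $\mathcal{T}_{\rm el}\varphi=\lambda\varphi$ for some $\lambda\in\mathbb{C}$. Applying $\mathcal{T}_{\rm el}$ once more and using anti-linearity yields
\[
\mathcal{T}_{\rm el}^{2}\varphi \;=\; \overline{\lambda}\,\mathcal{T}_{\rm el}\varphi \;=\; |\lambda|^{2}\varphi,
\]
while by (c) the left-hand side equals $-\varphi$, forcing $|\lambda|^{2}=-1$, a contradiction. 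Hence $\dim\ker(H_{\rm el}-E_{\rm el})\geq 2$, which is the first assertion.

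For the second assertion, assume the eigenspace $\mathcal{E}$ has dimension exactly two. A proper, non-zero complex subspace $W\subset\mathcal{E}$ invariant under the anti-linear map $\mathcal{T}_{\rm el}$ must be one-dimensional, so $W=\mathbb{C}\psi$ for some $\psi\neq 0$, and invariance forces $\mathcal{T}_{\rm el}\psi=\lambda\psi$ for some $\lambda\in\mathbb{C}$. Exactly the same computation as above, using $\mathcal{T}_{\rm el}^{2}=-1$, rules this out. Therefore $\mathcal{E}$ admits no non-trivial $\mathcal{T}_{\rm el}$-invariant proper subspace, i.e.\ $\mathcal{T}_{\rm el}$ acts irreducibly on $\mathcal{E}$ in the sense of Appendix~\ref{app:symmetries}.

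There is no real obstacle here; the only subtlety is to be careful that $\mathcal{T}_{\rm el}$ is anti-linear when squaring the relation $\mathcal{T}_{\rm el}\varphi=\lambda\varphi$ (producing $|\lambda|^{2}$, not $\lambda^{2}$) and to confirm the sign $\mathcal{T}_{\rm el}^{2}=-1$ from Lemma~\ref{lem:propoftf}(b) in the odd-$N$, spin-$\tfrac{1}{2}$ sector.
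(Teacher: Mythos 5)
Your proof is correct and follows essentially the same Kramers-degeneracy argument as the paper, resting on the two key facts $\mathcal{T}_{\rm el}^{2}=-1$ and $\mathcal{T}_{\rm el}H_{\rm el}\mathcal{T}_{\rm el}^{*}=H_{\rm el}$. The only cosmetic difference is that the paper derives the stronger statement $\psi\perp\mathcal{T}_{\rm el}\psi$ directly from anti-unitarity, whereas you exclude an anti-linear eigenvector via the $|\lambda|^{2}=-1$ contradiction; both routes give the linear independence and the irreducibility in the same way.
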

\begin{proof}
It follows from Lemma \ref{lem:propoftf} that  $\mathcal{T}_{\rm el}^2 = -1$.
Now by Hypothesis \ref{hyp:tinvgs} (i) we know that $\mathcal{T}_{\rm el} H_{\rm el}\mathcal{T}_{\rm el}^* =H_{\rm el}$.
Now let  $\psi$ be a normalized eigenvector $H_{\rm el}$ with eigenvalue $E_{\rm el}$. We claim that $\mathcal{T}_{\rm el} \psi$
is a normalized eigenvector of $H_{\rm el}$ with eigenvalue $E_{\rm el}$, and $\psi \perp \mathcal{T}_{\rm el} \psi$.
For this observe that $ H_{\rm el}\mathcal{T}_{\rm el} \psi  =\mathcal{T}_{\rm el} H_{\rm el} \psi  = E_{\rm el} \mathcal{T}_{\rm el}  \psi$.
On the other hand using antiunitarity $\langle \psi , \mathcal{T}_{\rm el} \psi \rangle = \langle\mathcal{T}_{\rm el}^{2} \psi ,  \mathcal{T}_{\rm el}\psi  \rangle
 = \langle  - \psi , \mathcal{T}_{\rm el} \psi \rangle = -  \langle \psi , \mathcal{T}_{\rm el} \psi \rangle $.
 And so $ \langle \psi , \mathcal{T}_{\rm el} \psi \rangle= 0$. By what we have shown,  any invariant subspace containing
 a nontrivial vector must have at least dimension two. This implies the irreduciblity.
\end{proof}

\begin{theorem}(Ground State)  \label{gs:thmdegtime}  Suppose Hypothesis \ref{hypA}, \ref{hyp:kappa},
and
\ref{hyp:tinvgs} hold. Furthermore assume $s=1/2$ and $N$ is odd.
Let  $E_{{\rm el}}$ denote the ground state energy  of     $H_{\rm el}$.
Suppose that the ground state space has dimension $d=2$.
Then  Property  \ref{prop:gs:thmnondeg-0} holds. Furthermore, if  in addition \ref{hyp:analytph} holds, then
 Property \ref{prope:gs:thmnondeg} holds.
\end{theorem}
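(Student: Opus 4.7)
The plan is to reduce the statement to the abstract result of \cite{HasLan23-2} (summarized in Appendix \ref{app:ResultsfromOtherPaper}), in complete parallel with the proof of Theorem \ref{gs:thmdegrot}, but now with the anti-unitary time-reversal map $\mathcal{T}$ playing the role previously played by the unitary action of $SU(2)$. This means I would pick as reference subspace
$$
V_0 := \ker(H_{\rm el} - E_{\rm el}) \otimes \mathbb{C}\Omega \subset \HH ,
$$
verify that it is preserved and acted on irreducibly by $\mathcal{T}$, and then check the remaining technical hypotheses of the abstract theorem.

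First I would verify that $\mathcal{T}$ is a symmetry of $H(\kappa)$ (and of its dilated extension $H(\kappa,\theta)$). That $\mathcal{T} H(\kappa) \mathcal{T}^* = H(\overline{\kappa})$ for $\kappa \in D_{\kappa_b}$ follows from Lemma \ref{lem:hypeimpliestimeinv} (which handles the case of real $\kappa$) combined with items (c), (d) and (e) of Lemma \ref{lem:propoftf}: the dipole term $\chi(x_j) x_j \cdot E(0)$ picks up two minus signs from $x_j$ (invariant) and $E(0)$ (invariant), while $S_j \cdot B(0)$ picks up two signs from $S_j\to -S_j$ and $B(0) \to -B(0)$. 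Anti-linearity of $\mathcal{T}$ turns the prefactors $\kappa^3,\kappa^5$ into their conjugates; since $\mathcal{T}_f U_{\rm ph}(\tau) = U_{\rm ph}(\tau) \mathcal{T}_f$ for real $\tau$ by Lemma \ref{lem:propoftf}(g), analytic continuation then gives
$$
\mathcal{T} H(\kappa,\theta) \mathcal{T}^* = H(\overline{\kappa},\overline{\theta})
$$
on $D_{\kappa_b} \times D_{\theta_b}$.

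Second I would check irreducibility. Since $s=1/2$ and $N$ is odd, Lemma \ref{lem:timeirred} gives that the ground state eigenspace of $H_{\rm el}$ has multiplicity at least two, and irreducibility of $\mathcal{T}_{\rm el}$ on it whenever the multiplicity equals two, which is our assumption $d=2$. Because $\mathcal{T}_f \Omega = \Omega$ by Lemma \ref{lem:propoftf}(f), the full map $\mathcal{T} = \mathcal{T}_{\rm el} \otimes \mathcal{T}_f$ acts on $V_0$ as $\mathcal{T}_{\rm el} \otimes \mathrm{id}$, hence irreducibly. The remaining technical inputs for the abstract result (form-boundedness of $W(\kappa,\theta)$ with respect to $H_f$, analyticity of $\theta \mapsto G_{\kappa,\theta}$ in the weighted norm $\|\cdot\|_\mu$, and smallness estimates on the coupling) are exactly those used in Theorems \ref{gs:thmnondeg} and \ref{gs:thmdegrot}, and follow from Hypotheses \ref{hyp:kappa} and \ref{hyp:analytph} as in Remark \ref{rem:anacont}. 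With these in hand, the abstract theorem produces the analytic family of $d=2$ linearly independent eigenvectors with a common analytic eigenvalue $E_\kappa$ (resp.\ $E_{\kappa,\theta}$ independent of $\theta$), yielding Properties \ref{prop:gs:thmnondeg-0} and, under Hypothesis \ref{hyp:analytph}, \ref{prope:gs:thmnondeg}.

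The main obstacle is purely bookkeeping: one must make sure that the version of the abstract symmetry-protection theorem invoked really accommodates \emph{anti}-unitary symmetries, and that the resulting twist $\kappa \mapsto \overline{\kappa}$, $\theta \mapsto \overline{\theta}$ in the intertwining relation is consistent with the analytic framework used there. This is precisely what the general formulation in \cite{HasLan23-2} and the notational setup recalled in Appendix \ref{app:symmetries} is designed for; once the dictionary between the objects here and those in \cite{HasLan23-2} (analogous to Table \ref{translationgesnondeg-2}) is filled in, both properties follow.
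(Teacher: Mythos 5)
Your proposal is correct and follows essentially the same route as the paper: verify Hypothesis \ref{HypII}~(iii) with $\mathcal{S}=\{\mathcal{T}\}$ via Lemma \ref{lem:propoftf} and Lemma \ref{lem:timeirred}, check the remaining hypotheses exactly as in Theorems \ref{gs:thmnondeg} and \ref{gs:thmdegrot}, and invoke the abstract theorem of \cite{HasLan23-2}. The only cosmetic difference is that the paper phrases the anti-unitary symmetry condition as $\mathcal{T}W(\underline{\kappa},\theta)\mathcal{T}^*=W(\underline{\kappa},\theta)^*$ (per Definition \ref{def:symmetry}) rather than as your intertwining $\mathcal{T}H(\kappa,\theta)\mathcal{T}^*=H(\overline{\kappa},\overline{\theta})$; these coincide here since $W(\kappa,\theta)^*=W(\overline{\kappa},\overline{\theta})$.
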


\begin{proof}
	See Section~\ref{proof:thmdegtime}
\end{proof}

Next we consider resonance states.

\begin{theorem}(Resonance  State) \label{gs:thmtimeres} Suppose Hypothesis \ref{hypA}, \ref{hyp:kappa}, \ref{hyp:analytph}, and \ref{hyp:tinvgs} hold. 
Furthermore assume $s=1/2$ and $N$ is odd.
Let  $E_{{\rm el}}$ denote an eigenvalue of     $H_{\rm el}$ below the essential spectrum.
Suppose that the eigenspace has dimension $d=2$. Then Property \ref{prop:gs:thmnondegres}   holds.
\end{theorem}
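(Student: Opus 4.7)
The plan is to apply the abstract resonance result of \cite{HasLan23-2} (recalled in Appendix \ref{app:ResultsfromOtherPaper}) in exactly the same structural way as for Theorem \ref{gs:thmnondegres}, with the only substantive addition being the verification of the symmetry/irreducibility hypothesis (Hypothesis \ref{HypII}(iii) in the notation of the appendix) via time reversal, in parallel with what is done in the proof of Theorem \ref{gs:thmdegtime}. A translation table between the notation of the abstract theorem and the concrete objects here can be set up as in Table \ref{translationgesnondeg-2}, with the atomic eigenprojection now being the rank-two spectral projection of $H_{\rm el}$ onto the eigenspace at $E_{\rm el}$.

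First I would recall the analytic setup from the proof of Theorem \ref{gs:thmnondegres}: by Remark \ref{rem:anacont} the map $(\kappa,\theta)\mapsto H(\kappa,\theta)$ is an analytic family of type (A) on a neighborhood of $\{0\}\times D_{\theta_b}$, and Hypothesis \ref{hypA}(iii) together with the fact that $E_{\rm el}$ lies strictly below $\sigma_{\rm ess}(H_{\rm el})$ ensures the isolation of $E_{\rm el}$ required by the abstract theorem. Choosing $\theta\in D_{\theta_{\rm r}}(i\vartheta_0)$ with $\vartheta_0\in(0,\pi/2)$ rotates the continuous spectrum generated by $e^{-\theta}H_f$ into the lower half plane and opens a sector disjoint from $E_{\rm el}$, which is the geometric input for the Feshbach–Schur renormalization. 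The estimates on $W(\kappa,\theta)$ in terms of the weighted norm $\|\cdot\|_\mu$ provided by Hypotheses \ref{hyp:kappa} and \ref{hyp:analytph} are exactly what is needed to verify the analytic smallness assumptions.

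Next I would verify the symmetry hypothesis. By Lemmas \ref{lem:propoftf} and \ref{lem:hypeimpliestimeinv}, Hypothesis \ref{hyp:tinvgs} implies that $\mathcal{T}$ is an antiunitary symmetry of $H(\kappa)$; combined with part (g) of Lemma \ref{lem:propoftf}, namely $\mathcal{T}_{\rm f}U_{\rm ph}(\tau)=U_{\rm ph}(\tau)\mathcal{T}_{\rm f}$ for real $\tau$, the relation $\mathcal{T}H(\kappa,\theta)\mathcal{T}^*=H(\overline{\kappa},\overline{\theta})$ extends by analytic continuation to $D_{\kappa_{\rm b}}\times D_{\theta_{\rm r}}(i\vartheta_0)$, so that $\mathcal{T}$ fits the anti-unitary symmetry framework of Appendix \ref{app:symmetries}. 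Since $s=1/2$ and $N$ is odd, $\mathcal{T}_{\rm el}^2=-1$, and Lemma \ref{lem:timeirred} then forces $\mathcal{T}_{\rm el}$ to act irreducibly on the two-dimensional eigenspace of $E_{\rm el}$. Choosing a Kramers pair $\varphi_{{\rm el},1},\varphi_{{\rm el},2}=\mathcal{T}_{\rm el}\varphi_{{\rm el},1}$ as basis, this is exactly the data required by Hypothesis \ref{HypII}(iii), in the same form used in the proof of Theorem \ref{gs:thmdegtime}.

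The main obstacle, as in the degenerate ground state case, is ensuring that the renormalization procedure preserves the two-fold degeneracy order-by-order so that it is not lifted in the limit. This is handled by the abstract machinery: the Feshbach–Schur map commutes with the projection onto the atomic eigenspace and with $\mathcal{T}$ (because $\mathcal{T}$ commutes with $H_f$ and with the dilated field operators once Hypothesis \ref{hyp:tinvgs}(ii) is imposed), so the effective Hamiltonians at every renormalization step remain $\mathcal{T}$-invariant on a two-dimensional reference space; Kramers degeneracy applied at the effective level then prevents any splitting and yields, in the limit, a single eigenvalue $E_{\kappa,\theta}$ of geometric multiplicity two with two linearly independent eigenvectors $\psi_{j,\kappa,\theta}$, $j=1,2$, depending analytically on $(\kappa,\theta)$ and reducing to $\varphi_{{\rm el},j}\otimes\Omega+O(\kappa)$ at $\kappa=0$. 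The $\theta$-independence of $E_{\kappa,\theta}$ then follows from the usual analytic-dilation argument exactly as in the proof of Theorem \ref{gs:thmnondegres}.
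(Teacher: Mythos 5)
Your proposal is correct and follows essentially the same route as the paper: the paper's proof of Theorem~\ref{gs:thmtimeres} simply combines the argument of Theorem~\ref{gs:thmnondegres} (rescaled operator, translation Table~\ref{translationgesnondeg-2}, Proposition~\ref{veriatomIIIex} for Hypothesis~\ref{HypIII}) with the verification of Hypothesis~\ref{HypII}(iii) by time reversal exactly as in the proof of Theorem~\ref{gs:thmdegtime}, i.e.\ via Lemma~\ref{lem:propoftf} and the Kramers irreducibility of Lemma~\ref{lem:timeirred}. The only minor imprecision is that the symmetry condition actually needed is $\mathcal{T}W(\underline{\kappa},\theta)\mathcal{T}^*=W(\underline{\kappa},\theta)^*$ at fixed parameters (antiunitary symmetry in the sense of Definition~\ref{def:symmetry}), rather than the relation $\mathcal{T}H(\kappa,\theta)\mathcal{T}^*=H(\overline{\kappa},\overline{\theta})$ you write, and the order-by-order preservation of the degeneracy is part of the cited abstract Theorem~\ref{thm:symdegenSpinBoson} rather than something to be re-proved here.
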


\begin{proof}[Proof of  Theorem \ref{gs:thmtimeres}] 
The proof follows in the same way as the proof of Theorem \ref{gs:thmnondegres} combined with the verification of Hypothesis \ref{HypII} (iii) as in the proof of Theorem  \ref{gs:thmdegtime}. We also refer again to the notation in Table \ref{translationgesnondeg-2} in Section~\ref{proof:thmnondegres}.
\end{proof}

\section{Examples} \label{sec:examples}

Let us now discuss applications of the Theorems. 
In this section we assume that $s = \frac{1}{2}$. For this we recall the  particle Hilbert space in  \eqref{defofHel}
\begin{align} \label{defofHelSec6}
\HH_{\rm el} = P_{a}   L^2(\R^3 \times \Z_{2})^{\otimes N}  .
\end{align}
We study the atomic operator
\begin{align} \label{eq:Hel}
 H_{\rm el} =  \sum_{j=1}^N ( T(p_j)   +  V_1(x_j) )   + \sum_{\substack{ i,j  : \\  j < i }} V_2(x_i,x_j)   +   I_{\rm SB}(\nu).
\end{align}
where $V_1$ describes physically an external potential and $V_2$ a pair interaction potential.
$T : \R^3 \to (0,\infty)$ stands for the kinetic energy and has the following form
\begin{align*}
T(p) & = \frac{p^2}{2 m_{\rm el}}   , %\\
%T(p)  & = \sqrt{ m_{\rm el}^2 + p^2}, ????
\end{align*}
where $m_{\rm el} >0$ is the electron mass.  We will choose units such that $m_{\rm el} = 1/2$.
 In this section we also assume that the spatial cutoff $\chi$, see \eqref{eq:assumptionong}, is rotationally invariant and
 \begin{align}\label{couplingfn}
 \rho(\boldsymbol{k}) = e^{- (\boldsymbol{k}/\Lambda)^2} ,  \quad \Lambda  > 0 .
 \end{align}
Note that \eqref{couplingfn}  implies that  Hypothesis~\ref{hyp:kappa} and \ref{hyp:analytph} hold, cf. Remarks \ref{remoncouplibng1} and \ref{remoncouplibng2}.
Furthermore, we assume that the spin orbit coupling is of the form
 \begin{align} \label{spinorbitexpl}
 I_{\rm SB}(\nu_\epsilon) = \sum_{j=1}^N (\nu_\epsilon(x_j) \wedge p_j ) \cdot S_j
 \end{align}
  with
\begin{align}\label{spinorbitexpl-2}
\nu_\epsilon(x)  :=  \min\{ \epsilon^{-3}, |x|^{-3} \}  ,  \quad  \epsilon > 0 .
\end{align}
We note that it follows from  Lemma \ref{Spinbahn-1d}, that $I_{\rm SB}(\nu_\epsilon)$  defined in \eqref{spinorbitexpl}
is infinitesimally bounded
 w.r.t. to $-\Delta$. Thus we can assume that    Hypothesis \ref{hypA} (ii)  holds.
Note that by   Kramers rule \cite{Kra30},  a folklore theorem, one can show that  for  $s=1/2$ and odd $N$ each eigenvalue of $H_{\rm el}$
has even multiplicity. Furthermore, one can show that the degeneracy persists if one
adds the quantized electromagnetic field, for details see  \cite{LosMiySpo09,HasLan23-1}. 

\subsection{Charged Particle with Spin }

In this subsection we consider  a single particle with spin 1/2, i.e.,   $N=1$ and $s=1/2$. Physically this corresponds to the situation of an electron.
First we consider the ground state. Here the situation follows directly from Theorem \ref{gs:thmdegrot}.

\begin{corollary} \label{gsrotinvex}  
Let  $H_{\rm el}=  - \Delta  - \frac{Z}{|x|}$ for some $Z > 0$.
Then the ground state energy  $E_{\rm el}$ of $H_{\rm el}$  is an isolated eigenvalue, which is twofold
degenerate, with linearly independent eigenvectors $\varphi_{{\rm el},j}$, $j=1,2$.
 Properties  \ref{prop:gs:thmnondeg-0} and \ref{prope:gs:thmnondeg}  hold with $d=2$.
\end{corollary}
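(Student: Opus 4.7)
My plan is to reduce the corollary to a direct application of Theorem \ref{gs:thmdegrot}. The cutoff $\rho(\bk) = e^{-(\bk/\Lambda)^2}$ satisfies Hypotheses \ref{hyp:kappa} and \ref{hyp:analytph} by Remarks \ref{remoncouplibng1} and \ref{remoncouplibng2}, and the spatial cutoff $\chi$ is rotationally invariant by the standing assumption of Section \ref{sec:examples}. Thus only Hypotheses \ref{hypA} and \ref{hyp:rsym}, together with the identification of the ground state eigenspace and the irreducible action of $SU(2)$ on it, remain to be verified.

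First I would dispatch Hypothesis \ref{hypA}. Kato's classical theorem gives that $-Z/|x|$ lies in $L^2_{\mathrm{loc}}(\R^3)$ and is infinitesimally $(-\Delta)$--bounded, and the spin--orbit term is absent, so parts (i) and (ii) are immediate. Part (iii) follows from the standard spectral decomposition of the one--body Coulomb operator on $L^2(\R^3)$: in our units its spectrum is $\{-Z^2/(4n^2) : n \in \N\} \cup [0,\infty)$, with simple spatial ground state energy $E_{\rm el} = -Z^2/4$ attained by the spherically symmetric $1s$ function $\varphi_{1s}$. Tensoring with the spin factor $\mathcal{D}_{1/2} \cong \C^2$ yields the isolated two--dimensional eigenspace
\[
\ker(H_{\rm el} - E_{\rm el}) = \mathrm{span}\{\varphi_{1s}\} \otimes \C^2 ,
\]
confirming $d=2$.

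Hypothesis \ref{hyp:rsym} is then immediate from spherical symmetry of the Coulomb potential and of $\rho, \chi$. The only remaining point, irreducibility of $\mathcal{R}_{\rm el}$ on the ground state space, I would verify by a short computation from \eqref{defofsu2}: for $U \in SU(2)$ and $\eta \in \C^2$ one has
\[
\bigl(\mathcal{R}_{\rm el}(U)(\varphi_{1s}\otimes \eta)\bigr)(x) = \mathcal{D}_{1/2}(U)\varphi_{1s}(\pi(U)^{-1}x)\,\eta = \varphi_{1s}(x)\, \mathcal{D}_{1/2}(U)\eta ,
\]
by rotational invariance of $\varphi_{1s}$. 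Hence $\mathcal{R}_{\rm el}$ restricted to the ground state eigenspace is unitarily equivalent to the fundamental representation $\mathcal{D}_{1/2}$ on $\C^2$, which is irreducible.

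I expect no genuine obstacle: the only substantive observation is that the twofold degeneracy is purely a spin degeneracy built on a spherically symmetric orbital factor, so the abstract hypotheses of Section \ref{sec:rotsymm} are tailor--made for this situation. Once all hypotheses are in place, Theorem \ref{gs:thmdegrot} yields both Property \ref{prop:gs:thmnondeg-0} and Property \ref{prope:gs:thmnondeg} with $d=2$, completing the proof.
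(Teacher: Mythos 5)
Your proposal is correct and follows the same route as the paper: both reduce the corollary to Theorem \ref{gs:thmdegrot} after checking Hypotheses \ref{hypA} and \ref{hyp:rsym}. You simply supply more detail than the paper's (very terse) proof, in particular the explicit identification of the ground state space as $\mathrm{span}\{\varphi_{1s}\}\otimes\C^2$ and the observation that $\mathcal{R}_{\rm el}$ restricts there to the irreducible fundamental representation $\mathcal{D}_{1/2}$ — details the paper leaves implicit but which are exactly the right ones to check.
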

\begin{proof} 
 For the proof we use  Theorem \ref{gs:thmdegrot}.
 Note that it is well known that the Coulomb potential is infinitesimally bounded with respect to $-\Delta$ and so Hypothesis \ref{hypA} (i)  holds.
 In addition \ref{hyp:rsym} holds by the assumptions.
 \end{proof}

 \begin{remark} 	{\rm  We note that alternatively Corollary  \ref{gsrotinvex}  could be proven by means of time reversal symmetry,
 see Corollary \ref{cor:time:ex:gs}.   }
 \end{remark}

Note,  that it is well known that the excited eigenvalues of    $-\Delta  - \frac{Z}{|x|}$ have  degeneracies which are
 not irreducible with respect to $SO(3)$, and  Theorem \ref{gs:thmrotationres}  is  not directly applicable.
 On the other hand, it is known that
relativistic corrections lift these degeneracies at least partially.
Thus let us add a  spin-orbit coupling, which is a relativistic correction.
For $\beta \in \R$ we will consider the operator
\begin{align}\label{spinorbitel}
H_{\rm H}(Z,\beta,\epsilon) := -\Delta - \frac{Z}{|x|} + \beta I_{\rm SB}(\nu_\epsilon)   .
\end{align}

 Let us first consider the ground state if the atomic part is given by \eqref{spinorbitel}.
  \begin{corollary} 
  Suppose  $H_{\rm el}= H_{\rm H}(Z,\beta,\epsilon) $ for some $Z > 0$ and $\epsilon > 0$.
Then for $|\beta| $ sufficiently small,  the ground state energy  $E_{\rm el}$ of $H_{\rm el}$  is an isolated eigenvalue, which is twofold degenerate, with linearly independent eigenvectors $\varphi_{{\rm el},j}$, $j=1,2$, and
 Properties   \ref{prop:gs:thmnondeg-0}  and  \ref{prope:gs:thmnondeg}
 hold for $d=2$.
\end{corollary}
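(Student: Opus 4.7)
The plan is to apply Theorem~\ref{gs:thmdegrot} to $H_{\rm el} = H_{\rm H}(Z,\beta,\epsilon)$, treating $\beta I_{\rm SB}(\nu_\epsilon)$ as a small, rotationally invariant perturbation of the Coulomb operator $-\Delta - Z/|x|$ already analyzed in Corollary~\ref{gsrotinvex}. The verifications of Hypotheses~\ref{hyp:kappa} and \ref{hyp:analytph} are immediate from the Gaussian choice \eqref{couplingfn} (Remarks~\ref{remoncouplibng1}, \ref{remoncouplibng2}). Hypothesis~\ref{hypA}(i) is the classical Kato fact for the Coulomb potential, while Hypothesis~\ref{hypA}(ii) is the observation made in the paragraph following \eqref{spinorbitexpl-2}, based on Lemma~\ref{Spinbahn-1d} applied to $\nu_\epsilon \in L^\infty(\R^3)$.

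For Hypothesis~\ref{hypA}(iii), note that $H_{\rm H}(Z,\beta,\epsilon)$ is an analytic family of type (A) in $\beta$ because $I_{\rm SB}(\nu_\epsilon)$ is infinitesimally Laplacian bounded. At $\beta=0$ the infimum $E_{\rm el}^{(0)} = -Z^2/4$ is an isolated, twofold-degenerate eigenvalue of the hydrogenic operator on $L^2(\R^3;\C^2)$, separated from the rest of its spectrum by a gap of size $3Z^2/16$. Standard analytic perturbation theory then yields, for $|\beta|$ sufficiently small, a spectral projection $P(\beta)$ of constant rank $2$ associated with a cluster of eigenvalues near $E_{\rm el}^{(0)}$, separated from the remainder of the spectrum; this cluster stays at the infimum by continuity, giving Hypothesis~\ref{hypA}(iii) and property~(iv) of Properties~\ref{prop:gs:thmnondeg-0}/\ref{prope:gs:thmnondeg}.

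For Hypothesis~\ref{hyp:rsym}, the Coulomb term is rotation invariant, the Gaussian $\rho$ and the radial $\chi$ take care of (ii) and (iii), and (i) reduces to invariance of $I_{\rm SB}(\nu_\epsilon)$ under $\mathcal{R}_{\rm el}$. Since $\nu_\epsilon$ is rotationally covariant (i.e.\ $\nu_\epsilon(R x) = R\nu_\epsilon(x)$ for $R \in SO(3)$), Lemma~\ref{lem:propoftfrot}(b) combined with $SO(3)$-invariance of the scalar triple product gives
$$\mathcal{R}(U)\bigl[(\nu_\epsilon(x)\wedge p)\cdot S\bigr]\mathcal{R}(U)^* = (\pi(U)^{-1}\nu_\epsilon(x)\wedge \pi(U)^{-1} p)\cdot \pi(U)^{-1} S = (\nu_\epsilon(x)\wedge p)\cdot S.$$

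The main obstacle, and the only genuinely non-trivial step, is to check that the two-fold degeneracy of the ground state is preserved and remains irreducible under $\mathcal{R}_{\rm el}$. Since $H_{\rm el}(\beta)$ commutes with $\mathcal{R}_{\rm el}(U)$ for every $U \in SU(2)$, so does $P(\beta)$, hence $\mathcal{R}_{\rm el}|_{P(\beta)\HH_{\rm el}}$ is a norm-continuous family of two-dimensional unitary representations of $SU(2)$. At $\beta=0$ the $1s$ orbital is spherically symmetric, so this representation is the spin-$\tfrac{1}{2}$ representation $\mathcal{D}_{1/2}$ acting on the spin index. By rigidity of equivalence classes of irreducible representations of the compact group $SU(2)$ (characters are continuous in $\beta$ but vary in a discrete set), the representation on $P(\beta)\HH_{\rm el}$ remains equivalent to $\mathcal{D}_{1/2}$, hence irreducible. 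Schur's lemma applied to the self-adjoint operator $H_{\rm el}(\beta)|_{P(\beta)\HH_{\rm el}}$ then forces it to be a scalar multiple of the identity, so all of $P(\beta)\HH_{\rm el}$ is a single eigenspace of multiplicity $2$ on which $\mathcal{R}_{\rm el}$ acts irreducibly. Theorem~\ref{gs:thmdegrot} then yields Properties~\ref{prop:gs:thmnondeg-0} and \ref{prope:gs:thmnondeg} with $d = 2$.
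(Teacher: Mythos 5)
Your proposal is correct and follows the same route as the paper: apply Theorem~\ref{gs:thmdegrot}, verify Hypothesis~\ref{hypA} via Lemma~\ref{Spinbahn-1d}, and use analytic perturbation theory to propagate the irreducible twofold-degenerate ground state space from $\beta=0$ to small $|\beta|$. You merely spell out in more detail what the paper compresses into ``it follows from perturbation theory that this property remains preserved,'' namely the rigidity of the $SU(2)$ representation on the rank-two spectral projection together with Schur's lemma, and you explicitly check the rotational invariance of the spin--orbit term for Hypothesis~\ref{hyp:rsym}(i).
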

 \begin{proof} 
 For the proof we will use Theorem \ref{gs:thmdegrot}.
To verify  Hypothesis \ref{hypA} we note that it follows from Lemma \ref{Spinbahn-1d}, that $I_{\rm SB}(\nu_\epsilon)$ is infinitesimally bounded
 w.r.t. to $-\Delta$. Since  $-\Delta - \frac{Z}{|x|}$ has a ground state with ground state space being an
 irreducible representation of $SU(2)$, it follows from perturbation theory, that for   $|\beta| $ sufficiently small
 this property remains preserved.
   In addition Hypothesis  \ref{hyp:rsym} holds by the assumptions.
 \end{proof}

 Next we consider excited states if the atomic part is given by \eqref{spinorbitel},
  which upon coupling turn in to  so called resonance states.

   \begin{corollary}  \label{excitedhydroen} 
   Suppose  $H_{\rm el}= H_{\rm H}(Z,\beta,\epsilon) $ for some $Z > 0$ and $\epsilon > 0$ and $\beta \in \R$.
   Suppose that   $E_{\rm el}$ is an   eigenvalue of  $H_{\rm el}$ below the essential spectrum.
   Assume that its eigenspace has dimension $d$ and  $\mathcal{R}_{\rm el} $  acts irreducibly on that  eigenspace having  $\varphi_{{\rm el},j}$, $j=1,...,d$  as a basis of  eigenvectors.
   Then
 Property \ref{prop:gs:thmnondegres} holds.
\end{corollary}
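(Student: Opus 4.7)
The plan is to invoke Theorem \ref{gs:thmrotationres} with $H_{\rm el} = H_{\rm H}(Z,\beta,\epsilon)$. The eigenvalue data required by that theorem -- an eigenvalue $E_{\rm el}$ below the essential spectrum with a $d$-dimensional eigenspace on which $\mathcal{R}_{\rm el}$ acts irreducibly -- is supplied directly by the assumptions of the corollary, so the proof reduces to verifying Hypotheses \ref{hypA}, \ref{hyp:kappa}, \ref{hyp:analytph}, and \ref{hyp:rsym}.

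For Hypothesis \ref{hypA}, part (i) is the standard fact that $-Z/|x|$ is infinitesimally $-\Delta$-bounded on $L^2(\R^3)$; part (ii) follows from Lemma \ref{Spinbahn-1d} because $\nu_\epsilon$ defined in \eqref{spinorbitexpl-2} is bounded by $\epsilon^{-3}$, so in particular lies in $L^\infty(\R^3) + [L^3(\R^3)]_\delta$ trivially. For part (iii) one must show that $\inf \sigma(H_{\rm H}(Z,\beta,\epsilon))$ is an isolated eigenvalue for \emph{every} $\beta \in \R$. Both $-Z/|x|$ and $I_{\rm SB}(\nu_\epsilon)$ are relatively compact with respect to $-\Delta$ (the latter via Lemma \ref{Spinbahn}, using that $\nu_\epsilon \in L^2(\R^3)$ by a direct computation in spherical coordinates), so Weyl's theorem gives $\sigma_{\rm ess}(H_{\rm H}(Z,\beta,\epsilon)) = [0,\infty)$. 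On the other hand, taking the hydrogenic $1s$ trial state $\phi_0$ of $-\Delta - Z/|x|$, one has $L\phi_0 = 0$, and since $\nu_\epsilon$ is a radial vector field one has $\nu_\epsilon(x) \wedge p = f(|x|) L$ for a suitable radial scalar $f$. Consequently $I_{\rm SB}(\nu_\epsilon)\phi_0 = 0$, and the min-max principle yields $\inf \sigma(H_{\rm H}(Z,\beta,\epsilon)) \leq \langle \phi_0, (-\Delta - Z/|x|)\phi_0\rangle < 0$ uniformly in $\beta$.

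Hypotheses \ref{hyp:kappa} and \ref{hyp:analytph} hold for the Gaussian photon cutoff \eqref{couplingfn} by Remarks \ref{remoncouplibng1} and \ref{remoncouplibng2}. Hypothesis \ref{hyp:rsym}(i) holds because $V = -Z/|x|$ is manifestly $SO(3)$-invariant and $I_{\rm SB}(\nu_\epsilon)$ is the familiar rotationally invariant $L \cdot S$ coupling; (ii) is the rotational invariance of the spatial cutoff $\chi$ assumed at the start of Section \ref{sec:examples}; and (iii) holds since $\rho(\boldsymbol{k}) = e^{-(\boldsymbol{k}/\Lambda)^2}$ depends only on $|\boldsymbol{k}|$. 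Applying Theorem \ref{gs:thmrotationres} then delivers Property \ref{prop:gs:thmnondegres}.

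The main subtlety, distinguishing this statement from the previous corollaries, is that $\beta \in \R$ is arbitrary and not small, so analytic perturbation theory is unavailable to preserve isolation of the spectrum. The cancellation $I_{\rm SB}(\nu_\epsilon)\phi_0 = 0$ arising from the vanishing orbital angular momentum of the $1s$ state is the key observation that rescues the argument and yields Hypothesis \ref{hypA}(iii) uniformly in $\beta$; all other hypotheses are routine verifications from the structural assumptions of Section \ref{sec:examples}.
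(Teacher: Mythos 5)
Your proposal is correct and follows the same route as the paper: invoke Theorem \ref{gs:thmrotationres} and check Hypotheses \ref{hypA}, \ref{hyp:kappa}, \ref{hyp:analytph}, \ref{hyp:rsym}, with Hypothesis \ref{hypA}(ii) supplied by Lemma \ref{Spinbahn-1d} and the remaining ones holding by the standing assumptions of Section \ref{sec:examples}. The one place you diverge is the Weyl-theorem-plus-trial-state argument for Hypothesis \ref{hypA}(iii); while the cancellation $I_{\rm SB}(\nu_\epsilon)\phi_0=0$ is a nice observation, that whole detour is unnecessary here, because the corollary already \emph{assumes} the existence of an eigenvalue $E_{\rm el}$ strictly below the essential spectrum, which for a self-adjoint operator bounded below forces $\inf\sigma(H_{\rm el})<\Sigma_{\rm el}$ and hence makes the infimum an isolated eigenvalue of finite multiplicity by min--max — no smallness of $\beta$ and no separate lower bound on the spectrum is needed.
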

 \begin{proof} 
 The proof follows from Theorem \ref{gs:thmrotationres}.
 To verify  Hypothesis \ref{hypA} we note that it follows from Lemma \ref{Spinbahn-1d}, that $I_{\rm SB}(\nu_\epsilon)$ is infinitesimally bounded.
   In addition Hypothesis  \ref{hyp:rsym} holds by the assumptions.
 \end{proof}

Next we show that the irreducibility assumption of Corollary \ref{excitedhydroen}  is indeed satisfied for the lowest excited states of the hydrogen atom with spin-orbit coupling.

\begin{lemma} \label{lem:perturbationhyd}
Let $Z > 0$. Then there exist  $\epsilon_0 > 0$ and $\beta_0 > 0$ such that for all $\beta \in (-\beta_0,\beta_0) \setminus \{ 0 \}$
the operator $H_{\rm H}(Z,\beta,\epsilon_0)$ has eigenvalues $E_{1,1}(\beta)$, $E_{1,2}(\beta)$, $E_{1,3}(\beta)$ depending analytically on $\beta$ for $|\beta | < \beta_0$ and as functions of $\beta$ converge for $\beta \to 0$ to the first  excited eigenvalue of $-\Delta -  Z/|x| $.
Furthermore,  the values  $E_{1,1}(\beta)$ $E_{1,2}(\beta)$, $E_{1,3}(\beta)$ are different for nonzero $\beta$ with $|\beta| < \beta_0$.
The action $SU(2)$, defined in  \eqref{defofsu2},  acts irreducibly on the eigenspaces of $E_{1,1}(\beta)$, $E_{1,2}(\beta)$, $E_{1,3}(\beta)$.
\end{lemma}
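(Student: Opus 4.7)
The plan is to apply Kato's analytic perturbation theory to
$H_H(Z,\beta,\epsilon_0) = H_0 + \beta\,I_{\rm SB}(\nu_{\epsilon_0})$ with
$H_0 = -\Delta - Z/|x|$ on $\HH_{\rm el} = L^2(\R^3;\C^2)$, and then to use the
simultaneous symmetry group $SU(2)\times\Z_2$ (rotations and parity) to separate the
three eigenvalue branches emanating from the first excited level of $H_0$.
Fix any $\epsilon_0 > 0$; by Lemma~\ref{Spinbahn-1d}, $I_{\rm SB}(\nu_{\epsilon_0})$ is
infinitesimally $H_0$-bounded, so $\beta \mapsto H_H(Z,\beta,\epsilon_0)$ is an analytic
family of type~(A). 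The first excited eigenvalue of $H_0$ is $E_2 = -Z^2/16$, isolated with
multiplicity $8$ (one $2s$ orbital and three $2p$ orbitals, each tensored with $\C^2$);
by Kato's theorem the Riesz projection $P(\beta)$ onto the eigenvalues in a small disc
around $E_2$ is analytic near $\beta = 0$ with constant rank $8$.

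The crucial step is to use two commuting symmetries of $H_H(Z,\beta,\epsilon_0)$ for every
$\beta$: the $SU(2)$ action $\mathcal R_{\rm el}$ (Hypothesis~\ref{hyp:rsym}) and the
parity operator $\Pi$, $(\Pi\psi)(x) = \psi(-x)$. Parity invariance of $I_{\rm SB}(\nu_{\epsilon_0})$
follows from $\nu_{\epsilon_0}(-x) = -\nu_{\epsilon_0}(x)$ together with $p\to-p$ and $S\to S$.
Under $SU(2)\times\Z_2$ the range of $P(0)$ decomposes into three pairwise inequivalent
irreducibles labelled by $(j,\pi)$: $(1/2,+)$ from $2s$ (dim $2$), $(1/2,-)$ from $2p_{1/2}$
(dim $2$), and $(3/2,-)$ from $2p_{3/2}$ (dim $4$). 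Because the orthogonal projections onto
these isotypic components commute with $H_H$ for all $\beta$, they restrict inside
$\mathrm{Ran}\,P(\beta)$ to analytic projections $P_k(\beta)$, $k=1,2,3$, of constant ranks
$2,2,4$. Schur's lemma then forces $H_H(Z,\beta,\epsilon_0)|_{\mathrm{Ran}\,P_k(\beta)}
= E_{1,k}(\beta)\,\one$ with $E_{1,k}$ analytic in $\beta$, and each eigenspace is an
irreducible $SU(2)$-representation by construction.

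Distinctness of the three eigenvalues for $\beta \neq 0$ follows from a first-order
computation. Since $\nu_{\epsilon_0}$ is a radial vector field,
$\nu_{\epsilon_0}(x)\wedge p = h(|x|)\,L$ for a strictly positive radial function $h$,
so $I_{\rm SB}(\nu_{\epsilon_0}) = h(|x|)\,L\cdot S$. Using $2L\cdot S = J^2 - L^2 - S^2$,
the first-order coefficient on each isotypic block is $0$, $-c$, $c/2$ for
$(1/2,+)$, $(1/2,-)$, $(3/2,-)$ respectively, with
$c = \int_0^\infty |R_{2p}(r)|^2 h(r)\, r^2\, dr > 0$. Hence
\[
E_{1,1}(\beta) = E_2 + O(\beta^2),\quad
E_{1,2}(\beta) = E_2 - c\beta + O(\beta^2),\quad
E_{1,3}(\beta) = E_2 + \tfrac{c}{2}\beta + O(\beta^2),
\]
and the three pairwise differences are all of leading order $|\beta|$; shrinking $\beta_0$
ensures they remain nonzero for every $\beta \in (-\beta_0,\beta_0) \setminus \{0\}$.

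The main obstacle is that $SU(2)$-symmetry alone does not suffice: the range of $P(0)$
contains two equivalent copies of $D_{1/2}$ (from $2s$ and from $2p_{1/2}$), so a
generic rotation-invariant perturbation could mix them and yield only a single analytic
branch on that four-dimensional subspace instead of two. Adjoining parity, which
$I_{\rm SB}(\nu_{\epsilon_0})$ fortunately preserves because $\nu_{\epsilon_0}$ is a radial
vector field, is what keeps the two copies of $D_{1/2}$ separated and delivers three
analytic eigenvalue branches with the correct irreducibility properties.
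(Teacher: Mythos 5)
Your proof is correct, and the overall strategy is the same as the paper's: split the eight\hbox{-}dimensional $n=2$ eigenspace of $-\Delta - Z/|x|$ into three pairwise inequivalent symmetry blocks of dimensions $2,2,4$, observe that the perturbation $I_{\rm SB}(\nu_{\epsilon_0}) = h(|x|)\,L\cdot S$ respects this decomposition and acts as the scalars $0$, $-c$, $c/2$ on the blocks at first order, and conclude by Kato's theory. The one genuine difference is the choice of the extra commuting observable used to separate the two equivalent spin\hbox{-}$\tfrac12$ copies: the paper diagonalizes $L^2$ and $J^2$ (labelling blocks by $(l,j)$), whereas you adjoin parity and work with the isotypic decomposition of $SU(2)\times\Z_2$. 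On the $n=2$ shell the two decompositions coincide ($l=0$ is even, $l=1$ is odd), and both are legitimate since $h(|x|)\,L\cdot S$ commutes with $L^2$ as well as with $\Pi$; your parity formulation has the small advantage that inequivalence of the three blocks is immediate from the group labels $(j,\pi)$, so Schur's lemma applies directly. You are also more explicit than the paper on a point it delegates to ``analytic perturbation theory'': the constant\hbox{-}rank analytic projections $Q_kP(\beta)$ show that each perturbed eigenspace is exactly one deformed block and hence still a single irreducible $SU(2)$\hbox{-}representation. Two cosmetic remarks: the lemma as stated lets you choose $\epsilon_0$, and in fact any $\epsilon_0>0$ works (as your $c>0$ shows; the paper's ``$\epsilon$ sufficiently small'' is only used to compare $c_R(\epsilon)$ with $c_R$), and your identification $\nu_{\epsilon_0}(x)\wedge p = h(|x|)L$ presupposes reading \eqref{spinorbitexpl-2} as the radial vector field $\min\{\epsilon^{-3},|x|^{-3}\}\,x$, which is the intended meaning and is also how the paper's own computation uses it.
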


\begin{remark} {\rm  Corollary  \ref{excitedhydroen} together with  Lemma \ref{lem:perturbationhyd}
imply the following.
Let $Z > 0$.
There exists  $\epsilon_0 >0$ and $\beta_0> 0$ such that for  all $\beta \in (-\beta_0 , \beta_0 ) \setminus \{ 0 \}$ the following holds.
If  $E_{\rm el}$ is the first, second, or third excited eigenvalue of $H_{\rm H}(Z,\beta,\epsilon_0)$ with eigenbasis   $\varphi_{{\rm el},j}$ $j=1,...,d$, then Property  \ref{prop:gs:thmnondegres} holds.
 }
\end{remark}

\begin{proof}[Proof of Lemma \ref{lem:perturbationhyd}]
This follows from first order perturbation theory. Let $\HH := L^2(\R^3;\C^2)$.
Note that $I_{SB}(\nu_\epsilon)$ is for any $\epsilon > 0$ infinitesimally $-\Delta$ bounded.
It is well known that the first excited eigenvalue  of $H_{\rm H}(Z,0,\epsilon_0)$,  denoted  $E_2$,
is degenerate. This eigenspace $\HH_2 := 1_{H_{\rm H}(Z,0,\epsilon_0) = E_2} \HH$ decomposes into   representations of $L$. Note that $H_{\rm H}(Z,0,\epsilon_0)$ commutes
with $L^2$ and $S^2$. It is well known that the eigenvalues of $L^2$ are given by $l(l+1)$ with $l \in \N_0$.
It is well known that with
\begin{align}
 \HH_{2,l} & := 1_{H_{\rm H}(Z,0,\epsilon_0) = E_2, L^2 = l(l+1)}\HH  ,
\end{align}
we have
\begin{align}
\HH_2 & = \bigoplus_{l=0,1} \HH_{2,l}  .
\end{align}
Now the summands on the right hand side still carry a ``spin degeneracy''. For the further decomposition
we define   $J = L+S$. Note that $J^2$ commutes with  $H_{\rm H}(Z,0,\epsilon_0)$ and $J^2$  has the eigenvalues given
by $j(j+1)$ with $j \in \frac{1}{2}\N$.  Again it is well known that with
\begin{align} \label{defeigen}
\HH_{2,l,j} :=  1_{H_{\rm H}(Z,0,\epsilon_0) = E_2, L^2 = l(l+1), J^2=j(j+1)} \HH .
\end{align}
we have
\begin{align}
\HH_2= \bigoplus_{(l,j) \in \{(0,1/2), (1,1/2 ), (1,3/2)\} } \HH_{2,l,j}  .
\end{align}
Now it is well known from group theory \cite{Sim.97} that the  summands  $\HH_{2,l,j}$   are irreducible with respect to $J$, i.e.,
 the representation $SU(2)$ defined in \eqref{defofsu2}.
Now we find from group theory, cf.  \cite[pages 217-218]{Schwabl2007},  that there exists a positive constant $c_{G} > 0$
 such that for  $\varphi \in \HH_{2,l,j} $
 \begin{align}
 L \cdot S \varphi  =  c_G
  \left\{ \begin{array}{ll} l , & j=l+1/2 \\ -l- 1 , & j=l-1/2 \end{array} \right\}  \varphi .
 \end{align}
By a calculation in \cite{Schwabl2007} there exists a constant $c_R > 0$
such that for  all normalized $\varphi \in \HH_{2,1,j} $,
 \begin{align}
  \langle \varphi , r^{-3}  \varphi \rangle =  c_R     .
 \end{align}
Thus  by rotation invariance of  $\nu_{\epsilon }$ and dominated convergence there exists a  $c_R(\epsilon)$
such that
 \begin{align}
 \langle \varphi , \nu_{\epsilon }  \varphi \rangle =  c_R(\epsilon)      , \quad \varphi \in \HH_{2,1,j} , j=1/2,3/2  ,
 \end{align}
 and $\lim_{\epsilon \downarrow 0} c_R(\epsilon) = c_R$.
Thus we find for normalized $\varphi \in  \HH_{2,l,j}$  that
\begin{align} \label{splitingfirsthyd}
 \langle  \varphi , I_{\rm SB}(\nu_\epsilon) \varphi \rangle
 = \left\{ \begin{array}{ll} 0  , & (l,j) = (0,1/2)  \\ - 2 c_G c_R(\epsilon) , & (l,j) = (1,1/2) \\
 c_G c_R(\epsilon ) , & (l,j) = (1,3/2) \end{array}  \right. .
\end{align}
Now observe that for $\epsilon$ sufficiently small the values on the right hand side of  \eqref{splitingfirsthyd} are distinct.
Thus the claim now follows from analytic perturbation theory \cite{ReeSim4}.
\end{proof}

\subsection{Atoms}
In this subsection we consider more general atoms or ions. Such quantum system have a natural $SU(2)$ symmetry.
For this we  recall the  particle Hilbert space in  \eqref{defofHelSec6}.

We study the atomic operator \eqref{electronicham} where we make the following explicit choice for the interaction
\begin{align} \label{atom}
H_{\rm at}(N,Z,\mu_{\rm e,e} , \beta, \epsilon) =    \sum_{j=1}^N   \left(   p_j^2 -   \frac{Z }{|x_j |} \right)  +   \sum_{\substack{ i,j  : \\  j < i }}  \frac{\mu_{\rm e,e}}{|x_i-x_j|}  + \beta  I_{\rm SB}(\nu_\epsilon)
\end{align}
for  $Z > 0$  the nuclear charge and  $\mu_{e,e} \geq 0 $,   a number which in physical applications is $\mu_{e,e} = e^2$.
We recall the definition given in Eqns. \eqref{spinorbitexpl} and \eqref{spinorbitexpl-2}.
Then Hypothesis \ref{hypA} (i) is satisfied, since the Coulomb potentials are infinitesimally bounded.  Furthermore, we already observed that
 $I_{SB}(\nu_\epsilon)$ is of the form  such that Hypothesis \ref{hypA} (ii) holds.

First we state the result about the ground state.

 \begin{corollary}\label{cor:atomion}
Suppose that $E_{\rm el}$  the ground state energy of  $H_{\rm el} = H_{\rm at}(N,Z,\mu_{\rm e,e} , \beta, \epsilon) $
is an  isolated eigenvalue with eigenspace being irreducible w.r.t. to $\mathcal{R}_{\rm el}$, having  $\varphi_{{\rm el},j}$, $j=1,...,d$  as a basis of  eigenvectors.
Then
Properties  \ref{prop:gs:thmnondeg-0}  and \ref{prope:gs:thmnondeg} hold.
\end{corollary}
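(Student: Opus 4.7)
The statement is set up so that it follows directly from Theorem \ref{gs:thmdegrot}; the task is to verify each of the Hypotheses \ref{hypA}, \ref{hyp:kappa}, \ref{hyp:rsym} (for Property \ref{prop:gs:thmnondeg-0}) and additionally \ref{hyp:analytph} (for Property \ref{prope:gs:thmnondeg}) in the specific situation $H_{\rm el} = H_{\rm at}(N,Z,\mu_{\rm e,e},\beta,\epsilon)$ introduced in \eqref{atom}. So the plan is a checklist.

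First, I verify Hypothesis \ref{hypA}. Part (i): the potential $V = \sum_j -Z/|x_j| + \sum_{i<j} \mu_{\rm e,e}/|x_i-x_j|$ is the usual Coulomb potential in \eqref{coulombpot} (up to rescaling), hence infinitesimally operator bounded with respect to $-\Delta$ by the standard Kato bound, as already noted in the remark after Hypothesis \ref{hypA}. Part (ii): since $\nu_\epsilon(x) = \min\{\epsilon^{-3},|x|^{-3}\}$ is bounded on $\R^3$, we have $\nu_\epsilon \in L^\infty(\R^3)$, and therefore $\nu_\epsilon \in L^\infty(\R^3) + [L^3(\R^3)]_{\tilde\epsilon}$ trivially, so Lemma \ref{Spinbahn-1d} applies and $I_{\rm SB}(\nu_\epsilon)$ is infinitesimally Laplacian bounded (this is the observation recorded between \eqref{spinorbitexpl-2} and the statement of the corollary). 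Part (iii) is precisely the assumption of the corollary, namely that $E_{\rm el}$ is an isolated eigenvalue of $H_{\rm el}$ at the bottom of the spectrum.

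Next I verify the hypotheses on the photon cutoff. Because the corollary is placed within Section \ref{sec:examples}, the cutoff function is fixed as $\rho(\boldsymbol{k}) = e^{-(\boldsymbol{k}/\Lambda)^2}$ in \eqref{couplingfn}; Remark \ref{remoncouplibng1} states that this $\rho$ satisfies Hypothesis \ref{hyp:kappa}, and Remark \ref{remoncouplibng2} states that it also satisfies Hypothesis \ref{hyp:analytph}, with both verifications being elementary Gaussian estimates.

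Finally, I verify Hypothesis \ref{hyp:rsym}. Part (iii) is immediate from rotational invariance of the Gaussian $\rho$. Part (ii) is the assumption made in Section \ref{sec:examples} that $\chi$ is rotationally invariant. Part (i) requires a short check of the atomic potential $V_{\rm el} = V + \beta I_{\rm SB}(\nu_\epsilon)$: the Coulomb pieces depend only on $|x_j|$ and $|x_i - x_j|$ and so commute with $\mathcal{R}_{\rm el}(U)$ for all $U \in SU(2)$; for the spin-orbit part, Lemma \ref{lem:propoftfrot}(b) gives $\mathcal{R}(U) x_j \mathcal{R}(U)^* = \pi(U)^{-1} x_j$, $\mathcal{R}(U) p_j \mathcal{R}(U)^* = \pi(U)^{-1} p_j$, $\mathcal{R}(U) S_j \mathcal{R}(U)^* = \pi(U)^{-1} S_j$, and since $\nu_\epsilon(x) = x f(|x|)$ with $f(r) = \min\{\epsilon^{-3},r^{-3}\}/r$ (a rotation-equivariant radial vector field, so that $\nu_\epsilon(Rx) = R \nu_\epsilon(x)$ for $R \in SO(3)$), the scalar $(\nu_\epsilon(x_j)\wedge p_j)\cdot S_j$ is invariant under simultaneous rotation of all three factors by $\pi(U)^{-1} \in SO(3)$. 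Hence $I_{\rm SB}(\nu_\epsilon)$ commutes with $\mathcal{R}_{\rm el}(U)$.

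With all hypotheses of Theorem \ref{gs:thmdegrot} verified, and with the irreducibility of $\mathcal{R}_{\rm el}$ on the ground state eigenspace given by assumption, Theorem \ref{gs:thmdegrot} immediately yields Property \ref{prop:gs:thmnondeg-0} and, since Hypothesis \ref{hyp:analytph} also holds, Property \ref{prope:gs:thmnondeg}. There is no real obstacle to this proof; the only mildly substantive step is confirming rotation-equivariance of $\nu_\epsilon$, which is where the specific choice \eqref{spinorbitexpl-2} is used in an essential way.
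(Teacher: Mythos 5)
Your proposal is correct and follows essentially the same route as the paper: the paper's proof likewise just invokes Theorem \ref{gs:thmdegrot} after noting that Hypothesis \ref{hypA}(i),(ii) are settled by the Coulomb bound and Lemma \ref{Spinbahn-1d}, that \ref{hypA}(iii) is the stated assumption, and that Hypotheses \ref{hyp:kappa}, \ref{hyp:analytph}, \ref{hyp:rsym} hold by the standing choices of $\rho$ and $\chi$ in Section \ref{sec:examples}. Your explicit check of the rotation-equivariance of $\nu_\epsilon$ for Hypothesis \ref{hyp:rsym}(i) is a bit more detailed than the paper, which treats this as part of the assumptions, but it is the same argument.
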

\begin{proof}
The proof follows as a corollary  of Theorem
\ref{gs:thmdegrot}.
Hypothesis \ref{hypA} (iii), \ref{hyp:kappa} and \ref{hyp:rsym} hold by assumption.
\end{proof}

Next we consider resonances.

 \begin{corollary}\label{cor:atomionres}
Suppose that $E_{\rm el}$ is an eigenvalue  of  $H_{\rm el} = H_{\rm at}(N,Z,\mu_{\rm e,e} , \beta, \epsilon) $  below the essential spectrum.
If the   eigenspace is an  irreducible representation w.r.t. to $\mathcal{R}_{\rm el}$, having  $\varphi_{{\rm el},j}$, $j=1,...,d$  as a basis of  eigenvectors.
Then
Property \ref{prop:gs:thmnondegres}   holds. 
\end{corollary}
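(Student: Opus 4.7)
My plan is to obtain Corollary~\ref{cor:atomionres} directly from Theorem~\ref{gs:thmrotationres} by verifying each of its hypotheses for the concrete operator $H_{\rm el} = H_{\rm at}(N,Z,\mu_{\rm e,e},\beta,\epsilon)$. There is no new analytic content beyond checking that the explicit choices made in Section~\ref{sec:examples} fit into the abstract framework; the heavy lifting (operator-theoretic renormalization and the symmetry argument protecting the degeneracy) has already been done in the general theorem.

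First I would check Hypothesis~\ref{hypA}. Parts (i) and (ii) are immediate: the one-body Coulomb attraction $-Z/|x_j|$ and the two-body repulsion $\mu_{\rm e,e}/|x_i-x_j|$ are infinitesimally $-\Delta$-bounded by the standard Kato argument (already invoked after \eqref{coulombpot}), and $I_{\rm SB}(\nu_\epsilon)$ is infinitesimally Laplacian bounded by Lemma~\ref{Spinbahn-1d}, since $\nu_\epsilon$ defined in \eqref{spinorbitexpl-2} is bounded (it lies in $L^\infty(\R^3)$, so in particular in $L^\infty(\R^3)+[L^3(\R^3)]_\epsilon$). Part (iii), that $E_{\rm el}$ is isolated, follows from the assumption of the corollary that $E_{\rm el}$ lies below the essential spectrum of $H_{\rm el}$.

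Next I would verify Hypothesis~\ref{hyp:kappa} and Hypothesis~\ref{hyp:analytph}. Both are consequences of the explicit choice $\rho(\boldsymbol{k}) = e^{-(\boldsymbol{k}/\Lambda)^2}$ in \eqref{couplingfn}, as recorded in Remarks~\ref{remoncouplibng1} and~\ref{remoncouplibng2}. Finally, I would verify the rotational invariance Hypothesis~\ref{hyp:rsym}: part~(i) holds because the Coulomb pair potentials in \eqref{atom} and the rotationally invariant spin--orbit term $I_{\rm SB}(\nu_\epsilon)$ (note that $\nu_\epsilon(x)$ depends only on $|x|$) commute with $\RR_{\rm el}(U)$ for all $U\in SU(2)$; parts~(ii) and~(iii) are built into the assumption that $\chi$ is rotationally symmetric and into the rotational invariance of the Gaussian $\rho$.

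Since the eigenspace at $E_{\rm el}$ is irreducible under $\mathcal{R}_{\rm el}$ with basis $\varphi_{{\rm el},j}$, $j=1,\ldots,d$, by assumption of the corollary, all hypotheses of Theorem~\ref{gs:thmrotationres} are met, and Property~\ref{prop:gs:thmnondegres} follows. The only step with any content is the verification of Hypothesis~\ref{hypA}(ii) for the singular $\nu_\epsilon$, but the truncation by $\epsilon^{-3}$ in \eqref{spinorbitexpl-2} is exactly what makes $\nu_\epsilon$ bounded and reduces the matter to Lemma~\ref{Spinbahn-1d}; I do not expect any genuine obstacle.
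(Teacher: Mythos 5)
Your proposal is correct and follows exactly the paper's route: the paper also deduces the corollary from Theorem~\ref{gs:thmrotationres}, having already noted in Section~\ref{sec:examples} that Hypothesis~\ref{hypA}(i),(ii) follow from the Kato bound on the Coulomb terms and Lemma~\ref{Spinbahn-1d} for $I_{\rm SB}(\nu_\epsilon)$, while Hypotheses~\ref{hypA}(iii), \ref{hyp:kappa}, \ref{hyp:analytph} and \ref{hyp:rsym} hold by the standing assumptions of that section. Your write-up merely makes these verifications more explicit than the paper's one-line proof.
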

\begin{proof}
The proof follows as a corollary of
Theorem \ref{gs:thmrotationres}.
Hypothesis \ref{hypA} (iii), \ref{hyp:kappa}, \ref{hyp:analytph} and \ref{hyp:rsym} hold by assumption.
 \end{proof}

Let us now discuss when the  assumptions of
Corollary \ref{cor:atomion} and \ref{cor:atomionres}
are satisfied.

\begin{example} {\rm  The case $N =  2$. (Helium)
Then the Hamiltonian \eqref{atom} leaves the subspaces in the following direct sum decomposition
invariant
\begin{align}
P_{a} \left[ L^2(\R^3 ; \C^2 )^{\otimes 2 } \right] = [ L^2(\R^3)^{\otimes_s  2} \otimes (\C^2)^{\otimes_a 2}] \oplus
 [ L^2(\R^3)^{\otimes_a 2 } \otimes (\C^2)^{\otimes_s 2}] .
\end{align}
Elements  in the first and second summand are referred to as  Parahelium and  Orthohelium states, respectively.

\begin{itemize}
\item Suppose  $Z=2$ and $\mu_{\rm e,e} =1$. Then the Hamiltonian describes Helium.
In first approximation one neglects the spin orbit coupling and sets  $\beta = 0$.
In that case physical arguments indicate that the infimum of the spectrum, $E_{\rm el}$  of $H_{\rm at}(2,2,1 , 0, \cdot)$ is an isolated eigenvalue
with multiplicity one and the corresponding eigenvector $\varphi_{{\rm el},1}$ is a Parahelium state \cite{Schwabl2007}.
Now by analytic perturbation theory this property does not change for $|\beta|$ small.
Thus in such a situation the assumption of Theorem \ref{gs:thmnondeg} hold.
\item  Suppose  $Z=2$. For  sufficiently small  $|\mu_{\rm e,e}|$  and  $|\beta|$ one can
show by means of perturbation theory that the Hamiltonian has eigenvalues below the essential spectrum. Furthermore, the  eigenvalues have eigenspaces which may be  lifted because of  nonzero values for  $\mu_{\rm e,e}$
or $\beta$, such that the eigenspaces are  irreducible representation spaces  of $\mathcal{R}_{\rm el}$, see
\cite{Schwabl2007} where first order calculations have been performed, provided one chooses $\mu_{\rm e,e}$
or $\beta$ nonzero.
Thus in such a situations the assumption of  Corollary \ref{cor:atomionres}   hold.
\end{itemize}
}
\end{example}

For larger atoms we refer to the NIST Atomic Spectra Database \cite{NIST}. See also the electron shell model in the literature \cite{Schwabl2007}.

\subsection{Molecules and more general Potentials}
Let us now discuss applications of the Theorems to molecular systems. 
Recall the particle Hilbert space in \eqref{defofHelSec6}.
 We assume in this subsection that
\begin{align} \label{eq:Hgen}
 H_{\rm gen}(V) =  \sum_{j=1}^N  T(p_j)   + V(x_1,....x_N)
\end{align}
where $V$ is infinitesimally bounded w.r.t $-\Delta$ and symmetric w.r.t. interchange of particle coordinates.
We note that this includes molecules where the potential is given by the potentials of $M$ nuclei at sites $R_j \in \R^3$
with nuclear charges $Z_j > 0$
\begin{align} \label{defofmol}
V(\mu_{\rm e,e}; x_1,....,x_N) =  - \sum_{j=1}^M \frac{Z_j}{|x - R_j |} + \sum_{i,j: j < i }  \frac{\mu_{\rm e,e}}{|x_i-x_j|} ,
\end{align}
We note that it follows from \eqref{couplingfn} and the reality assumption of the potential that Hypothesis \ref{hyp:kappa}, \ref{hyp:analytph} and \ref{hyp:tinvgs} hold.
Moreover, the Hypothesis \ref{hypA} (i) is satisfied, since the Coulomb potentials are infinitesimally bounded.

Let us first consider the ground state.

\begin{corollary}\label{cor:time:ex:gs}  Suppose that the ground state energy $E_{\rm el}$ is an isolated eigenvalue of $H_{\rm el} =  H_{\rm gen}(V) $ with multiplicity $d=2$, having the $\varphi_{{\rm el},j}$, $j=1,2$ as eigenvectors. Then
Properties  \ref{prop:gs:thmnondeg-0}  and \ref{prope:gs:thmnondeg} hold.
\end{corollary}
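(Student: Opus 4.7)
The plan is to apply Theorem~\ref{gs:thmdegtime} directly, so the proof reduces to verifying its hypotheses in the setting $H_{\rm el}=H_{\rm gen}(V)$. The standing assumption $s=1/2$ from the start of Section~\ref{sec:examples} supplies the spin condition; the implicit parity assumption that $N$ is odd (natural here, since the two-fold degeneracy is being attributed to time reversal via Kramers' mechanism) supplies the other parity requirement of that theorem, and Lemma~\ref{lem:timeirred} then ensures $\mathcal{T}_{\rm el}$ acts irreducibly on the two-dimensional ground state eigenspace.

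First I would verify Hypothesis~\ref{hypA}: part (i) is exactly the assumption that $V$ is infinitesimally $-\Delta$-bounded; part (ii) is automatic because $H_{\rm gen}(V)$ contains no spin--orbit coupling (so $I_{\rm SB}\equiv 0$); and part (iii) is the standing assumption that $E_{\rm el}$ is an isolated eigenvalue. Next, Hypotheses~\ref{hyp:kappa} and \ref{hyp:analytph} are guaranteed by the Gaussian cutoff $\rho(\boldsymbol{k})=e^{-(\boldsymbol{k}/\Lambda)^2}$ fixed in~\eqref{couplingfn}, as noted in Remarks~\ref{remoncouplibng1} and \ref{remoncouplibng2}.

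The remaining piece is Hypothesis~\ref{hyp:tinvgs}: part (ii), $\rho(-\boldsymbol{k})=\overline{\rho(\boldsymbol{k})}$, is immediate since the Gaussian is real and even, while part (i), the symmetry $\mathcal{T}_{\rm el} V \mathcal{T}_{\rm el}^{*}=V$, follows from the reality of $V$ together with $\mathcal{T} x_j \mathcal{T}^{*}=x_j$ from Lemma~\ref{lem:propoftf}(c) (a real multiplication operator in the position variables is invariant under the antilinear $\mathcal{T}_{\rm el}$). With all hypotheses in place, Theorem~\ref{gs:thmdegtime} delivers Property~\ref{prop:gs:thmnondeg-0}, and using Hypothesis~\ref{hyp:analytph} also Property~\ref{prope:gs:thmnondeg}. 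There is essentially no obstacle of substance: the heavy analytical lifting is packaged into Theorem~\ref{gs:thmdegtime} and ultimately into the abstract result of~\cite{HasLan23-2}. The only point deserving care in a written-out proof is making the tacit parity assumption on $N$ explicit.
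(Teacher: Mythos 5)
Your proposal is correct and follows essentially the same route as the paper: the paper's proof is a one-line appeal to Theorem~\ref{gs:thmdegtime} (noting only that Hypothesis~\ref{hypA}(iii) holds by assumption), with Hypotheses~\ref{hypA}(i), \ref{hyp:kappa}, \ref{hyp:analytph} and \ref{hyp:tinvgs} already checked in the surrounding text of Section~\ref{sec:examples} exactly as you do. Your observation that the parity condition ($N$ odd, needed so that $\mathcal{T}_{\rm el}^2=-1$ and hence for the irreducibility supplied by Lemma~\ref{lem:timeirred}) is left tacit in the corollary's statement is accurate --- the paper's own proof does not address it either.
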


\begin{proof}
	The proof follows as a corollary of Theorem \ref{gs:thmdegtime}. Note that Hypothesis \ref{hypA} (iii) holds by assumption.
\end{proof}

Now we consider resonances.

\begin{corollary} \label{cor:time:ex:res}  Suppose that  $E_{\rm el}$ is an  eigenvalue of $H_{\rm el} =  H_{\rm gen}(V) $ below the essential spectrum with multiplicity $d=2$, having the $\varphi_{{\rm el},j}$, $j=1,2$ as eigenvectors. Then
Property  \ref{prop:gs:thmnondegres} holds.
\end{corollary}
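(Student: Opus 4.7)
The plan is to reduce the corollary to a direct application of Theorem~\ref{gs:thmtimeres}, completely parallel to how Corollary~\ref{cor:time:ex:gs} was obtained from Theorem~\ref{gs:thmdegtime}, but with resonance states in place of ground states. The task is therefore purely one of verifying the hypotheses of Theorem~\ref{gs:thmtimeres} in the present concrete setting, and no new analytic machinery needs to be introduced.

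First I would check the hypotheses on $H_{\rm el}=H_{\rm gen}(V)$. Hypothesis~\ref{hypA}(i) holds by the standing assumption that $V$ is infinitesimally $-\Delta$-bounded. Hypothesis~\ref{hypA}(ii) holds trivially, because $H_{\rm gen}(V)$ contains no spin-orbit term, so one takes $\nu\equiv 0$ and $I_{\rm SB}=0$. Hypothesis~\ref{hypA}(iii) is part of the standing molecular/general-potential setting in this subsection, and the existence of an eigenvalue $E_{\rm el}$ below the essential spectrum with multiplicity $d=2$ is precisely what the corollary hypothesizes. Hypotheses~\ref{hyp:kappa} and \ref{hyp:analytph} follow from the Gaussian choice \eqref{couplingfn}, as recorded in Remarks~\ref{remoncouplibng1} and \ref{remoncouplibng2}. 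Finally, Hypothesis~\ref{hyp:tinvgs}(i) follows from the reality of $V$, as noted in the remark after Hypothesis~\ref{hyp:tinvgs}, and \ref{hyp:tinvgs}(ii), i.e., $\overline{\rho(\boldsymbol{k})}=\rho(-\boldsymbol{k})$, is immediate from the real and even Gaussian form of $\rho$.

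The remaining inputs required by Theorem~\ref{gs:thmtimeres} are $s=1/2$, which is built into this section, and (implicitly, so that the Kramers-type degeneracy producing $d=2$ is available) that $N$ is odd; together with the explicitly assumed dimension $d=2$ of the eigenspace, this puts us exactly in the regime of Theorem~\ref{gs:thmtimeres}. That theorem then yields Property~\ref{prop:gs:thmnondegres} without further work. The only potentially delicate point is the verification of Hypothesis~\ref{hypA}(iii) in full generality, since the corollary is stated for arbitrary infinitesimally $-\Delta$-bounded potentials where the infimum of the spectrum need not automatically be an isolated eigenvalue; however, this is part of the standing assumptions for the subsection and does not constitute a genuine analytic obstruction. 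All the serious work, in particular the operator-theoretic renormalization producing the analytic family of resonance eigenvalues and eigenvectors from a degenerate electronic eigenvalue protected by time reversal, has been absorbed into Theorem~\ref{gs:thmtimeres} and ultimately into the abstract result of \cite{HasLan23-2}.
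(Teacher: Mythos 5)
Your proposal is correct and follows exactly the paper's own route: the corollary is obtained as a direct application of Theorem~\ref{gs:thmtimeres}, with Hypotheses~\ref{hyp:kappa}, \ref{hyp:analytph}, \ref{hyp:tinvgs} supplied by the Gaussian cutoff and the reality of $V$ from the subsection's standing assumptions, and Hypothesis~\ref{hypA}(iii) holding by assumption. Your explicit flagging of the implicit requirement that $N$ be odd (needed for Lemma~\ref{lem:timeirred} and Theorem~\ref{gs:thmtimeres}) is a point the paper's one-line proof glosses over, but it does not change the argument.
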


\begin{proof}
	The proof follows as a corollary of Theorem \ref{gs:thmtimeres}. Hypothesis \ref{hypA} (iii) again holds by assumption.
\end{proof}

Let us now consider cases where the assumption of Corollaries \ref{cor:time:ex:gs} and  \ref{cor:time:ex:res}
hold.

\begin{example} {\rm
 $N=1$.
\begin{itemize}
\item
Suppose $-\Delta + V$ has as an operator on $L^2(\R^{3})$ at the bottom
of the spectrum, $E_{\rm at}$, an isolated eigenvalue. Then the eigenvalue has a  positive eigenvector by  the Perron Frobenius theorem.
It follows  that $E_{\rm at}$ is a    non-degenerate eigenvalue of  $-\Delta + V$  as operator on $L^2(\R^{3})$.
Thus as an operator on $L^2(\R^3;\C^2)$ it is trivially two fold degenerate and remains at the bottom
of the spectrum and isolated.
The time reversal symmetry acts irreducibly on that subspace, see Lemma \ref{lem:timeirred}.  Then the assumptions of Corollary \ref{cor:time:ex:gs} holds.
\item Suppose $E_{\rm at}$ is a simple eigenvalue of  $-\Delta + V$  as operator on $L^2(\R^{3})$
below the essential spectrum.  Then it is trivially also
an  eigenvalue of $-\Delta + V$  as operator on $L^2(\R^{3};\C^2)$, which is two fold
degenerate and still below the essential spectrum. The time reversal symmetry acts again irreducibly by Lemma \ref{lem:timeirred} on that subspace.
Then the assumptions of Corollary \ref{cor:time:ex:res} hold.
\end{itemize} }
\end{example}

\begin{example} {\rm
Let  $N \geq 3$ be odd.
\begin{itemize}
\item
Suppose $ V$ is a sum of one body potentials $V_1$. Suppose the operator $(-\Delta + V_1)|_{L^2(\R^3)} $ has eigenvalues below the essential spectrum which we label in increasing order  $E_1 < E_2 < \ldots$ with multiplicities $n_1, n_2,...$.
Then we can fill the energy levels using the Pauli exclusion principle.
Thus suppose $N-1 = \sum_{j=1}^m 2 n_j$ for some $m$ and $E_{m+1}$
is non-degenerate. Then the ground state energy of the operator \eqref{eq:Hgen} on the space \eqref{defofHelSec6} is $\sum_{j=1}^m 2 n_j E_j + E_{m+1}  $ and it is two fold degenerate.
Hence the assumptions of Corollary \ref{cor:time:ex:gs} hold (Lemma \ref{lem:timeirred}).
Note by perturbation theory this property is preserved under small relatively bounded perturbations.

\item Similarly one can have excited eigenvalues. If $ E_{m+2}-E_{m+1}  < E_{m+1} - E_{m}$
and  $E_{m+2}$ is non-degenerate, then the first  excited state is obtained by putting the electron
in the $E_{m+1}$ state into the  $E_{m+2}$ state. The resulting state is an irreducible representation space.  If $ E_{m+2}-E_{m+1}  > E_{m+1} - E_{m}$ then one takes an electron out of the eigenspace $E_{m}$ and puts it into the space $E_{m+1}$.
This is the first excited eigenvalue and the corresponding eigenspace is irreducible.
Thus the assumptions of Corollary \ref{cor:time:ex:res} hold as above.

\end{itemize} }
\end{example}

\begin{example} {\rm
Let $N=2$.  Then we can decompose the Hilbert space as follows
 $$
 \HH_{\rm el} = \left(  L^2(\R^3 )^{ \otimes_s 2}  \otimes (\C^2)^{ \otimes_a 2} \right) \oplus
 \left(  L^2(\R^3 )^{ \otimes_a 2 }  \otimes( \C^2)^{ \otimes_s 2} \right).
 $$
These spaces are left invariant by the Hamiltonian.
\begin{itemize}
\item Now if the ground state $E_{\rm at}$ of the Hamiltonian is below the essential spectrum, then
it must be the ground state of the Hamiltonian acting on $L^2(\R^3)^{\otimes 2}$.
By Perron-Frobenius theorem the ground state must be positive hence it must be symmetric and is non-degenerate.
Thus the eigenspace of $E_{\rm at}$ of the  Hamiltonian acting
on the full Hilbert space  $\HH_{\rm el}$ including
the spin lies in  $L^2(\R^3 )^{ \otimes_s }  \otimes (\C^2)^{ \otimes_a }$.
Since   $(\C^2)^{ \otimes_a }$ is one dimensional is a  $E_{\rm at}$ non-degenerate eigenvalue of the Hamiltonian in $\HH_{\rm el}$.
Thus in such a situation the assumption of Theorem \ref{gs:thmnondeg} holds.
 \item Suppose $E_{\rm at}$ is a non-degenerate eigenvalue of the Hamiltonian restricted to  $L^2(\R^3)^{\otimes_s 2}$
but not of the Hamiltonian restricted to  $L^2(\R^3)^{\otimes_a 2}$. Then
 $E_{\rm at}$ is also a non-degenerate eigenvalue of the Hamiltonian acting on $\HH_{\rm el}$.
Thus in such a situation the assumption of Theorem \ref{gs:thmnondegres}   holds.
\item Now eigenvectors in   $ L^2(\R^3 )^{ \otimes_a 2}  \otimes( \C^2)^{ \otimes_s 2}$
 are at least   three fold degenerate.
Such a  subspace cannot be invariant by time reversal symmetry.
\end{itemize}
 }
\end{example}

\begin{example} {\rm
Let  $N \geq 4$ be even.
\begin{itemize}
\item
Suppose $ V$ is a sum of one body potentials $V_1$. Suppose the operator $(-\Delta + V_1)|_{L^2(\R^3)} $  eigenvalues
below the essential spectrum which we label in increasing order  $E_1 < E_2 < \ldots$ with multiplicities $n_1, n_2,\ldots$.
Then we can fill the energy levels using
the Pauli exclusion principle.  Thus if $N = \sum_{j=1}^m 2 n_j$ for some $m$.
 Then the ground state energy is $\sum_{j=1}^m 2 n_j E_j $ and it is non-degenerate.
Then the assumptions of  Theorem \ref{gs:thmnondeg}   hold.
Now by perturbation theory this property is preserved for small relatively bounded perturbations.

\item Now depending on the structure of the one particle Hamiltonian non-degenerate excited states of the total system may occur. Which can be seen by a similar argument as for the ground state case.
\end{itemize} }
\end{example}

\section{Proofs}\label{sec:proof}
In this section we provide the proofs that have been left out in order to make the main text more fluidly readable. We first prove certain estimates on the atomic Hamiltonian $H_{\textrm{el}}$ that are later used to verifying a Hypothesis from \cite{HasLan23-2} which is necessary for proving the main theorems. These are proved afterwards. We finish this section with the proof of Lemma~\ref{Spinbahn}.
\subsection{Estimates on the atomic Hamiltonian}
We denote the distance between an eigenvalue $E_{\rm el, j}$ of $H_{\rm el}$ and the rest of the spectrum by $\delta_j$, i.e.,
\begin{equation} \label{eq:GapAssumption}
\delta_j :=
{\rm dist}  \left( \sigma( H_{\rm el})  \setminus \{ E_{{\rm el},j} \} ,  E_{{\rm el},j} \right)  > 0
\end{equation}
For $\check{\delta}_j > 0$ we define the rescaled Hamiltonian
\begin{align} \label{hamrescaled}
 \check{H}_{{\rm el},j}(\theta) =  e^{\theta} \check{\delta}_j^{-1}  ( H_{\rm el} - E_{{\rm el},j} ) .
\end{align}
Note that the eigenvectors of the rescaled Hamiltonian are the same as for the original  Hamiltonian, i.e.,
\begin{align} \label{eqident22}
 \check{H}_{{\rm el},j}(\theta)   \ran P_{{\rm el},j}  =  0 .
\end{align}
and so  $\ran P_{{\rm el},j} $ consists of eigenvectors of $\check{H}_{{\rm el},j}(\theta)$ with
eigenvalue
\begin{align} \label{eqident22energy}
 \check{E}_{{\rm el}, j}(\theta)= 0 .
 \end{align}
Moreover, we set $\overline{{P}}_{{\rm el},j} :=\one_{\HH_{\rm el}} - P_{{\rm el},j}$.

The following proposition will be used to verify Hypothesis III from \cite{HasLan23-2} for the ground state case.

\begin{proposition}\label{veriatomIIIgs}
Let  $\theta_0 \in (0,\pi/2)$ and $\theta_1 \in (0,\infty)$. Then for any $\rho \in (0,\delta_0 / \check{\delta}_0)$ and
 $\mathcal{U} =    \{ \theta \in \C :  |  {\rm Im} \theta | < \theta_0 , \,  | {\rm Re} \theta |   < \theta_1  \}  \times D_{\rho}$ the following holds
$$
\sup_{(\theta,z)\in \UU}  \sup_{q \geq 0} \left\| \frac{q+1}{\check{H}_{{\rm el},0}(\theta) - e^{\theta} z + q} \overline{{P}}_{{\rm el},0} \right\|  < \infty .
$$
\end{proposition}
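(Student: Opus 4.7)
The plan is to reduce the operator bound to a scalar estimate via the functional calculus, and then to exploit the geometry of $\UU$ to control the resulting rational function uniformly in $q\geq 0$.

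First, by Hypothesis~\ref{hypA}, $E_{{\rm el},0}$ is an isolated eigenvalue of the self-adjoint operator $H_{\rm el}$, so $A := \check\delta_0^{-1}(H_{\rm el}-E_{{\rm el},0})|_{\ran\overline{P}_{{\rm el},0}}$ is self-adjoint with $\spec(A)\subset [\delta_0/\check\delta_0,\infty)$. Once we verify that $e^\theta(\mu-z)+q$ is bounded away from $0$ on $\spec(A)\times\{(\theta,z,q):\,(\theta,z)\in\UU,\,q\geq 0\}$, the functional calculus gives
\begin{equation*}
\left\| \frac{q+1}{\check{H}_{{\rm el},0}(\theta) - e^\theta z + q}\, \overline{P}_{{\rm el},0} \right\|
= \sup_{\mu\geq \delta_0/\check\delta_0}\frac{q+1}{|e^\theta(\mu-z)+q|},
\end{equation*}
so it suffices to bound this scalar supremum uniformly in $q\geq 0$ and $(\theta,z)\in\UU$.

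The key geometric input is an argument bound. For $\mu\geq \delta_0/\check\delta_0$ and $|z|<\rho<\delta_0/\check\delta_0$, set $w := \mu-z$; then $\Re w\geq \delta_0/\check\delta_0-\rho>0$ and $|\Im w|<\rho$, whence $|\arg w|\leq \alpha_0 := \arctan(\rho/(\delta_0/\check\delta_0-\rho))<\pi/2$. Combined with $|\Im\theta|<\theta_0<\pi/2$ this gives
\begin{equation*}
|\arg(e^\theta w)|\leq \theta_0+\alpha_0 =: \beta_{\max}<\pi,
\end{equation*}
so $e^\theta w$ stays in a sector around the positive real axis whose closure is disjoint from the non-positive real half-line $\{-q:q\geq 0\}$, which already establishes the invertibility needed above.

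Finally, write $u:=e^\theta w$, so $|u|\geq e^{-\theta_1}(\delta_0/\check\delta_0-\rho)=:c_1>0$. Expanding $|u+q|^2 = |u|^2+2q|u|\cos(\arg u)+q^2$, a two-case analysis on the sign of $\cos(\arg u)$ yields $|u+q|\geq c(|u|+q)$ with $c := \tfrac12\min(1,\sin\beta_{\max})>0$: if $\cos(\arg u)\geq 0$ then $|u+q|^2\geq |u|^2+q^2\geq \tfrac12(|u|+q)^2$, while if $\cos(\arg u)<0$ the identity $|u+q|^2 = (q-|u||\cos\arg u|)^2+|u|^2\sin^2(\arg u)$ and its $|u|\leftrightarrow q$ counterpart give $|u+q|\geq \max(|u|,q)\,|\sin(\arg u)|\geq \tfrac12\sin\beta_{\max}\,(|u|+q)$, using that $|\sin(\arg u)|\geq \sin\beta_{\max}$ for $|\arg u|\in(\pi/2,\beta_{\max}]$ since $\sin$ is decreasing on $[\pi/2,\pi]$. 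Hence
\begin{equation*}
\frac{q+1}{|u+q|}\leq \frac{q+1}{c(c_1+q)}\leq \frac{1}{c}\max\!\left(1,\frac{1}{c_1}\right),
\end{equation*}
uniformly in $q$, $\mu$, $\theta$, $z$, which is the desired bound. The only delicate point is guaranteeing the strict inequality $\beta_{\max}<\pi$; the standing hypotheses $\theta_0<\pi/2$ and $\rho<\delta_0/\check\delta_0$ provide exactly the slack needed, and without either one, $e^\theta w$ could approach the negative real axis and the denominator could be driven to $0$ by a suitable finite $q>0$.
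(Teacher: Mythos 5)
Your proof is correct, and it reaches the conclusion by a genuinely different mechanism than the paper. Both arguments start identically, reducing the operator norm via the spectral theorem to a scalar supremum over the spectrum of $H_{\rm el}$ away from $E_{{\rm el},0}$ (your restriction to $[\delta_0/\check{\delta}_0,\infty)$ correctly uses that $E_{{\rm el},0}$ is the bottom of the spectrum). The divergence is in how the denominator is bounded from below. The paper factors out $e^{\theta}$ and then simply lower-bounds the modulus by the real part, which splits into the three nonnegative contributions $\check{\delta}_0^{-1}(s-E_{{\rm el},0})-\operatorname{Re}z\geq \delta_0/\check{\delta}_0-\rho$ and $q\,\operatorname{Re}e^{-\theta}\geq q e^{-\operatorname{Re}\theta}\cos\theta_0$, yielding the explicit bound $e^{-\operatorname{Re}\theta}/(\delta_0/\check{\delta}_0-\rho)+1/\cos\theta_0$ in a few lines. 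You instead prove a sector bound $|\arg(e^{\theta}(\mu-z))|\leq\beta_{\max}<\pi$ and then the uniform estimate $|u+q|\geq c(|u|+q)$ by a distance-to-ray argument. The paper's real-part trick is shorter and produces a cleaner constant, but it relies on all three real parts being simultaneously positive, which is special to the ground-state geometry ($|\operatorname{Im}\theta|<\pi/2$ and spectrum entirely to the right of the eigenvalue). Your sector argument only needs the total argument to stay below $\pi$, so it is the more robust route: it is essentially the mechanism the paper has to switch to (via imaginary-part estimates and a case split in $q$) in Proposition~\ref{veriatomIIIex}, where spectrum sits on both sides of $E_{{\rm el},j}$ and the real-part bound fails. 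In that sense your proof would adapt to the excited-state case with almost no change, which the paper's proof of this proposition would not.
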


To illustrate the situation of   Proposition \ref{veriatomIIIgs}  and  Proposition  \ref{veriatomIIIex}, below, we refer to Figure    \ref{secondfigure}.

\begin{proof}  Let $d = \delta_0 / \check{\delta}_0$.
We have by the spectral theorem for any $q \geq 0$
\begin{align} \label{lem6.3bound}
& \left\| \frac{q+1}{\check{H}_{{\rm el},0}(\theta) - e^{\theta} z + q} \overline{{P}}_{{\rm el},0}  \right\|  \\
& = \sup_{s \in \sigma(H_{\rm el} ) \setminus \{ E_{{\rm el},0} \} }
\left|   \frac{q+1}{ e^\theta\check{\delta}_0^{-1} ( s  - E_{{\rm el},0})  - e^{\theta} z + q}  \right| \nn \\
& =  e^{- {\rm Re  }\theta}  \sup_{s \in \sigma(H_{\rm el} ) \setminus \{ E_{{\rm el},0} \} }
\left|   \frac{q+1}{ \check{\delta}_0^{-1} ( s  - E_{{\rm el},0})  -  z + e^{-\theta}q}  \right| \nn \\
& \leq   e^{- {\rm Re  }\theta}  \sup_{s \in \sigma(H_{\rm el} ) \setminus \{ E_{{\rm el},0} \} }
  \frac{q+1}{ |{ \rm Re } [\check{\delta}_0^{-1} ( s  - E_{{\rm el},0})  -   z + e^{-\theta}q]| }  \nn \\
  & \leq   e^{- {\rm Re  }\theta}
  \frac{q+1}{ | d  - {\rm Re}   z  +  q  {\rm Re}  e^{-\theta}| }\nn  \\
    & \leq   e^{- {\rm Re  }\theta}
  \frac{q+1}{ | d  - \rho  +  q  e^{- {\rm Re  }\theta}  \cos( - {\rm Im} \theta) |  }\nn \\
      & \leq   e^{- {\rm Re  }\theta}
  \frac{q+1}{  d  - \rho  +  q e^{- {\rm Re  }\theta}  \cos(  \theta_0)  } \nn \\
     & \leq
  \frac{e^{- {\rm Re  }\theta} }{  d  - \rho   } +   \frac{ 1  }{ \cos(\theta_0)  }  , \nn
\end{align}
where the last inequality follows by multiplying out the numerator.
\end{proof}

The following proposition is needed to verify Hypothesis III from \cite{HasLan23-2} for an excited state.

\begin{proposition}\label{veriatomIIIex}
Let  $\theta_0 \in (0,\pi/2)$ and $\theta_1 \in (0,\infty)$. Then for any $\rho \in (0,\delta_j / \check{\delta}_j )$ and
 $\mathcal{U} =    \{ (\theta , z ) \in \C^2  :  0 <   {\rm Im} \theta  < \theta_0 , \,  | {\rm Re} \theta |   < \theta_1  , |z| < \rho \sin({\rm Im} \theta) \} $ the following holds
$$
\sup_{(\theta,z)\in \UU}  \sup_{q \geq 0} \left\| \frac{q+1}{\check{H}_{{\rm el},j}(\theta) - e^{\theta} z + q} \overline{{P}}_{{\rm el},j}\right\|  < \infty .
$$
\end{proposition}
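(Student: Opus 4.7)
The plan is to adapt the chain of inequalities used in Proposition~\ref{veriatomIIIgs}, the new feature being that $\sigma(H_{\rm el})\setminus\{E_{{\rm el},j}\}$ now lies on \emph{both} sides of $E_{{\rm el},j}$. Applying the spectral theorem reduces the operator norm to
\[
\sup_{\tau\in\Sigma_j}\frac{q+1}{|e^{\theta}\tau-e^{\theta}z+q|},\qquad \Sigma_j := \check{\delta}_j^{-1}\bigl(\sigma(H_{\rm el})-E_{{\rm el},j}\bigr)\setminus\{0\},
\]
with $|\tau|\ge d:=\delta_j/\check{\delta}_j$ for every $\tau\in\Sigma_j$. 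On the subset $\{\tau\ge d\}$ the argument of Proposition~\ref{veriatomIIIgs} applies essentially verbatim: since $|z|<\rho\sin(\mathrm{Im}\,\theta)\le\rho$ and $\cos(\mathrm{Im}\,\theta)\ge\cos\theta_0$, the real part of $e^{-\theta}(e^{\theta}\tau-e^{\theta}z+q)$ is bounded below by $d-\rho+q\,e^{-\mathrm{Re}\,\theta}\cos\theta_0$, yielding a bound uniform in $q$ and in $\tau\ge d$.

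The genuine novelty is the negative part $\{\tau\le -d\}$, where the real-part argument fails. Here I would exploit the fact that $\mathrm{Im}\,\theta>0$ via the identity
\[
\mathrm{Im}(e^{\theta}\tau-e^{\theta}z+q) = e^{\mathrm{Re}\,\theta}\bigl[\tau\sin(\mathrm{Im}\,\theta)-\mathrm{Re}(z)\sin(\mathrm{Im}\,\theta)-\mathrm{Im}(z)\cos(\mathrm{Im}\,\theta)\bigr].
\]
By Cauchy--Schwarz the bracket involving $z$ is bounded in modulus by $|z|<\rho\sin(\mathrm{Im}\,\theta)$, so combined with $\tau\le -d$ one obtains the pivotal inequality
\[
\bigl|\mathrm{Im}(e^{\theta}\tau-e^{\theta}z+q)\bigr|\;\ge\; e^{\mathrm{Re}\,\theta}(d-\rho)\sin(\mathrm{Im}\,\theta).
\]
This is exactly what motivates the scaling $|z|<\rho\sin(\mathrm{Im}\,\theta)$ in the definition of $\mathcal{U}$: the $z$-deformation is subordinated to the imaginary gap opened by the rotation and cannot close it.

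To turn this $q$-independent estimate into a uniform bound on the quotient, I would combine it with the elementary real-part bound $|\mathrm{Re}(e^{\theta}\tau-e^{\theta}z+q)|\ge q-e^{\mathrm{Re}\,\theta}(|\tau|+\rho)$ together with the a~priori bound $|\tau|\le M_j:=(E_{{\rm el},j}-\inf\sigma(H_{\rm el}))/\check{\delta}_j$ that holds on the negative part of $\Sigma_j$. For $q$ exceeding a threshold of order $e^{\theta_1}(M_j+\rho)$ the real part alone gives $|e^{\theta}\tau-e^{\theta}z+q|\ge q/2$ and hence $(q+1)/|\cdot|\le 4$ for $q\ge 1$; for smaller $q$ the imaginary-part estimate above supplies a constant bound. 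The main obstacle is precisely the balancing across the transition $q\simeq e^{\mathrm{Re}\,\theta}|\tau|\cos(\mathrm{Im}\,\theta)$, where neither component of $e^{\theta}\tau-e^{\theta}z+q$ dominates on its own and where the sharpness of the Cauchy--Schwarz step, which must remain effective for arbitrary polarization of $z$ inside the admissible disk, is the essential ingredient that allows the full range $\rho<d$ rather than the cruder $\rho<d/\sqrt{2}$ that a coordinate-wise estimate would give.
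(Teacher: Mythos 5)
Your proposal is correct and follows essentially the same route as the paper's proof: split the spectrum at $E_{{\rm el},j}$, handle the part above exactly as in Proposition~\ref{veriatomIIIgs}, and handle the part below by the imaginary-part lower bound $e^{{\rm Re}\theta}(d-\rho)\sin({\rm Im}\,\theta)$ for $q$ below a threshold $q_1 \sim |e^{\theta}|\bigl(\check{\delta}_j^{-1}(E_{{\rm el},j}-E_{{\rm el},0})+\rho\bigr)$ together with the triangle-inequality bound $|\cdot|\ge q/2$ above it. The only caveat --- shared with the paper's own estimate \eqref{eq:EstimateProp7.3-k<j} --- is that the resulting constant degenerates like $1/\sin({\rm Im}\,\theta)$ as ${\rm Im}\,\theta\downarrow 0$, so the claimed supremum is uniform only once ${\rm Im}\,\theta$ is bounded away from $0$, which is exactly how the proposition is invoked in Section~\ref{proof:thmnondegres}.
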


\begin{figure}
\begin{center}
 \begin{tikzpicture}[scale=1, rotate=22.5]
\filldraw[cyan!40!]  (8,0) -- (10,-1) -- (10.43,0) -- cycle;
\draw[very thick, red] (8,0) -- (10.43,0) ;
  \draw[thick, dotted] (0.6,0) -- (0.9,0) ;
  \draw[dotted] (1,0) -- (10,0) coordinate (x axis);
\filldraw[red]  (1.8,0) circle (2pt)   ;
\draw (1.8,-0.1) -- (1.8,0.1) node[above]{$E_{{\rm el},j-1}$} ;
\filldraw[red]  (3.8,0) circle (2pt)   ;
\draw (3.8,-0.1) -- (3.8,0.1) node[above]{$E_{{\rm el},j}$}  ;
\coordinate (a) at (3.8,0) ;
\coordinate (b) at (4.8,-0.43) ;
\coordinate (c) at (5,0) ;
\begin{scope}
	\clip (a) -- (b) -- (c) ;
	\draw [fill=lightgray!30, draw=black, dashed] circle[at=(a), radius =1cm] ;
\end{scope}
\draw (5.15,-0.05) node[below]{\tiny{${\rm Im} \theta$}}  ;
\draw (3.8,0.1) -- (3.8,-0.1)  node[below]{$0$}  ;
\filldraw[red]  (5.8,0) circle (2pt)   ;
\draw (5.8,-0.1) -- (5.8,0.1)  node[above]{$E_{{\rm el},j+1}$}  ;
\filldraw[red]  (6.5,0) circle (2pt)   ;
\draw (6.5,-0.1) -- (6.5,0.1)   ;
\filldraw[red]  (7.3,0) circle (2pt)   ;
\draw (7.3,-0.1) -- (7.3,0.1)   ;
\filldraw[red]  (7.7,0) circle (2pt)   ;
\draw (7.7,-0.1) -- (7.7,0.1)   ;
\filldraw[red]  (7.9,0) circle (2pt)   ;
\draw (7.9,-0.1) -- (7.9,0.1)   ;
\draw (7.95,-0.1) -- (7.95,0.1)   ;
\filldraw[red]  (7.95,0) circle (2pt)   ;
\draw (7.975,-0.1) -- (7.975,0.1)   ;
\filldraw[red]  (7.975,0) circle (2pt)   ;
\draw (8,-0.1) -- (8,0.1) node[above]{$\Sigma_{{\rm el}}$};
 \draw[->]  (3.8,0) -- (9.8,-3*0.828427124) ;
 \draw[-]  (3.8,0) -- (0.8,1.5*0.828427124) ;
\draw[very thick,cyan] (8,0) -- (10,-0.828427124) ;
\draw[very thick,cyan] (1.8,0) -- (3.8,-0.828427124) ;
\draw[very thick,cyan]   (4,-0.828427124-0.1*0.828427124) -- (4.2,-0.828427124-0.2*0.828427124)  ;
\draw[very thick,cyan]   (4.4,-0.828427124-0.3*0.828427124)  --   (4.6,-0.828427124-0.4*0.828427124)    ;
\draw[very thick,cyan]   (4.8,-0.828427124-0.5*0.828427124)  --   (5,-0.828427124-0.6*0.828427124)    ;
\draw[very thick,cyan] (3.8,0) -- (5.8,-0.828427124) ;
\draw[very thick,cyan]   (6,-0.828427124-0.1*0.828427124) -- (6.2,-0.828427124-0.2*0.828427124)  ;
\draw[very thick,cyan]   (6.4,-0.828427124-0.3*0.828427124)  --   (6.6,-0.828427124-0.4*0.828427124)    ;
\draw[very thick,cyan]   (6.8,-0.828427124-0.5*0.828427124)  --   (7,-0.828427124-0.6*0.828427124)    ;
\draw[very thick,cyan] (5.8,0) -- (7.8,-0.828427124) ;
\draw[very thick,cyan]   (8,-0.828427124-0.1*0.828427124) -- (8.2,-0.828427124-0.2*0.828427124)  ;
\draw[very thick,cyan]   (8.4,-0.828427124-0.3*0.828427124)  --   (8.6,-0.828427124-0.4*0.828427124)    ;
\draw[very thick,cyan]   (8.8,-0.828427124-0.5*0.828427124)  --   (9,-0.828427124-0.6*0.828427124)    ;
\draw[very thick,cyan] (6.5,0) -- (8.5,-0.828427124) ;
\draw[very thick,cyan]   (8.7,-0.828427124-0.1*0.828427124) -- (8.9,-0.828427124-0.2*0.828427124)  ;
\draw[very thick,cyan]   (9.1,-0.828427124-0.3*0.828427124)  --   (9.3,-0.828427124-0.4*0.828427124)    ;
\draw[very thick,cyan]   (9.5,-0.828427124-0.5*0.828427124)  --   (9.7,-0.828427124-0.6*0.828427124)    ;
\draw[very thick,cyan] (7.3,0) -- (9.3,-0.828427124) ;
\draw[very thick,cyan]   (9.5,-0.828427124-0.1*0.828427124) -- (9.7,-0.828427124-0.2*0.828427124)  ;
\draw[very thick,cyan]   (9.9,-0.828427124-0.3*0.828427124)  --   (10.1,-0.828427124-0.4*0.828427124)    ;
\draw[very thick,cyan] (7.7,0) -- (9.7,-0.828427124) ;
\draw[very thick,cyan]   (9.9,-0.828427124-0.1*0.828427124) -- (10.1,-0.828427124-0.2*0.828427124)  ;
\draw[very thick,cyan] (7.9,0) -- (9.9,-0.828427124) ;
 \end{tikzpicture}
\end{center}
\caption{\small  The typical  spectrum of the operator $\check{H}_{{\rm el},j}(\theta)$ for $0 < {\rm Im } \theta < \pi/2$ is indicated in red.
The set   $\check{H}_{{\rm el},j}(\theta) + [0,\infty)$ is indicated with cyan,
where the dashes indicate extension to $+\infty$.  \label{secondfigure}}
 \end{figure}
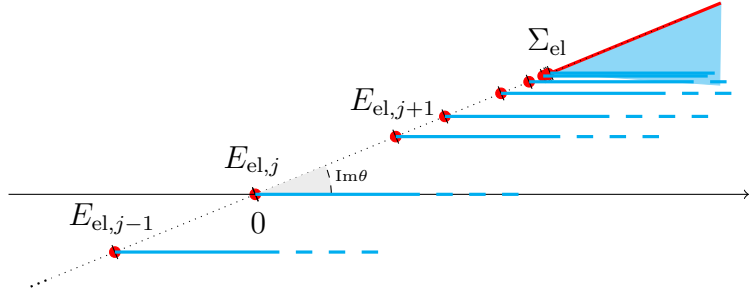

\begin{proof}   Let $d = \delta_j / \check{\delta}_j$ and $(\theta,z) \in \UU$.
We have by the spectral theorem for any $q \geq 0$
\begin{align} \label{eq:EstimateProp7.3-k?j}
& \left\| \frac{q+1}{\check{H}_{{\rm el},j}(\theta) - e^{\theta} z + q} \overline{{P}}_{{\rm el},j}  \right\|  \\
& \leq  \sup_{s \in \sigma(H_{\rm el} ) \cap(   E_{{\rm el},j} , \infty)  }
\left|   \frac{q+1}{ e^\theta \check{\delta}_j^{-1} ( s  - E_{{\rm el},j})  - e^{\theta} z + q}  \right| \nn \\
& +  \sup_{s \in \sigma(H_{\rm el} ) \cap( -\infty ,   E_{{\rm el},j} )  \} }
\left|   \frac{q+1}{ e^\theta \check{\delta}_j^{-1} ( s  - E_{{\rm el},j})  - e^{\theta} z + q}  \right| \nn.
\end{align}
Using that $|z| < \rho \sin({\rm Im} \theta)$ and $\rho < d$ the first term can be estimated as before in \eqref{lem6.3bound}
yielding
\begin{align} \label{eq:EstimateProp7.3-j<k}
& \sup_{s \in \sigma(H_{\rm el} ) \cap(   E_{{\rm el},j} , \infty)  }
\left|   \frac{q+1}{ e^\theta\check{\delta}_j^{-1} ( s  - E_{{\rm el},j})  - e^{\theta} z + q}  \right|
  \leq  \frac{e^{- {\rm Re  }\theta} }{  d  -  \rho \sin({\rm Im} \theta)  } +   \frac{ 1  }{ \cos(\theta_0)  }  .
\end{align}
To estimate the second term we use again
that $|z| < \rho \sin({\rm Im} \theta)$ and $\rho < d$ . More precisely, for $k < j$  and $q$ in a finite neighborhood of zero we will use the estimate
\begin{align*}
| e^{\theta} \check{\delta}_j  ^{-1}  ( E_{{\rm el},k}&- E_{{\rm el},j}) - e^\theta z + q | \\
& \geq  {\rm Im} \left(  e^{\theta} \check{\delta}_j  ^{-1}  ( E_{{\rm el},j} - E_{{\rm el},k}) + e^\theta z -  q \right)  \\
& \geq e^{{\rm Re} \theta} \left( \sin( {\rm Im} \theta) \check{\delta}_j^{-1}  ( E_{{\rm el},j}- E_{{\rm el},k}) - |z| \right)    \\
& = e^{{\rm Re} \theta}  \sin( {\rm Im} \theta) ( d - \rho )  .
\end{align*}
For $q \geq 2  | e^\theta  | (  \check{\delta}_j^{-1}   (E_{{\rm el},j}- E_{{\rm el},0}  ) +  \sin({\rm Im} \theta)  \rho)   =: q_1   $   we find
 \begin{align*}
 |e^{\theta} \check{\delta}_j^{-1}  ( E_{{\rm el},k}- E_{{\rm el},j}) - e^\theta z + q | & \geq  q -   | e^\theta |\check{\delta}_j^{-1} (E_{{\rm el},j}  - E_{{\rm el},0} ) +  \sin({\rm Im} \theta)  \rho)    \geq \frac{q}{2} .
 \end{align*}
 This gives
\begin{align} \label{eq:EstimateProp7.3-k<j}
& \sup_{s \in \sigma(H_{\rm el} ) \cap( -\infty ,   E_{{\rm el},j} )  \} }
\left|   \frac{q+1}{ e^\theta \check{\delta}_j^{-1} ( s  - E_{{\rm el},j})  - e^{\theta} z + q}  \right|
 %\left\| \frac{q+1}{\check{H}_{{\rm el},j}(\theta) - e^\theta z + q}  P_{{\rm el},k}(\theta) \right\|
\nn \\  & \leq  \max  \left\{  \sup_{q \geq   q_1} \frac{2(q+1)}{q},    \sup_{q \leq  q_1  } \frac{q+1}{ |e^\theta|   \sin( {\rm Im} \theta)     (d -\rho)} \right\} \nn \\
  & \leq    \max  \left\{2 \left(1 + \frac{1}{q_1}\right) ,    \frac{1+q_1}{|e^\theta|  \sin( {\rm Im} \theta)   (d -\rho)}  \right\}
\end{align}
Inserting  \eqref{eq:EstimateProp7.3-k<j}  and \eqref{eq:EstimateProp7.3-j<k}
into \eqref{eq:EstimateProp7.3-k?j} yields the claim.
\end{proof}

\subsection{Proofs of the Main Theorems}
For the proofs of the theorems in the non-degenerate and degenerate cases we consider the following operator-valued function, which is an extension of $H(\kappa,\theta)$.
We define with $\underline{\kappa} = (\kappa_1,\kappa_2) \in \C^2$ , $g\in\R$
\begin{align}  \label{startingoperator}
H_{g}(\underline{\kappa},\theta)  & :=   H_{{\rm el}} + g   W(\underline{\kappa},\theta)  + e^{-\theta} H_f  .
\end{align}
Note that we have introduced the parameter $g$ here in order to have a similar notational form as in \cite{HasLan23-2}.
We define
\begin{align} \label{Wintdip-0}
W(\underline{\kappa},\theta) = a(G_{\overline{\underline{\kappa}},\overline{\theta}}) +  a^*( G_{\underline{\kappa},\theta})
\end{align}
with
\begin{align}\label{eq:GTheta} G_{\underline{\kappa},\theta}(\boldsymbol{k},\lambda) :=
 e^{-\theta/2} \frac{ \rho(e^{-\theta} \boldsymbol{k})}{\omega(\boldsymbol{k})^{1/2}} \sum_{j=1}^N
\left(  \kappa_1^3   \chi( x_j)  x_j \cdot  |\boldsymbol{k}| i \varepsilon(\boldsymbol{k},\lambda)  + \kappa_2^5 S_j \cdot
 i \boldsymbol{k} \wedge \varepsilon(\boldsymbol{k},\lambda)  \right)  .
\end{align}

\subsubsection{Proof of Theorem~\ref{gs:thmnondeg}}\label{proof:thmnondeg}

\begin{proof}[Proof of  Theorem \ref{gs:thmnondeg}]
We want to study the operator  defined in  \eqref{startingoperator} with $g \geq 0$ and $\underline{\kappa} \in \C^2$.

In order to obtain an operator  which is suitable for the renormalization analysis
we define the following auxiliary operator obtained by rescaling the original
operator such that the constant in front $H_f$ is one  and the new atomic part has the property
that the distance of the eigenvalue in question from the rest is suitably large.
Thus let   $\tau = - \ln \check{\delta}_0   \in \R$,
in order that the gap between the eigenvalue of interest and the rest of the spectrum  is one and the constant in front of $H_f$ is a one. We define
\begin{align}
\check{H}_g(\underline{\kappa},\theta)  & :=   e^{\theta} \check{\delta}_0^{-1}  ( U_{\rm ph}(\tau) H_{g}( \underline{\kappa}, \theta) U_{\rm ph}(\tau)^{-1}- E_{{\rm el},0} )
\end{align}
and write
\begin{align}  \label{mainverification}
\check{H}_g(\underline{\kappa},\theta)
& =  \check{H}_{{\rm el},0}(\theta) +  g   \check{W}(\underline{\kappa},\theta) + H_f ,
\end{align}
where we defined
\begin{align}
% \check{H}_{{\rm el},0}(\theta) & :=  e^{\theta} \delta_0^{-1}  ( H_{\rm el}(\theta) - E_{{\rm el},0}(\theta) ) \label{defhtildefirst}   \\
  \check{W}(\underline{\kappa}, \theta)  & :=  e^{\theta}\check{\delta}_0^{-1}U_{\rm ph}(\tau)   W(\underline{\kappa}, \theta) U_{\rm ph}(\tau)^{-1}   ,\label{defwtildefirst}
\end{align}
and recalled the notation
\begin{align}
 \check{H}_{{\rm el},0}(\theta) & :=  e^{\theta} \check{\delta}_0^{-1}  ( H_{\rm el} - E_{{\rm el},0} ). \label{defhtildefirst}   %\\
%  \tilde{W}_\theta^{\rm dip}  & :=  e^{\theta} \delta_0^{-1}U_{\rm ph}(\tau)   W_\theta^{\rm dip} U_{\rm ph}(\tau)^{-1}   ,\label{defwtildefirst}
\end{align}
In particular we find from  \eqref{Wintdip-0}  and \eqref{eq:GTheta} that
\begin{align} \label{Wintdip2}
 \check{W}(\underline{\kappa},\theta) =      a( \check{G}_{\overline{\underline{\kappa}},\overline{\theta}}) +
a^*(\check{G}_{\underline{\kappa},\theta})
\end{align}
with
\begin{align}\label{eq:GTheta2} \check{G}_{\underline{\kappa},\theta}(\boldsymbol{k},\lambda) :=
  e^{(\theta + \tau)/2}  \check{\delta}_0^{-1}\frac{  \rho(e^{-\theta - \tau} \boldsymbol{k}) }{\omega(\boldsymbol{k})^{1/2}} \sum_{j=1}^N
\left(  \kappa_1^3  \chi( x_j)  x_j \cdot  |\boldsymbol{k}| i \varepsilon(\boldsymbol{k},\lambda)  + \kappa_2^5 S_j \cdot
 i \boldsymbol{k} \wedge \varepsilon(\boldsymbol{k},\lambda)  \right)  .
\end{align}
In the following we will verify that Hypothesis \ref{HypI} - \ref{HypIV} in \cite{HasLan23-2} hold for the operator \eqref{mainverification} where we will use the notational translations given in Table~\ref{translationgesnondeg}.

\begin{table}[H] \caption{Notations}  \label{translationgesnondeg}
\begin{center}
\begin{tabular}{||c | c||}
 \hline
 \text{Notation here } & \text{Notation in  } \cite{HasLan23-2}  \\ [0.5ex]
 \hline\hline
 $(\underline{\kappa},\theta)$ & $s$  \\
 \hline
 $\check{H}_g(\underline{\kappa},\theta)$   & $H_g(s) $ \\
 \hline
 $\check{H}_{{\rm el},0}(\theta)$   & $H_{\rm at}(s)$  \\
 \hline
 $\check{G}_{\underline{\kappa},\theta}$  & $G_{1,s}$  \\
 \hline
 $\check{G}_{\underline{\kappa},\theta}$  & $G_{2,s}$  \\
 \hline
 $  \check{W}(\underline{\kappa},\theta) $ & $W(s) $ \\ [1ex]
 \hline
 $ \check{E}_{{\rm el},0} $ & $E_{\rm at}(s) $\\ [1ex]
 \hline
 $ P_{{\rm el},0}$  & $P_{\rm at}(s)$  \\ [1ex]
 \hline
\end{tabular}
%\caption{Translation table for the notation between this paper and  \cite{HasLan23-2}.}
\end{center}
\end{table}
Once the Hypothesis  \ref{HypI} - \ref{HypIV} are verified,  the claim follows directly from Theorem~2.10  in \cite{HasLan23-2}.
$$
$$

We will use Hypothesis \ref{hyp:analytph} to verify Hypothesis~\ref{HypI} in \cite{HasLan23-2}
for the function $(\underline{\kappa},\theta) \mapsto \check{G}_{\underline{\kappa},\theta}$ on the set $S \times D_{\theta_b}$ for  $S$  any bounded open subset of $\C^2$.
By   Hypothesis \ref{hyp:analytph} we know  that $(\boldsymbol{k},\lambda) \mapsto K_{\theta}(\boldsymbol{k},\lambda)  = |\boldsymbol{k}|^{1/2} \rho(e^{-\theta} \boldsymbol{k})$
is an $L^2(\R^3 \times \Z_2)$-valued analytic function in $\theta \in D_{\theta_b}$. By a straight forward calculation  $K_{\theta+ t} =  e^{2 t} u_{\rm ph}(t) K_{\theta}$ for  all $t, \theta \in \R$, which allows to extend  $K_{\theta}$   to an analytic function on the strip $S_{\theta_b} = \{\theta \in \C : | {\rm Im} \theta | < \theta_b \}$.

Let  $\partial_\theta K_\theta$ denote the complex derivative.
Define
\begin{align*}
J_{\underline{\kappa},\theta}(\boldsymbol{k},\lambda) :=
\frac{  \rho(e^{-\theta} \boldsymbol{k}) }{\omega(\boldsymbol{k})^{1/2}} \sum_{j=1}^N
\left(   \kappa_1^3  \chi( x_j)  x_j \cdot  |\boldsymbol{k}| i \varepsilon(\boldsymbol{k},\lambda)  + \kappa_2^5  S_j \cdot
 i \boldsymbol{k} \wedge \varepsilon(\boldsymbol{k},\lambda)  \right)  .
\end{align*}
Then  we find
\begin{align*}
& \left\| \frac{1}{h} \left(   J_{\underline{\kappa},\theta+h} -  J_{\underline{\kappa},\theta} \right) - \partial_\theta K_\theta \sum_{j=1}^N
\left(   \kappa_1^3  \chi( x_j)  x_j \cdot   i \varepsilon  + \kappa_2^5  S_j \cdot
 i  \frac{ \cdot }{|\cdot|}  \wedge \varepsilon  \right)  \right\|_{   L^2(\R^3\times \Z_2;\mathcal{L}(\HH_{\rm el})) }  \\
 & \leq \sup_{\boldsymbol{k}  ,\lambda} \left\|  \sum_{j=1}^N
\left(    \chi( x_j)  x_j  \cdot  i \varepsilon(\boldsymbol{k},\lambda)  + S_j \cdot
 i  \frac{ \boldsymbol{k}  }{|\boldsymbol{k}|}  \wedge \varepsilon(\boldsymbol{k},\lambda) \right)  \right\|_{\mathcal{L}(\HH_{\rm el})}  \\
 & \quad \times \left\| \frac{1}{h} ( K_{\theta + h} - K_\theta) - \partial_\theta K_\theta \right\|_{L^2(\R^3 \times \Z_2)} \\
 & \leq N ( |\kappa_1|^3 \sup_{x \in \R^3}   | \chi(x) x | + |\kappa_2|^5  )  \left\| \frac{1}{h} ( K_{\theta + h} - K_\theta) - \partial_\theta K_\theta \right\|_{L^2(\R^3 \times \Z_2)} \to 0  \quad ( h \to 0 ) .
\end{align*}
This shows that $\theta \to J_{\underline{\kappa},\theta}$ is an  $L^2(\R^3\times \Z_2;\mathcal{L}(\HH_{\rm el}))$-valued  analytic function
on $S_{\theta_b}$ for any $\underline{\kappa} \in \C^2$. Since  continuous products of analytic functions are analytic \cite{Osgood1899, Die69}, it follows that
$(\underline{\kappa}, \theta) \mapsto \check{G}_{\underline{\kappa},\theta} =   e^{ (\theta +\tau)/2} \check{\delta}_0^{-1}  J_{\underline{\kappa}, \theta+\tau}$ is an
 $L^2(\R^3\times \Z_2;\mathcal{L}(\HH_{\rm el}))$-valued  analytic function on $\C^2 \times S_{\theta_b}$.
To see the  boundedness   an estimate as above shows
 \begin{align*}
& \| J_{\underline{\kappa},\theta}  \|_{L^2(\R^3 \times \Z_2;\mathcal{L}(\HH_{\rm el}) )} \\
 & \leq \| K_\theta \|_{L^2(\R^3 \times \Z_2)} \sup_{\boldsymbol{k}  ,\lambda} \left\|  \sum_{j=1}^N
\left(  \kappa_1^3  \chi( x_j)  x_j  \cdot  i \varepsilon(\boldsymbol{k},\lambda)  + \kappa_2^5 S_j \cdot
 i  \frac{ \boldsymbol{k}  }{|\boldsymbol{k}|}  \wedge \varepsilon(\boldsymbol{k},\lambda) \right)  \right\|_{\mathcal{L}(\HH_{\rm el})}  \\
 &\leq N ( |\kappa_1|^3 \sup_{x \in \R^3} | \chi(x) x | +   |\kappa_2|^5 )   \| K_{\theta}  \|_{L^2(\R^3 \times \Z_2)}  .
 \end{align*}
 Now using
the  identity $ \| K_{\theta+\tau}  \|_{L^2(\R^3 \times \Z_2)}  =  e^{-2\tau} \| K_{\theta}  \|_{L^2(\R^3 \times \Z_2)} $
 and boundedness assumption of $K_\theta$ in   Hypothesis \ref{hyp:analytph}  we see that  $\check{G}_{\underline{\kappa},\theta}$ is bounded on $B \times D_{\theta_b}$
 for any bounded $B \subset \C^2$.
Let $\mu > 0$ be such that
\begin{align} \label{eqsupDDD}
\sup_{\theta \in D_{\theta_b}} \| K_{\theta} \|_\mu < \infty . \end{align}
Then using   \eqref{eq:GTheta2} and the translation Table \ref{translationgesnondeg} we find for any bounded subset $B \subset \C^2$
\begin{align*}
	& \sup_{s \in B \times D_{\theta_b}} \| G_{j, s} \|_\mu  \\
	& =\sup_{(\underline{\kappa},\theta) \in B\times D_{\theta_b}} \|\check{G}_{\underline{\kappa},\theta}\|_\mu  \\
	&=  \sup_{(\underline{\kappa},\theta) \in B \times D_{\theta_b}}    \left( \sum_{\lambda=1,2}\int \frac{1}{|\boldsymbol{k}|^{2+2\mu}}
\|\check{G}_{\underline{\kappa},\theta}(\boldsymbol{k} , \lambda)\|^2  	 d \boldsymbol{k}    \right)^{1/2}  \\
	&\leq \sup_{(\underline{\kappa},\theta) \in B \times (\tau + D_{\theta_b})} \\&\qquad   \left(\sum_{\lambda=1,2}\int \frac{ | e^{\theta} | |\rho(e^{-\theta} \boldsymbol{k})|^2}{\check{\delta}_0^{2} |\boldsymbol{k}|^{3+2\mu}} \left\|\sum_{j=1}^N \left( \kappa_1^3 \chi( x_j)  x_j \cdot  |\boldsymbol{k}| i \varepsilon(\boldsymbol{k},\lambda)  +  \kappa_2^5 S_j \cdot i \boldsymbol{k} \wedge \varepsilon(\boldsymbol{k},\lambda) \right)  \right\|^2  	 d \boldsymbol{k} \right)^{1/2}  \\
 	& \leq \frac{ N \sup_{\underline{\kappa} \in B} ( |\kappa_1|^3+|\kappa_2|^5)}{\check{\delta}_0} \left( \sup_{x \in \R^3} | \chi( x) x|  + 1\right) \sup_{\theta \in  D_{\theta_b} + \tau } |e^{\theta/2}| \left(\sum_{\lambda=1,2}\int \frac{|\rho(e^{-\theta} \boldsymbol{k})|^2  |\boldsymbol{k}|^2 }{|\boldsymbol{k}|^{3+2\mu}} d \boldsymbol{k}  \right)^{1/2} \\
	& \leq \frac{ N \sup_{\kappa \in B} ( |\kappa_1|^3+|\kappa_2|^5)}{\check{\delta_0}}  \left(\sup_{x \in \R^3} | \chi( x) x|  + 1\right) \sup_{\theta \in D_{\theta_b}} |e^{(\theta+\tau)/2}|\, \sup_{\theta \in D_{\theta_b}} \|  K_{\theta+\tau} \|_\mu \\
	&< \infty ,
\end{align*}
where we used  \eqref{eqsupDDD} together with the identity
\begin{align*}
  \| K_{\theta+\tau}  \|_{\mu}  =  e^{\tau-\mu \tau} \| K_{\theta}  \|_{\mu} ,
\end{align*}
and the boundedness of $B$ in the last step. Thus we have  verified Hypothesis~\ref{HypI}.

We observe that it follows immediately from the Definition  \eqref{defhtildefirst}   that Hypothesis~\ref{HypII} (i) in \cite{HasLan23-2}
is satisfied. Furthermore,  Hypothesis~\ref{HypII} (ii) follows directly  from the self-adjointness of $H_{{\rm el}}$.
Finally,  Hypothesis~\ref{HypII} (iii) follows trivially from  the non-degeneracy assumption.

Next, we want to verify Hypothesis~\ref{HypIII}.
By Proposition \ref{veriatomIIIgs}  for any
 $\theta_0 \in (0,\pi/2)$, $\theta_1 \in (0,\infty)$, and  $\rho \in (0,\delta_0 / \check{\delta}_0)$ we have for
 $U_0 :=    \{ \theta \in \C :  |  {\rm Im} \theta | < \theta_0 , \,  | {\rm Re} \theta |   < \theta_1  \} $
\begin{align}\label{eq:frompropvariatomIIgs:6}
\sup_{(\theta,z)\in U_0 \times D_\rho }  \sup_{q \geq 0} \left\| \frac{q+1}{\check{H}_{{\rm el},0}(\theta) - e^{\theta} z + q} \overline{{P}}_{{\rm el},0} \right\|  < \infty .
\end{align}
 For $\rho \in (0,1) $  assume
$
{\rm Re} \theta \geq - \ln 2 \rho  .
$
Then for any
$z \in D_{1/2}$ we  have
\begin{align} \label{eq:frompropvariatomIIgs:7}
|e^{-\theta} z | = e^{-{\rm Re} \theta} |z| \leq  e^{-{\rm Re} \theta} \frac{1}{2} \leq \rho .
\end{align}
Thus $ e^{-\theta} D_{1/2} \subset D_\rho $ and so  $D_{1/2} \subset e^{\theta} D_\rho $. We conclude from \eqref{eq:frompropvariatomIIgs:6} and \eqref{eq:frompropvariatomIIgs:7}  that for
\begin{align*}
U =  U_0 \cap \{ \theta \in \C : {\rm Re} \theta \geq - \ln 2 \rho \}
\end{align*}
we find
\begin{align} \label{IIIveri1-0}
\infty 	&  >  \sup_{(\theta,z) \in U \times D_{1/2} } \sup_{q \geq 0}
	\left\| \frac{q+1}{\check{H}_{{\rm el},0}(\theta)- z + q }
	\overline{P}_{{\rm el},0}(\theta)\right\| \\
	&= 	 \sup_{(s,z) \in U  \times D_{1/2} } \sup_{q \geq 0}
	\left\| \frac{q+1}{H_{{\rm at}}(s)- z + q }
	\overline{P}_{{\rm at}}(s)\right\|  \nonumber
\end{align}
where the last line follows from the notation introduced at the beginning of the proof.
Now  we see from \eqref{IIIveri1-0}
that Hypothesis  \ref{HypIII} follows.
We now choose   $\check{\delta}_0 = \delta_0 $.
Then $\delta_0 / \check{\delta}_0 = 1$. Thus   we can choose  $\rho =3/4$
and so $- \ln 2 \rho  < 0$.

Finally we observe that Hypothesis \ref{HypIV} holds for the  set  $X = \C \times \C$ as an immediate
consequence of   \eqref{defhtildefirst} and the notation in Table  \ref{translationgesnondeg}.
\end{proof}

\subsubsection{Proof of  Theorem \ref{gs:thmnondegres}}\label{proof:thmnondegres}
\begin{proof}[Proof of  Theorem \ref{gs:thmnondegres} ]
We will work with the operator  as defined  in \eqref{startingoperator}, i.e.,
\begin{align}  \label{startingoperator-2nd}
H_{g}(\underline{\kappa},\theta)  & :=   H_{{\rm el}} + g   W(\underline{\kappa},\theta)  + e^{-\theta} H_f  .
\end{align}

We will use  the definitions  \eqref{eq:GapAssumption}--\eqref{eqident22energy}, in particular we recall
\begin{equation} \label{eq:GapAssumption2}
\delta_j :=
{\rm dist}  \left( \sigma( H_{\rm el}(0))  \setminus \{ E_{{\rm el},j} \} ,  E_{{\rm el},j} \right)  > 0 .
\end{equation}
In order to obtain an operator  which is suitable for the renormalization analysis around the eigenvalue
we define similarly as for the ground state  the following rescaled operator.
 Let  $\tau = - \ln  \check{\delta}_j  \in \R$ and
\begin{align}
\check{H}_{j; g}(\underline{\kappa},\theta )  & :=   e^{\theta} \check{\delta}_j^{-1}  ( U_{\rm ph}(\tau) H_{g}(\underline{\kappa},\theta)   U_{\rm ph}(\tau)^{-1}- E_{{\rm el},j}(\theta) )   .
\end{align}
We can write
\begin{align} \label{mainoperatorworkwithresonances}
\check{H}_{j;g}(\underline{\kappa},\theta )
& =  \check{H}_{{\rm el},j}(\theta  ) +  g  \tilde{W}_j(\underline{\kappa} ,\theta) + H_f ,
\end{align}
where we defined
\begin{align}
% \check{H}_{{\rm el},j}(\theta ) & :=  e^{\theta} \check{\delta}_j^{-1}  ( H_{\rm el}(\theta) - E_{{\rm el},j}  )  \label{resHatchoice} \\
  \check{W}_{j}(\underline{\kappa},\theta) & :=  e^{\theta} \check{\delta}_j^{-1}U_{\rm ph}(\tau)   W(\underline{\kappa},\theta) U_{\rm ph}(\tau)^{-1}   . \label{resHatchoice2}
\end{align}
Note that 
\begin{align*}
  \check{H}_{{\rm el},j}(\theta)  P_{{\rm el},j} =  e^\theta \check{\delta}_j^{-1} (  H_{{\rm el}}(\theta) - E_{{\rm el},j} )   P_{{\rm el},j}    = 0 .
\end{align*}
So $\check{E}_{{\rm el},j} := 0$ is an eigenvalue of $
 \check{H}_{{\rm el},j}(\theta)
$ with eigenprojection   $ P_{{\rm el},j}    $.
Furthermore,  we find from  \eqref{Wintdip-0}  and \eqref{eq:GTheta} that
\begin{align} \label{Wintdip2j}
\check{W}_{j}(\underline{\kappa},\theta) =      a( \check{G}_{j; \overline{\underline{\kappa}},\overline{\theta}}) +
a^*(\check{G}_{j;\underline{\kappa},\theta})
\end{align}
with
\begin{align}\label{eq:GTheta2j} \check{G}_{j; \underline{\kappa},\theta}(\boldsymbol{k},\lambda) & :=
 e^{(\theta + \tau)/2}  \check{\delta}_j^{-1}\frac{  \rho(e^{-\theta - \tau} \boldsymbol{k}) }{\omega(\boldsymbol{k})^{1/2}} \nonumber \\
& \quad \times  \sum_{l=1}^N
\left(   \kappa_1^3  \chi( x_l)  x_l \cdot  |\boldsymbol{k}| i \varepsilon(\boldsymbol{k},\lambda)  + \kappa_2^5  S_l \cdot
 i \boldsymbol{k} \wedge \varepsilon(\boldsymbol{k},\lambda)  \right)   .
\end{align}

Thus we will verify that Hypothesis \ref{HypI} - \ref{HypIII}  of  Theorem~2.10  in \cite{HasLan23-2} hold for the operator \eqref{mainoperatorworkwithresonances}
\begin{align}  %\label{mainoperatorworkwithresonances}
(\underline{\kappa}, \theta) \mapsto \check{H}_{j;g}(\underline{\kappa},\theta )
\end{align}
on the set $X :=  S  \times D_{\theta_b} \subset \C^2 \times \C$ containing  $( 0 , i \vartheta_0)$ for some fixed $\vartheta \in (0,\pi/2)$, where $S$  is  any bounded open subset of $\C^2$.
Specifically,  we will verify   Hypothesis \ref{HypI} - \ref{HypIII} needed for Theorem~2.10  in \cite{HasLan23-2}  using the following Table \ref{translationgesnondeg-2}.

\begin{table}[H] \caption{Notations}  \label{translationgesnondeg-2}
\begin{center}
\begin{tabular}{||c | c||}
 \hline
 \text{Notation here } & \text{Notation in  } \cite{HasLan23-2}  \\ [0.5ex]
 \hline\hline
 $(\underline{\kappa},\theta)$ & $s$  \\
 \hline
 $(0,i \vartheta_0)$ & $s_0$  \\
 \hline
 $\check{H}_{j;g}(\underline{\kappa},\theta )$   & $H_g(s) $ \\
 \hline
 $\check{H}_{{\rm el},j}(\theta )$   & $H_{\rm at}(s)$  \\
 \hline
 $\check{G}_{j; \underline{\kappa},\theta}$  & $G_{1,s}$  \\
 \hline
 $\check{G}_{j; \underline{\kappa},\theta}$  & $G_{2,s}$  \\
 \hline
 $ \check{W}_{j}(\underline{\kappa}, \theta) $ & $W(s) $ \\ [1ex]
 \hline
 $  \check{E}_{{\rm el},j}$ & $E_{\rm at}(s) $\\ [1ex]
 \hline
 $ P_{{\rm el},j}  $  & $P_{\rm at}(s)$  \\ [1ex]
 \hline
\end{tabular}
\end{center}
\end{table}

Once the Hypothesis  \ref{HypI} - \ref{HypIII}  are verified  the claim will follow from  Theorem~2.10  in \cite{HasLan23-2}.
Hypothesis  \ref{HypI} and  \ref{HypII}  are verified  as in the proof of Theorem \ref{gs:thmnondeg},  with the obvious modifications.

Let us now verify Hypothesis \ref{HypIII}. For this we will use Proposition \ref{veriatomIIIex}.
It follows from this proposition that  the following holds.
Let  $\vartheta_0 \in (0,\pi/2)$ and $\theta_1 \in (0,\infty)$.
Pick $\epsilon > 0$ such that $\epsilon < \vartheta_0 < \pi/2- \epsilon$.
Then for any $\rho \in (0,\delta_j / \check{\delta}_j )$ and
 $\mathcal{U}_0 :=    \{ (\theta , z ) \in \C^2  :  \epsilon  <   {\rm Im} \theta  <  \pi/2 - \epsilon  , \,  | {\rm Re} \theta |   < \theta_1  , |z| < \rho \sin(\epsilon) \} $ the following holds
$$
\sup_{(\theta,z)\in \UU}  \sup_{q \geq 0} \left\| \frac{q+1}{\check{H}_{{\rm el},j}(\theta) - e^{\theta} z + q} \overline{{P}}_{{\rm el},j}\right\|  < \infty .
$$
Now choose $\check{\delta}_j$ such that $\frac{1}{2} (\delta_j / \check{\delta}_j  ) \sin \epsilon   = 1$.
And so for $\rho = \frac{1}{2} (\delta_j / \check{\delta}_j  )$ we find
that  for $U_0 =\{  \theta  \in \C   :  \epsilon  <   {\rm Im} \theta  < \vartheta_b - \epsilon  , \,  | {\rm Re} \theta |   < \theta_1   \}$
\begin{align} \label{eq:frompropvariatomIIgs:6-2}
\sup_{\theta \in  U_0  ,  z \in D_1}  \sup_{q \geq 0} \left\| \frac{q+1}{\check{H}_{{\rm el},j}(\theta) - e^{\theta} z + q} \overline{{P}}_{{\rm el},j}\right\|  < \infty .
\end{align}
If
$
{\rm Re} \theta \geq - \ln 2  ,
$
then for any
$z \in D_{1/2}$ we have
\begin{align} \label{eq:frompropvariatomIIgs:7-2}
|e^{-\theta} z | = e^{-{\rm Re} \theta} |z| \leq  e^{-{\rm Re} \theta} \frac{1}{2} \leq 1
\end{align}
thus
$ e^{-\theta} D_{1/2} \subset D_1 $ and so  $D_{1/2} \subset e^{\theta} D_1 $. We conclude from \eqref{eq:frompropvariatomIIgs:6-2} and
 \eqref{eq:frompropvariatomIIgs:7-2}  that for
\begin{align*}
U =  U_0 \cap \{ \theta \in \C : {\rm Re} \theta \geq - \ln 2  \}
\end{align*}
we find
\begin{align} \label{IIIveri1}
\infty 	&  >  \sup_{(\theta,z) \in U \times D_{1/2} } \sup_{q \geq 0}
	\left\| \frac{q+1}{\check{H}_{{\rm el},j}(\theta)- z + q }
	\overline{P}_{{\rm el},j} \right\| \\
	&= 	 \sup_{(s,z) \in U  \times D_{1/2} } \sup_{q \geq 0}
	\left\| \frac{q+1}{H_{{\rm at}}(s)- z + q }
	\overline{P}_{{\rm at}}(s)\right\|  \nonumber
\end{align}
and hence
Hypothesis \ref{HypIII}  holds for the set $\mathcal{U} = U  \times D_{1/2}$.
Finally observe that by construction $D_{\theta_{\rm r}}(i\vartheta_0)  \subset U$ for some $\theta_{\rm r} >0$.

\end{proof}

\subsubsection{Proof of  Theorem \ref{gs:thmdegrot}}\label{proof:gs:thmdegrot}
\begin{proof}[Proof of  Theorem \ref{gs:thmdegrot} ]  The proof is almost identical  to the proof of Theorem  \ref{gs:thmnondeg}.
With  the difference that we in addition have to verify  the case of degeneracy in Hypothesis \ref{HypII}  (iii).
To this end we first recall the notation
introduced in Table  \ref{translationgesnondeg},  see also     \eqref{defhtildefirst} and \eqref{defwtildefirst}
and recall   \eqref{eq:Wdipinfields}.

By definition $  \RR(U) =   \RR_{\rm el}(U)  \otimes \RR_{\rm f}(U)  $ for $U \in SU(2)$.
Let $\mathcal{S} = \{  \RR(U) : U \in SU(2) \}$.
We will show the invariance of
 $\check{H}_{{\rm el},0}(\theta)$, see  \eqref{defhtildefirst},
  by considering the individual terms.  From  Hypothesis \ref{hyp:rsym} (i) we know that $V$ commutes with
$ \RR_{\rm el}(U)$.
It  is well known that $ \RR_{\rm el}(U)$ commutes with   the Laplacian, cf.  Lemma   \ref{lem:propoftfrot}.
The invariance of $H_f$ is also known  by Lemma   \ref{lem:propoftfrot}.

The  invariance of
\begin{align}
% \check{H}_{{\rm el},0}(\theta) & :=  e^{\theta} \delta_0^{-1}  ( H_{\rm el}(\theta) - E_{{\rm el},0}(\theta) ) \label{defhtildefirst}   \\
\check{W}(\underline{\kappa},\theta) & :=  e^{\theta}\check{\delta}_0^{-1}U_{\rm ph}(\tau)   W(\underline{\kappa},\theta) U_{\rm ph}(\tau)^{-1}   ,\label{defwtildefirst-2}
\end{align}
cf. \eqref{defwtildefirst}, i.e.    $  \RR(U)  \check{W}(\underline{\kappa},\theta) \RR(U)^*  =  \check{W}(\underline{\kappa},\theta)$
  is shown as follows.  From \eqref{defwtildefirst-2}
and the fact that $U_{\rm ph}$ commutes with $\mathcal{R}(U)$ we see that invariance of \eqref{defwtildefirst-2}
follows provided we show that
\begin{align}
\label{invofwtime0-2}   \RR(U) W(\underline{\kappa},\theta)  \RR(U)^*  = W(\underline{\kappa},\theta) \quad \text{ for all } U \in SU(2) .
\end{align}
To see \eqref{invofwtime0-2}  we recall that
\begin{align} \label{eq:Wdipinfields-1-2}
W(\underline{\kappa},\theta) = &   \sum_{j=1}^N \left( \kappa_1^3 \chi(  x_j)   x_j \cdot E_\theta(0) +\kappa_2^3 S_j \cdot B_\theta(0) \right)
\end{align}
Note that by unique analytic continuation if suffices to show invariance of  \eqref{eq:Wdipinfields-1-2}  for real $\theta$.
But this follows since in view of Lemma \ref{lem:propoftfrot} we have with $R: =\pi(U)^{-1}$
\begin{align*} %\label{eq:Wdipinfields-1-2}
 \RR(U) W(\underline{\kappa},\theta)  \RR(U)^*= &   \sum_{j=1}^N \big( \kappa_1^3 \chi( \RR(U) x_j \RR(U)^*)  \RR(U)   x_j  \RR(U)^* \cdot   \RR(U)  E_\theta(0) \RR(U)^* \\
 & \quad +  \kappa_2^5 \RR(U)   S_j \RR(U)^* \cdot   \RR(U)  B_\theta(0) \RR(U)^*\big) \\
 = &   \sum_{j=1}^N \big( \kappa_1^3 \chi(  R x_j )  R   x_j   \cdot  R  E_\theta(0)   + \kappa_2^5 R  S_j   \cdot   R B_\theta(0) \big)\\
 = &   \sum_{j=1}^N \big( \kappa_1^3 \chi(   x_j )     x_j   \cdot    E_\theta(0)   +  \kappa_2^5  S_j   \cdot    B_\theta(0) \big) = W(\underline{\kappa},\theta) .
\end{align*}
Thus we have shown \eqref{invofwtime0-2}, and so the invariance of $  \check{W}(\theta) $.

By Lemma~\ref{lem:propoftfrot} we know that   $ \RR_{\rm el}(U)$ is a unitary operator on $\HH_{\rm at} := \HH_{\rm el}$
and $ \RR_{\rm f}( U) $ is a unitary operator on $\FF = \FF(\hh)$.
The latter leaves the Fock vacuum as well as the one
particle sector invariant and commutes with dilation.
Furthermore, we have
$$
\mathcal{S}_1 = \{ S_1 : S_1 \otimes S_2 \in \mathcal{S} \} = \{  \RR_{\rm el}(U) : U \in SU(2) \} ,
$$
and this group acts by assumption  irreducibly on the eigenspace of $E_{{\rm el},0}$.
This finishes the verification of Hypothesis \ref{HypII} (iii). The rest of the proof is identical to the proof of Theorem~\ref{gs:thmnondeg}.

\end{proof}

\subsubsection{Proof of  Theorem \ref{gs:thmdegtime}} \label{proof:thmdegtime}
 \begin{proof}[Proof of  Theorem \ref{gs:thmdegtime} ]
 The proof is almost identical  to the proof of Theorem  \ref{gs:thmdegrot}.
With  the difference that we consider here time reversal symmetry.
We recall again the notation introduced in Table  \ref{translationgesnondeg},  see also     \eqref{defhtildefirst} and \eqref{defwtildefirst} and recall   \eqref{eq:Wdipinfields}.

Recall $ \mathcal{T} =  \mathcal{T}_{\rm el}  \otimes \mathcal{T}_{f}$ from \eqref{eq:defoftimerev}.
Let $\mathcal{S} = \{ \mathcal{T}   \}$.
To show  invariance of $\check{H}_{{\rm el},0}(\theta)$, see  \eqref{defhtildefirst},
  we consider the individual terms.
By Hypothesis \ref{hyp:tinvgs} we know that $$  \mathcal{T}_{\rm el} V_{\rm el} \mathcal{T}_{\rm el}^*  = V_{\rm el}^*. $$
Observe that by  Lemma \ref{lem:propoftf} we find
\begin{align*}
 \mathcal{T}_{\rm el} (- \Delta) \mathcal{T}_{\rm el}^* = - \Delta=   [ -  \Delta   ]^*
\end{align*}
and that   $H_f$ is symmetric with respect to $\mathcal{T}_f$. From the same Lemma we also get that $U_{\rm ph}(\tau)$ commutes with $\mathcal{T}_{f}$. Hence the invariance of
\begin{align}
% \check{H}_{{\rm el},0}(\theta) & :=  e^{\theta} \delta_0^{-1}  ( H_{\rm el}(\theta) - E_{{\rm el},0}(\theta) ) \label{defhtildefirst}   \\
  \check{W}(\underline{\kappa},\theta)  & :=  e^{\theta}\check{\delta}_0^{-1}U_{\rm ph}(\tau)   W(\underline{\kappa},\theta) U_{\rm ph}(\tau)^{-1}   ,%\label{defwtildefirst-3}
\end{align}
i.e., $\mathcal{T}\check{W}(\underline{\kappa},\theta)\mathcal{T}^* = \check{W}(\underline{\kappa},\theta)^*$ will follow by means  of $\mathcal{T}\mathcal{T}^* = 1$
provided
\begin{align} \label{eq:TWT=Wstar}
	\mathcal{T} W(\underline{\kappa},\theta) \mathcal{T}^* = W(\underline{\kappa},\theta)^*
\end{align}
and
\begin{align} \label{eq:TeT=estar}
	\mathcal{T} e^\theta \mathcal{T}^* = (e^\theta)^* .
\end{align}
By an analytic continuation argument it is enough to show \eqref{eq:TWT=Wstar} and \eqref{eq:TeT=estar} for real valued $\theta$.
Now  \eqref{eq:TeT=estar} follows directly from the definition.  To show \eqref{eq:TWT=Wstar} we recall that, see \eqref{eq:Wdipinfields},
\begin{align*} %\label{eq:Wdipinfields-1-2-3}
W(\underline{\kappa},\theta) = &   \sum_{j=1}^N \left( \kappa_1^3 \chi(  x_j)   x_j \cdot E_\theta(0) +  \kappa_2^5 S_j \cdot B_\theta(0) \right)  .
\end{align*}
From Lemma \ref{lem:propoftf} we see that for real $\theta \in \R$
\begin{align*} %\label{eq:Wdipinfields-1-2}
 \TT W(\underline{\kappa},\theta)  \TT^* = &   \sum_{j=1}^N \big( \overline{\kappa}_1^3 \chi( \TT x_j \TT^*)  \RR(U)   x_j  \TT^* \cdot   \TT  E_\theta(0) \TT^*+  \overline{\kappa}_2^5 \TT   S_j \TT^* \cdot   \TT  B_\theta(0) \TT^*\big) \\
 = &   \sum_{j=1}^N \big(\overline{\kappa}_1^3  \chi(   x_j )     x_j   \cdot    E_\theta(0)   + \overline{\kappa}_2^5  (-  S_j )   \cdot   (-  B_\theta(0) )  \big)\\
 = &   \sum_{j=1}^N \big( \overline{\kappa}_1^3 \chi(   x_j )     x_j   \cdot    E_\theta(0)   +  \overline{\kappa}_2^5   S_j   \cdot    B_\theta(0) \big) = [W(\underline{\kappa},\theta)]^* .
\end{align*}
Thus we have shown \eqref{eq:TWT=Wstar}, and so the invariance of $  \check{W}(\underline{\kappa},\theta) $.

By Lemma \ref{lem:propoftf} we also know that $\mathcal{T}_{\rm el}$ is an anti-unitary operator on $\HH_{\rm at} := \HH_{\rm el}$. Moreover, that $  \mathcal{T}_f  $ is an anti-unitary operator on $\FF = \FF(\hh)$.
The latter in addition leaves the Fock vacuum as well as the one particle sector invariant and commutes with dilation.
Furthermore, we have
$$
\mathcal{S}_1 = \{ S_1 : S_1 \otimes S_2 \in \mathcal{S} \} = \{ \mathcal{T}_{\rm el}  \} ,
$$
and this group acts by Lemma \ref{lem:timeirred}  irreducibly on the eigenspace of $E_{{\rm el},0}$.
\end{proof}

\subsection{Proof of Lemma \ref{Spinbahn}} \label{proof:Spinbahn}
\begin{proof}[Proof of Lemma \ref{Spinbahn} ]
This follows from the so called uncertainty principle lemma \cite[Lemma, Page 169]{ReeSim2}, which says that for all $\psi \in C_c^\infty(\R^3) $
\begin{equation} \label{eq:uncerprinc}
\int_{\R^3}  \frac{1}{4|x|^2} | \psi(x) |^2 dx \leq \int_{\R^3} | \nabla \psi(x) |^2 dx .
\end{equation}
Thus for any $\delta > 0$ and $\epsilon > 0$ and   $\psi \in C_c^\infty(\R^3) $
\begin{align}
& \| I_{SB} \psi \| \nonumber  \\
& \leq \sum_{j=1}^N \sum_{a,b=1}^3 2  \| \nu_{j,b}(x_j) \cdot   p_{j,a}  \psi \| \nonumber  \\
& \leq \sum_{j=1}^N \sum_{a,b=1}^3 2 \left(  \| 1_{|x_j| > \delta}  \nu_{j,b}(x_j)  p_{j,a}  \psi \| + \| 1_{|x_j| \leq \delta}  \nu_{j,b}(x_j)   p_{j,a}  \psi \| \right) \nonumber   \\
& \leq \sum_{j=1}^N \sum_{a,b=1}^3 2 \left(  \sup_{|x| > \delta} |   \nu_{j,b}(x)  | \|   p_{j,a}  \psi \| + 2 \sup_{|x|  \in[0,\delta]}  |  |x|\nu_{j,b}(x)|  \| (2|x_j|)^{-1}   p_{j,a}  \psi \| \right) \nonumber  \\
& \leq   2 \sum_{j=1}^N \sum_{b=1}^3 \sup_{|x| \geq \delta} |  \nu_{j,b}(x) | \sum_{a=1}^3 \| p_{j,a} \psi \|  +   4 \sum_{j=1}^N \sum_{b=1}^3 \sup_{|x|  \in[0,\delta]}  |  |x|\nu_{b,j}(x)| \sum_{a=1}^3 \|  \nabla p_{j,a}  \psi  \|  \label{spinbosrelbound-0} \\
& \leq   6 \sum_{j=1}^N \sum_{b=1}^3 \sup_{|x| \geq \delta} |  \nu_{j,b}(x) | \left( \frac{1}{4 \epsilon }  \| \psi \| +  \epsilon \| \Delta \psi \| \right) +   36 \sum_{j=1}^N \sum_{b=1}^3 \sup_{|x|  \in[0,\delta]}  |  |x|\nu_{b,j}(x)|   \|  \Delta  \psi  \| ,   \label{spinbosrelbound-1}
\end{align}
where in \eqref{spinbosrelbound-0} we used for the second term  \eqref{eq:uncerprinc} and the last line follows by multiple applications of Cauchy-Schwarz and the squaring inequality.
We see that we can make the expressions in the last line  \eqref{spinbosrelbound-1}  involving $\| \Delta \psi \|$  sufficiently small by choosing  first $\delta > 0$ sufficiently small, in view of   \eqref{UpperBoundNU}, and then $\epsilon > 0$.
This shows the statement about the infinitesimal boundedness, since $C_c^\infty(\R^3)$ is a core for $-\Delta$.  \\
Let us now show relative compactness.
To show compactness we assume $N=1$.
We insert $\psi = (p^2 + 1)^{-1} 1_{|p| \geq n } \phi$ into  the calculation leading to
the first   term in  \eqref{spinbosrelbound-0}  and second term in \eqref{spinbosrelbound-1}
\begin{align*}
& \| I_{SB} (p^2 + 1)^{-1} 1_{|p| \geq n } \phi \| \\
& \leq  2  \sum_{b=1}^3 \sup_{|x| \geq \delta} |  \nu_{1,b}(x) |
\|   p_{1,a}  (p^2 + 1)^{-1} 1_{|p| \geq n } \phi \| +
36 \sum_{b=1}^3 \sup_{|x|  \in[0,\delta]}  |  |x|\nu_{1,b}(x)|   \|  \Delta  (p^2 + 1)^{-1} 1_{|p| \geq n } \phi  \| \\
& \leq C_N n^{-1}  \sup_{r \geq \delta} |   \nu(r)  |  \| \phi \| +
C_N \sup_{r  \in[0,\delta]}  |  r\nu(r)|  \|\phi  \|.
% \\
%& \leq C_N n^{-1}  C \delta^{-2}   \| \phi \| +
%C_N C  a^{-1} \delta^{b-1}  \|\phi  \| .
\end{align*}
Now the right hand side can be made arbitrarily small by first choosing $\delta > 0$ sufficiently
small and then $n$ sufficiently large.
Thus  $I_{SB} (p^2 + 1)^{-1} 1_{|p| \leq n }$ converges in norm to  $I_{SB} (p^2 + 1)^{-1}$. So to show relative compactness
it suffices to show compactness of $I_{SB} (p^2 + 1)^{-1} 1_{|p| \leq n }$. But this follows since that operator is Hilbert Schmidt.
Let $\varphi_n$ be the Fourier transform of $|p| (p^2 + 1)^{-1} 1_{|p| \leq n }$. We can write for some constant $C$
\begin{align*}
   \| I_{SB} (p^2 + 1)^{-1} 1_{|p| \leq n } \|_2^2  & \leq   C  \int_{\R^3 \times \R^3}   \nu(|x|)^2   |\varphi_n(x-y)|^2   dy dx\\
   & \leq   C  \|\varphi_n\|^2 \int_{\R^3}   \nu(|x|)^2     dx  < \infty .
\end{align*}
This shows relative compactness.
\end{proof}

%%%%%%%%%%%%%%%%%%%%%%%%%%%%%%%%%%%%%%%
%%%%%%%%%%%%%%%%%%%%%%%%%%%%%%%%%%%%%%%

\appendix

\section{Hypothesis and Main Theorem from \cite{HasLan23-2}} \label{app:ResultsfromOtherPaper}
In the following we state, for the convenience of the reader, the Hypotheses and the Main Theorem from \cite{HasLan23-2}.

Let $X$ be an open subset of $\C^\nu$, where $\nu \in \N$.
For each $s \in X$ let  $H_{\rm at}(s)$ be a densely defined closed operator in some separable complex Hilbert space $\HH_{\rm at}$.
Let the Fock space $\FF$ be defined as in \eqref{eq:FockSpace} and the free field operator $H_{\rm f}$ as in \eqref{eq:fieldenergy}.
The interaction between atomic and field particles is given by
\begin{align*}
	W(s) :=   a(G_{1,\overline{s}}) +a^*(G_{2,s})  \,
\end{align*}
where $k\mapsto G_{j,s}(k)$ is an element of $L^2(\R^3\times \Z_2;\mathcal{L}(\HH_{\rm at}))$ for each $s \in X$.

\begin{hypr}\label{HypI}
For $s \in X$ and $j=1,2$ the mapping $s \mapsto G_{j,s}$ is a
bounded analytic function that has values in
$L^2(\R^3\times \Z_2;\mathcal{L}(\HH_{\rm at}))$.
Moreover there exists a $\mu >0$ such that
\begin{equation*}
	\max_{j=1,2} \sup_{s \in X} \|G_{j,s}\|_\mu < \infty \, .
\end{equation*}
\end{hypr}

\begin{hypr}\label{HypII}
\textrm{ }
\begin{itemize}
\item[(i)] The mapping $s \mapsto H_{\rm at}(s)$ is an analytic family in the sense of Kato.
\item[(ii)]  There exists $s_0 \in  X$  such that $E_{\rm at}(s_0)$ is a non-defective, discrete element of the spectrum of  $H_{\rm at}(s_0)$.
\item[(iii)] If $E_{\rm at}(s_0)$ is degenerate,  there
	exists a group of symmetries,
	$\mathcal{S}$, such that  $H_{\rm at}(s) \otimes \one_{\FF}$, $H_{\rm f}$,  and $W(s)$ are symmetric with respect to $\mathcal{S}$ for all   $s \in X$.
	Each element of  $\mathcal{S}$ can be written in the form
 $S_1 \otimes S_2$, where  $S_1$ is a symmetry in $\HH_{\rm at}$ and $S_2$ is a symmetry in $\FF$.
Furthermore, the set of symmetries in $\HH_{\rm at}$
$$
\mathcal{S}_1 :=  \{ S_1 :  S_1 \otimes S_2 \in \mathcal{S} \}
$$
acts irreducibly on the eigenspace of $H_{\rm at}(s_0)$ with eigenvalue $E_{\rm at}(s_0)$.
 Each element of $\mathcal{S}_2 := \{ S_2 :  S_1 \otimes S_2 \in \mathcal{S} \}$ leaves the
Fock vacuum as well as  the one particle subspace  invariant and commutes with the operator of dilation.
\end{itemize}
\end{hypr}

\begin{hypr}\label{HypIII}
Hypothesis~\ref{HypII} holds and there exists a neighborhood
$\mathcal{U} \subset X_1 \times \C$ of
$(s_0,E_{\rm at}(s_0))$
such that for all $(s,z) \in \mathcal{U}$ we have
$|E_{\rm at}(s) - z | < 1/2$,  $\sup_{(s,z) \in \mathcal{U} } \| P_{\rm at}(s) \| < \infty$,   and
\begin{equation*}
	\sup_{(s,z) \in \mathcal{U}} \sup_{q \geq 0}
	\left\| \frac{q+1}{H_{\rm at}(s)- z + q }
	\overline{P}_{\rm at}(s)\right\| < \infty \,.
\end{equation*}
\end{hypr}

\noindent
For a precise definition of $P_{\rm at}(s)$ and hence $\overline{P}_{\rm at}(s) := \one_{\HH_{\rm at}}-P_{\rm at}(s)$ we refer to \cite{HasLan23-2}.
For a subset $\Omega \subset \C^n$ we write  $\Omega^* := \{ \overline{z} :   z \in \Omega\}$.

\begin{hypr} \label{HypIV}
The following holds.
\begin{itemize}
\item[(i)]
We have  $X = X^*$ and  for all $s \in X$ the identities $G_{1,s} = G_{2,s}$ and $H_{\rm at}(s)^* = H_{\rm at}(\overline{s})$ hold.
 \item[(ii)]
We have $s_0 \in X \cap \R^\nu $ and $E_{\rm at}(s_0) = \inf \sigma (H_{\rm at }(s_0))$.
\end{itemize}
\end{hypr}

\begin{theorem} \label{thm:symdegenSpinBoson}
Suppose Hypotheses~\ref{HypI}, \ref{HypII}, \ref{HypIII} 
hold and  let $$d = \dim {\rm ker} ( H_{\rm at}(s_0) - E_{\rm at}(s_0) ) .$$
Then there exists a neighborhood $X_b \subset X$ of $s_0$
and a positive constant $g_b$ such that for all $s \in X_b$
and all $g \in [0, g_b]$
 the operator $H_g(s)$ has an eigenvalue $E_g(s)$ with  $\degendim$ linearly independent eigenvectors $\psi_{g,j}(s)$, $j=1,...,\degendim$,
with the following properties.
\begin{itemize}
\item[(i)] The functions $s \mapsto E_g(s)$  and $s \mapsto \psi_{g,j}(s)$  for $j=1,...,\degendim$  are analytic functions on $X_b$.
\item[(ii)]  Uniformly in $s \in X_b$
we have $\lim_{g \to 0} E_g(s) =  E_{\rm at}(s)$ and  $\lim_{g \to 0} \psi_{g,j}(s)  =  \varphi_{{\rm at},j}(s) \otimes  \Omega $ for some $\varphi_{{\rm at},j}(s) \in \ran P_{\rm at}(s)$.
\end{itemize}
If in addition Hypothesis~\ref{HypIV} holds, then   $X_{\rm b} = X_{\rm b}^*$  and
\begin{itemize}
\item[(iii)]  for all $s \in X_b \cap \R^\nu$ it holds that $E_g(s) = \inf \sigma(H_g(s)).$
\item[(iv)]  for all $s \in X_b $ it holds that $\overline{E}_g(s) =  E_g(\overline{s})$.
\end{itemize}
\end{theorem}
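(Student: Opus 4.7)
The natural strategy is operator theoretic renormalization à la Bach--Fröhlich--Sigal, adapted to treat the symmetry-protected degeneracy. I would first isothermalize the problem at the reference point $s_{0}$: set $\Pi_{0} := P_{\rm at}(s_{0}) \otimes \chi_{H_{f} \leq 1}$ and apply a smooth Feshbach--Schur map to $H_{g}(s) - z$ with respect to $\Pi_{0}$, producing an effective operator $F_{0}(s,z)$ acting on $\ran P_{\rm at}(s_{0}) \otimes \ran \chi_{H_{f} \leq 1}$. Hypothesis~\ref{HypIII} guarantees the invertibility of $\overline{\Pi}_{0}(H_{g}(s) - z)\overline{\Pi}_{0}$ (uniformly in the dilated photon-energy variable), so $F_{0}(s,z)$ is well defined and analytic in $(s,z)$ on a neighborhood of $(s_{0}, E_{\rm at}(s_{0}))$ for small $|g|$. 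This reduces the eigenvalue problem to finding $(s,z)$ with $0 \in \sigma(F_{0}(s,z))$.

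Next I would iterate a rescaling + Feshbach transformation $\mathcal{R}_{\varrho}$ on $F_{0}$, tracking the effective Hamiltonian in a suitable Banach space of integral kernels (of the form $W_{m,n}(s,z;k_{1},\ldots,k_{m+n})$ as in BFS). Hypothesis~\ref{HypI} (the $\mu$-weighted $L^{2}$ bound on $G_{j,s}$) gives the necessary infrared behavior that makes the RG transformation contractive in the direction of the interaction, while the rescaling of $H_{f}$ is the spectral focusing. The hard part here is showing that the iteration stays inside the Banach ball and converges, uniformly in $s$ in a neighborhood of $s_{0}$ and uniformly in a disc of spectral parameters $z$ shrinking geometrically around $E_{\rm at}(s_{0})$. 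Standard estimates on creation/annihilation operator bounds via $\|\cdot\|_{\mu}$ together with the resolvent bound from Hypothesis~\ref{HypIII} give the contraction; the output is a limiting effective operator $F_{\infty}(s,z)$ acting essentially on $\ran P_{\rm at}(s_{0})$, together with an analytic renormalization map $(s,z) \mapsto (s, Z(s,z))$.

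The crucial role of Hypothesis~\ref{HypII}(iii) is to preserve the degeneracy structure. At every RG step the symmetry group $\mathcal{S}$ acts on the Feshbach space and commutes with $F_{n}(s,z)$, because $S_{2}$ leaves the Fock vacuum and one-particle sector invariant and commutes with dilation, so the Feshbach and rescaling operations intertwine with $\mathcal{S}$. Hence the limiting operator on $\ran P_{\rm at}(s_{0})$ commutes with $\mathcal{S}_{1}$, and since $\mathcal{S}_{1}$ acts irreducibly on $\ran P_{\rm at}(s_{0})$ Schur's lemma forces $F_{\infty}(s,z)\restriction_{\ran P_{\rm at}(s_{0})}$ to be a scalar $f_{\infty}(s,z)\,\mathbf{1}$. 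The eigenvalue equation collapses to the scalar equation $f_{\infty}(s,z) = 0$, which by the implicit function theorem has a unique analytic solution $z = E_{g}(s)$ near $E_{\rm at}(s_{0})$, with eigenvalue perturbation $O(g^{2})$. The $d = \dim \ran P_{\rm at}(s_{0})$ linearly independent eigenvectors are then obtained by pulling back an orthonormal basis of $\ran P_{\rm at}(s_{0})$ through the inverses of the Feshbach/rescaling maps; analyticity in $s$ follows from the uniform convergence, and the vacuum asymptotics $\psi_{g,j}(s) \to \varphi_{{\rm at},j}(s) \otimes \Omega$ is visible on each RG step.

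Finally, under Hypothesis~\ref{HypIV} the symmetry $H_{\rm at}(s)^{*} = H_{\rm at}(\overline{s})$ and $G_{1,s} = G_{2,s}$ makes $s \mapsto H_{g}(s)$ a self-adjoint family when restricted to $X \cap \R^{\nu}$; the RG construction is compatible with complex conjugation so $\overline{E_{g}(s)} = E_{g}(\overline{s})$, giving~(iv), and since $E_{\rm at}(s_{0}) = \inf \sigma(H_{\rm at}(s_{0}))$ together with the perturbation size and a standard Neumann-series argument pin $E_{g}(s)$ as the bottom of the spectrum for real $s$ near $s_{0}$, giving~(iii). The principal obstacle is the Banach-space setup and the contractivity estimates for the RG iteration in the presence of the degenerate atomic eigenspace; the symmetry argument makes these work essentially identically to the non-degenerate case, because Schur's lemma eliminates the potentially problematic off-diagonal blocks of the effective Hamiltonian at every scale.
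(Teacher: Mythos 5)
This theorem is not proved in the present paper at all: it is quoted verbatim in Appendix~\ref{app:ResultsfromOtherPaper} from \cite{HasLan23-2}, whose proof is by operator-theoretic renormalization, which is exactly the strategy you outline (smooth Feshbach--Schur reduction, contractive RG iteration in a Banach space of kernels controlled by Hypotheses~\ref{HypI} and~\ref{HypIII}, and a Schur-type argument from the irreducible symmetry action in Hypothesis~\ref{HypII}(iii) forcing the effective operator on the degenerate block to be scalar). Your sketch therefore matches the approach of the cited proof; the only caveat is that it is a roadmap rather than a proof, since the Banach-space contraction estimates and the anti-unitary variant of Schur's lemma (needed when $\mathcal{S}_1$ contains time-reversal, where the symmetry relation reads $STS^*=T^*$) are asserted rather than carried out.
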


\section{Symmetries}\label{app:symmetries}
\begin{definition}
Let $\HH$ be a complex Hilbert space.
\begin{itemize}
\item[(a)]
A mapping $T : \HH \to \HH$ is called {\bf anti-linear} operator  in  $\HH$  if
$$
T(\alpha x + \beta y) = \overline{\alpha} T x + \overline{\beta}T y ,
$$
for all $\alpha, \beta \in \C$ and $x,y \in \HH$. An anti-linear $T$ operator is called bounded if $$\sup_{x: \| x \| \leq 1 } \| Tx \| < \infty . $$
\item[(b)]
The {\bf adjoint} of a bounded anti-linear operator,  $T : \HH \to \HH$, is defined to be
the anti-linear operator $T^* : \HH \to \HH$ such that
$$
\langle x , T y \rangle = \overline{ \langle T^* x , y \rangle }
$$
for all $x, y \in \HH$.
\item[(c)] An  anti-linear operator  $V$  in $\HH$  is called {\bf anti-unitary} if it is surjective and
$$
\langle V x , V y \rangle = \overline{ \langle  x , y \rangle }
$$
for all $x, y \in \HH$.
\end{itemize}
\end{definition}

\begin{definition}\label{def:symmetry}
Let $\HH$ be a complex Hilbert space.
\begin{itemize}
\item[(a)] A {\bf symmetry} in $\HH$ is a unitary or anti-unitary operator in $\HH$.
\item[(b)] We say that a  linear operator   $T$  in $\HH$ (possibly unbounded)  is  {\bf symmetric with respect to }  $S$   if
\begin{align*}
 S T S^* &= T  \,, \quad \textrm{for }  S
	\textrm{ unitary,} \\
 S T S^* &= T^* \,, \;\; \textrm{for } S
	\textrm{ antiunitary.}
\end{align*}
In that  we also say that $S$ is {\bf symmetry of}  $T$.
\end{itemize}
\end{definition}

%%%%%%%%%%%%%%%%%%%%%%%%%%%%%%%%%%%%%
%%%%%%%%%%%%%%%%%%%%%%%%%%%%%%%%%%%%%

\bibliography{references}

\providecommand{\bysame}{\leavevmode\hbox to3em{\hrulefill}\thinspace}
\providecommand{\MR}{\relax\ifhmode\unskip\space\fi MR }
% \MRhref is called by the amsart/book/proc definition of \MR.
\providecommand{\MRhref}[2]{%
  \href{http://www.ams.org/mathscinet-getitem?mr=#1}{#2}
}
\providecommand{\href}[2]{#2}
\begin{thebibliography}{10}

\bibitem{AbdHas12}
A.~Abdesselam and D.~Hasler, \emph{Analyticity of the ground state energy for
  massless {N}elson models}, Comm. Math. Phys. \textbf{310} (2012), no.~2,
  511--536. \MR{2890307}

\bibitem{BacBalPiz17}
Volker Bach, Miguel Ballesteros, and Alessandro Pizzo, \emph{Existence and
  construction of resonances for atoms coupled to the quantized radiation
  field}, Advances in Mathematics \textbf{314} (2017), 540--572.

\bibitem{BCFS}
Volker Bach, Thomas Chen, J{\"u}rg Fr{\"o}hlich, and Israel~Michael Sigal,
  \emph{Smooth {F}eshbach map and operator-theoretic renormalization group
  methods}, J. Funct. Anal. \textbf{203} (2003), no.~1, 44--92. \MR{1996868}

\bibitem{BacFroSig98-1}
Volker Bach, J{\"u}rg Fr{\"o}hlich, and Israel~Michael Sigal, \emph{Quantum
  electrodynamics of confined nonrelativistic particles}, Adv. Math.
  \textbf{137} (1998), no.~2, 299--395. \MR{1639713}

\bibitem{BacFroSig98-2}
\bysame, \emph{Renormalization group analysis of spectral problems in quantum
  field theory}, Adv. Math. \textbf{137} (1998), no.~2, 205--298. \MR{1639709}

\bibitem{BacFroSig99}
\bysame, \emph{Spectral analysis for systems of atoms and molecules coupled to
  the quantized radiation field}, Comm. Math. Phys. \textbf{207} (1999), no.~2,
  249--290. \MR{1724854}

\bibitem{BalDecHan.2019}
Miguel Ballesteros, Dirk-Andr\'{e} Deckert, and Felix H\"{a}nle,
  \emph{Analyticity of resonances and eigenvalues and spectral properties of
  the massless spin-boson model}, J. Funct. Anal. \textbf{276} (2019), no.~8,
  2524--2581. \MR{3926124}

\bibitem{BalFauFroSch15}
Miguel Ballesteros, J{\'e}r{\'e}my Faupin, J{\"u}rg Fr{\"o}hlich, and Baptiste
  Schubnel, \emph{Quantum electrodynamics of atomic resonances}, Comm. Math.
  Phys. \textbf{337} (2015), no.~2, 633--680. \MR{3339159}

\bibitem{CohDiuLal86v1}
C.~{Cohen-Tannoudji}, B.~{Diu}, and F.~{Laloe}, \emph{Quantum mechanics, volume
  1}, Wiley-VCH, June 1986.

\bibitem{CohDiuLal86v2}
\bysame, \emph{Quantum mechanics, volume 2}, Wiley-VCH, June 1986.

\bibitem{Die69}
J.~Dieudonn\'e, \emph{Foundations of modern analysis}, Pure and Applied
  Mathematics, vol. Vol. 10-I, Academic Press, New York-London, 1969, Enlarged
  and corrected printing. \MR{349288}

\bibitem{evans2010}
L.C. Evans, \emph{Partial differential equations}, Graduate studies in
  mathematics, American Mathematical Society, 2010.

\bibitem{Ger00}
Christian G{\'e}rard, \emph{On the existence of ground states for massless
  {P}auli-{F}ierz {H}amiltonians}, Ann. Henri Poincar\'e \textbf{1} (2000),
  no.~3, 443--459. \MR{1777307}

\bibitem{Gri04}
M.~Griesemer, \emph{Exponential decay and ionization thresholds in
  non-relativistic quantum electrodynamics}, Journal of Functional Analysis
  \textbf{210} (2004), no.~2, 321 -- 340.

\bibitem{GriHas09}
Marcel Griesemer and David Hasler, \emph{Analytic perturbation theory and
  renormalization analysis of matter coupled to quantized radiation}, Ann.
  Henri Poincar\'e \textbf{10} (2009), no.~3, 577--621. \MR{2519822}

\bibitem{GriLieLos01}
Marcel Griesemer, Elliott~H. Lieb, and Michael Loss, \emph{Ground states in
  non-relativistic quantum electrodynamics}, Invent. Math. \textbf{145} (2001),
  no.~3, 557--595. \MR{1856401}

\bibitem{HasHer11-2}
David Hasler and Ira Herbst, \emph{Convergent expansions in non-relativistic
  qed: analyticity of the ground state}, J. Funct. Anal. \textbf{261} (2011),
  no.~11, 3119--3154. \MR{2835993}

\bibitem{HasHer11-1}
\bysame, \emph{Ground states in the spin boson model}, Ann. Henri Poincar\'e
  \textbf{12} (2011), no.~4, 621--677. \MR{2787765}

\bibitem{HasLan18-1}
David Hasler and Markus Lange, \emph{Renormalization analysis for degenerate
  ground states}, J. Funct. Anal. \textbf{275} (2018), no.~1, 103--148.
  \MR{3799625}

\bibitem{HasLan23-2}
\bysame, \emph{Degenerate perturbation theory for models of quantum field
  theory with symmetries}, Annales Henri Poincaré (2023), 57.

\bibitem{HasLan23-1}
\bysame, \emph{Symmetries in non-relativistic quantum electrodynamics}, Reviews
  in Mathematical Physics \textbf{35} (2023), no.~08, 39.

\bibitem{Kra30}
Hendrik~A. Kramers, \emph{Th\'{e}orie g\'{e}n\'{e}rale de la rotation
  paramagn\'{e}tique dans les cristaux}, Proceedings Koninklijke Akademie van
  Wetenschappen \textbf{33} (1930), 959--972.

\bibitem{NIST}
Alexander Kramida and Yuri Ralchenko, \emph{Nist atomic spectra database, nist
  standard reference database 78}, 1999.

\bibitem{LieLos03}
Elliott~H. Lieb and Michael Loss, \emph{Existence of atoms and molecules in
  non-relativistic quantum electrodynamics}, Adv. Theor. Math. Phys. \textbf{7}
  (2003), no.~4, 667--710. \MR{2039034}

\bibitem{LosMiySpo09}
Michael Loss, Tadahiro Miyao, and Herbert Spohn, \emph{Kramers degeneracy
  theorem in nonrelativistic {QED}}, Lett. Math. Phys. \textbf{89} (2009),
  no.~1, 21--31. \MR{2520177}

\bibitem{Osgood1899}
W.~F. Osgood, \emph{Note {\"u}ber analytische functionen mehrerer
  ver{\"a}nderlichen}, Mathematische Annalen \textbf{52} (1899), no.~2–3,
  462–464.

\bibitem{ReeSim2}
M.~Reed and B.~Simon, \emph{Methods of modern mathematical physics. {II}.
  {F}ourier analysis, self-adjointness}, Academic Press, New York-London, 1975.
  \MR{0493420}

\bibitem{ReeSim4}
\bysame, \emph{Methods of {M}odern {M}athematical {P}hysics {IV}. {A}nalysis of
  {O}perators}, Academic Press, New York-London, 1978. \MR{0493421}

\bibitem{Schwabl2007}
Franz Schwabl, \emph{Quantum mechanics}, fourth ed., Springer, Berlin, 2007.
  \MR{2374994}

\bibitem{Sig09}
Israel~Michael Sigal, \emph{Ground state and resonances in the standard model
  of the non-relativistic {QED}}, J. Stat. Phys. \textbf{134} (2009), no.~5-6,
  899--939. \MR{2518974}

\bibitem{Sim.97}
Barry Simon, \emph{Representations of finite and compact groups}, Graduate
  Studies in Mathematics, vol.~10, American Mathematical Society, Providence,
  RI, 1996. \MR{1363490}

\bibitem{Spo98}
Herbert Spohn, \emph{Ground state of a quantum particle coupled to a scalar
  {B}ose field}, Lett. Math. Phys. \textbf{44} (1998), no.~1, 9--16.
  \MR{1623746}

\bibitem{Spo04}
\bysame, \emph{Dynamics of charged particles and their radiation field},
  Cambridge University Press, Cambridge, 2004. \MR{2097788}

\bibitem{Thaller1992}
Bernd Thaller, \emph{The dirac equation}, Springer Berlin Heidelberg, 1992.

\bibitem{Tha05}
\bysame, \emph{Advanced visual quantum mechanics}, Springer New York, 2005.

\bibitem{Zhi60}
Grigorii~M. Zhislin, \emph{Discussion of the spectrum of schr{\"o}dinger
  operators for systems of many particles}, Trudy Moskovskogo matematiceskogo
  obscestva \textbf{9} (1960), 81--120.

\end{thebibliography}
\bibliographystyle{amsplain}

\end{document}